\declaretheorem[style=acmplain,numberwithin=section]{theorem}
\declaretheorem[style=acmplain,sibling=theorem]{corollary}
\declaretheorem[style=acmplain,sibling=theorem]{proposition}
\declaretheorem[style=acmplain,sibling=theorem]{problem}
\declaretheorem[style=acmdefinition,sibling=theorem]{definition}
\declaretheorem[style=acmdefinition,sibling=theorem]{example}
\declaretheorem[style=acmdefinition,sibling=theorem]{remark}
\DeclarePairedDelimiter{\abs}{\lvert}{\rvert}
\newcommand{\set}[1]{\{#1\}}
\newcommand{\bigabs}[1]{\bigl\lvert#1\bigr\rvert}
\newcommand{\ceil}[1]{\lceil#1\rceil}
\newcommand{\floor}[1]{\lfloor#1\rfloor}
\newcommand{\Denot}[2]{\left\langle{#2}\right\rangle_{#1}}
\DeclareMathOperator*{\argmin}{arg\,min}
\newcommand{\asdef}{\eqqcolon}
\newcommand{\defas}{\coloneqq}
\newcommand{\bb}{\mathbf{b}}
\newcommand{\bp}{\mathbf{p}}
\newcommand{\bq}{\mathbf{q}}
\newcommand{\bM}{\mathbf{M}}
\newcommand{\bP}{\mathbf{P}}
\newcommand{\bW}{\mathbf{W}}
\newcommand{\Bools}{\mathbb{B}}
\newcommand{\Integers}{\mathbb{Z}}
\newcommand{\Naturals}{\mathbb{N}}
\newcommand{\Indicator}{\mathbf{1}}
\newcommand{\Assignments}{\mathcal{M}}
\newcommand{\NumSys}[1]{\Bools_{#1}}
\newcommand{\nbase}{Z}
\newcommand{\Ms}{M}
\newcommand{\numbase}[2]{#1_{(#2)}}
\newcommand{\pfrac}[2]{\left( \frac{#1}{#2} \right)}
\newcommand{\Diff}[3]{\Delta^{#1}\left[#2\to#3\right]}
\newcommand{\Difff}[4]{\Delta^{#1}\left[#2\to#3;#4\right]}
\newcommand{\enc}{\mathit{enc}}
\newcommand{\lbl}{\mathit{label}}
\newcommand{\loc}{\mathit{loc}}
\newcommand{\lft}{\mathit{left}}
\newcommand{\rgt}{\mathit{right}}
\newcommand{\node}{\mathit{node}}
\newcommand{\idx}{i}
\newcommand{\idxl}{2i+1}
\newcommand{\idxr}{2i+2}
\newcommand{\offset}{\mathit{offset}}
\newcommand{\off}{w}
\newcommand{\Nil}{\mathsf{Nil}}
\newcommand{\flip}{\mathsf{flip}}
\newcommand{\level}{\mathit{level}}
\newcommand{\ancestor}{A}
\newcommand{\rot}{\mathit{root}}
\newcommand{\distrib}[1]{\textsf{#1}}
\newcommand{\benford}{\distrib{Benford}}
\newcommand{\bernoulli}{\distrib{Bernoulli}}
\newcommand{\uniform}{\distrib{Uniform}}
\newcommand{\bbeta}{\distrib{Beta}}
\newcommand{\betabinomial}{\distrib{Beta Binomial}}
\newcommand{\binomial}{\distrib{Binomial}}
\newcommand{\boltzmann}{\distrib{Boltzmann}}
\newcommand{\gaussian}{\distrib{Discrete Gaussian}}
\newcommand{\hypergeom}{\distrib{Hypergeometric}}
\begin{document}

\title{Optimal Approximate Sampling from Discrete Probability Distributions}



\author{Feras A.~Saad}
\affiliation{
  \position{Graduate Student}
  \department{Department of Electrical Engineering \& Computer Science}
  \institution{Massachusetts Institute of Technology}
  \city{Cambridge}
  \state{MA}
  \postcode{02139}
  \country{USA}
}
\email{fsaad@mit.edu}

\author{Cameron E.~Freer}
\affiliation{
  \position{Research Scientist}
  \department{Department of Brain \& Cognitive Sciences}
  \institution{Massachusetts Institute of Technology}
  \city{Cambridge}
  \state{MA}
  \postcode{02139}
  \country{USA}
}
\email{freer@mit.edu}

\author{Martin C.~Rinard}
\affiliation{
  \position{Professor}
  \department{Department of Electrical Engineering \& Computer Science}
  \institution{Massachusetts Institute of Technology}
  \city{Cambridge}
  \state{MA}
  \postcode{02139}
  \country{USA}
}
\email{rinard@csail.mit.edu}

\author{Vikash K.~Mansinghka}
\affiliation{
  \position{Principal Research Scientist}
  \department{Department of Brain \& Cognitive Sciences}
  \institution{Massachusetts Institute of Technology}
  \city{Cambridge}
  \state{MA}
  \postcode{02139}
  \country{USA}
}
\email{vkm@mit.edu}



\begin{abstract}
This paper addresses a fundamental problem in random variate
  generation: given access to a random source that emits a stream of
  independent fair bits, what is the most accurate and
  entropy-efficient algorithm for sampling from a discrete probability
  distribution $(p_1, \dots, p_n)$, where the probabilities of the
  output distribution $(\hat{p}_1, \dots, \hat{p}_n)$ of the sampling
  algorithm must be specified using at most $k$ bits of precision?
We present a theoretical framework for formulating this problem and
  provide new techniques for finding sampling algorithms that are
  optimal both statistically (in the sense of sampling accuracy) and
  information-theoretically (in the sense of entropy consumption).
We leverage these results to build a system that, for a broad family
  of measures of statistical accuracy, delivers a sampling algorithm
  whose expected entropy usage is minimal among those that induce the
  same distribution (i.e., is ``entropy-optimal'') and whose output
  distribution $(\hat{p}_1, \dots, \hat{p}_n)$ is a closest
  approximation to the target distribution $(p_1, \dots, p_n)$ among
  all entropy-optimal sampling algorithms that operate within the
  specified $k$-bit precision.
This optimal approximate sampler is also a closer approximation than any
  (possibly entropy-suboptimal) sampler that consumes a
  bounded amount of entropy with the specified precision, a class
  which includes floating-point implementations of inversion sampling
  and related methods found in many software libraries.
We evaluate the accuracy, entropy consumption, precision requirements,
  and wall-clock runtime of our optimal approximate sampling algorithms
  on a broad set of distributions, demonstrating the ways that
  they are superior to existing approximate samplers and
  establishing that they often consume significantly fewer resources
  than are needed by exact samplers.
\end{abstract}

\begin{CCSXML}
<ccs2012>

<concept>
  <concept_id>10003752.10003753.10003757</concept_id>
  <concept_desc>Theory of computation~Probabilistic computation</concept_desc>
  <concept_significance>300</concept_significance>
</concept>

<concept>
  <concept_id>10003752.10003809.10003636.10003815</concept_id>
  <concept_desc>Theory of computation~Numeric approximation algorithms</concept_desc>
  <concept_significance>300</concept_significance>
</concept>

<concept>
  <concept_id>10002950.10003648</concept_id>
  <concept_desc>Mathematics of computing~Probability and statistics</concept_desc>
  <concept_significance>300</concept_significance>
</concept>

<concept>
  <concept_id>10002950.10003648.10003670.10003687</concept_id>
  <concept_desc>Mathematics of computing~Random number generation</concept_desc>
  <concept_significance>300</concept_significance>
</concept>

<concept>
  <concept_id>10002950.10003705.10011686</concept_id>
  <concept_desc>Mathematics of computing~Mathematical software performance</concept_desc>
  <concept_significance>300</concept_significance>
</concept>

<concept>
  <concept_id>10002950.10003624.10003625.10003630</concept_id>
  <concept_desc>Mathematics of computing~Combinatorial optimization</concept_desc>
  <concept_significance>300</concept_significance>
</concept>

<concept>
  <concept_id>10002950.10003714.10003715.10003750</concept_id>
  <concept_desc>Mathematics of computing~Discretization</concept_desc>
  <concept_significance>300</concept_significance>
</concept>

</ccs2012>
\end{CCSXML}

\ccsdesc[300]{Theory of computation~Probabilistic computation}
\ccsdesc[300]{Theory of computation~Numeric approximation algorithms}
\ccsdesc[300]{Mathematics of computing~Probability and statistics}
\ccsdesc[300]{Mathematics of computing~Random number generation}
\ccsdesc[300]{Mathematics of computing~Mathematical software performance}
\ccsdesc[300]{Mathematics of computing~Combinatorial optimization}
\ccsdesc[300]{Mathematics of computing~Discretization}

\keywords{random variate generation, discrete random variables}

\maketitle


\section{Introduction}
\label{sec:introduction}

Sampling from discrete probability distributions is a fundamental activity in
  fields such as
  statistics~\citep{devroye1986},
  operations research~\citep{harling1958simulation},
  statistical physics~\citep{binder1986},
  financial engineering~\citep{glasserman2003},
  and general scientific computing~\citep{liu2008}.
Recognizing the importance of sampling from discrete
  probability distributions, widely-used language
  platforms~\citep{lea1992,matlab1993,rteam2014} typically implement
  algorithms for sampling from discrete distributions.
As Monte Carlo methods move towards sampling billions of random
  variates per second~\citep{djuric2019}, there is an increasing need
  for sampling algorithms that are both efficient (in terms of the
  number of random bits they consume to generate a sample) and
  accurate (in terms of the statistical sampling error of the
  generated random variates with respect to the intended probability
  distribution).
For example, in fields such as lattice-based
  cryptography and probabilistic
  hardware~\citep{schryver2012,roy2013,dwarakanath2014,follath2014,mansinghka2014,du2015},
  the number of random bits consumed per sample, the size of the
  registers that store and manipulate the probability values, and the
  sampling error due to approximate representations of numbers are all
  fundamental design considerations.

We evaluate sampling algorithms for discrete probability distributions
  according to three criteria:
  \begin{enumerate*}
  \item the entropy consumption of the sampling algorithm,
  as measured by the average number of random bits consumed from the
  source to produce a single sample
  (Definition~\ref{def:ddg-input-bits});

  \item the error of the sampling algorithm, which measures how closely
  the sampled probability distribution matches the specified
  distribution, using one of a family of statistical
  divergences (Definition~\ref{def:f-divergence}); and

  \item the precision required to implement the sampler,
  as measured by the minimum number of binary digits
  needed to represent each probability in the implemented
  distribution (Definition~\ref{def:ddg-precision}).
  \end{enumerate*}

Let $(M_1, \dots, M_n)$ be a list of $n$
  positive integers which sum to $\nbase$
  and write $\bp \defas (p_1, \dots, p_n)$ for the discrete
  probability distribution over the set
  $[n] \defas \set{1, \dots, n}$ defined by
  $p_i \defas M_i/\nbase$ $(i=1,\dots,n)$.
We distinguish between two types of algorithms for sampling
  from $\bp$:
  \begin{enumerate*}[label=(\roman*)]
    \item exact samplers, where the probability of returning $i$ is
      precisely equal to $p_i$ (i.e., zero sampling error); and

    \item approximate samplers, where the probability of returning $i$
      is $\hat{p}_i \approx p_i$ (i.e., non-zero sampling error).
  \end{enumerate*}
In exact sampling, the numerical precision needed to represent the
  output probabilities of the sampler varies with the values $p_i$ of
  the target distribution; we say these methods need
  $\textit{arbitrary precision}$.
In approximate sampling, on the other hand, the numerical precision
  needed to represent the output probabilities $\hat{p}_i$ of the
  sampler is fixed independently of the $p_i$ (by constraints such as
  the register width of a hardware circuit or arithmetic system
  implemented in software); we say these methods need
  \textit{limited precision}.
We next discuss the tradeoffs between
  entropy consumption, sampling error, and numerical precision
  made by exact and approximate samplers.

\subsection{Existing Methods for Exact and Approximate Sampling}
\label{subsec:introduction-existing-methods}

Inversion sampling is a universal method for obtaining a random sample
  from any probability distribution~\citep[Theorem~2.1]{devroye1986}.
The inversion method is based on the identity that if
  $U$ is a uniformly distributed real number on the unit interval $[0,1]$, then
  \begin{align}
  \textstyle\Pr\left[\sum_{i=1}^{j-1}p_i \le U < \sum_{i=1}^{j}p_i \right] = p_{j}
    && (j=1,\dots, n).
  \label{eq:dandruff}
  \end{align}
\citet{knuth1976} present a seminal theoretical framework for
  constructing an exact sampler for any discrete probability
  distribution.
The sampler consumes, in expectation, the least amount of random
  bits per sample among the class of all exact sampling algorithms
  (Theorem~\ref{thm:ddg-knuth-yao}).
The \citeauthor{knuth1976} sampler is an implementation of the inversion method
  which compares (lazily sampled) bits in the binary expansion of $U$
  to the bits in the binary expansion of the $p_i$.
Despite its minimal entropy consumption and zero sampling
  error, the method requires arbitrary precision and the computational
  resources needed to implement the sampler are often
  exponentially larger than the number of bits needed to encode the
  probabilities (Theorem~\ref{thm:precision-perfect-sampling}), even
  for typical distributions (Table~\ref{tab:binomial-exact-approx}).
In addition to potentially requiring more resources than are available
  even on modern machines, the
  framework is presented from a theoretical
  perspective without readily-programmable implementations of the sampler,
  which has further limited its general application.\footnote{\scriptsize
    In reference to the memory requirements and programmability of the
    \citet{knuth1976} method, the authors note
    \textit{``most of the algorithms which achieve these optimum bounds
    are very complex, requiring a tremendous amount of space''.}
    \citet{lumbroso2013} also discusses these issues.}

The rejection method \citep{devroye1986},
  shown in Algorithm~\ref{alg:rejection-sample}, is another technique for
  exact sampling where, unlike the \citeauthor{knuth1976} method, the
  required precision is polynomial in the number of bits needed to
  encode $\bp$.
Rejection sampling is exact, readily-programmable, and typically requires
  reasonable computational resources.
However, it is highly entropy-inefficient and can consume
  exponentially more random bits than is necessary to generate a
  sample (Example~\ref{example:rejection-exponential}).

We now discuss approximate sampling methods which use a limited
  amount of numerical precision that is specified independently of
  the target distribution $\bp$.
Several widely-used software systems such as the MATLAB Statistics
  Toolbox~\cite{matlab1993} and GNU C++ standard
  library~\cite{lea1992} implement the inversion method based
  directly on Eq.~\eqref{eq:dandruff}, where a
  floating-point number $U'$ is used to approximate the ideal real
  random variable $U$, as shown in Algorithm~\ref{alg:inversion-sample}.
These implementations have two fundamental deficiencies:
  \begin{enumerate*}
    \item[first,] the algorithm draws a fixed number of random bits
      (typically equal to the 32-bit or 64-bit word size of the machine)
      per sample to determine $U'$, which may result in high approximation error
      (Section~\ref{subsec:models-pitfalls}), is suboptimal in its use of entropy,
      and often incurs non-negligible computational overhead in practice;

    \item[second,] floating-point approximations for computing
      and comparing $U'$ to running sums of $p_i$ produce significantly suboptimal sampling
      errors (Figure~\ref{fig:baseline-error}) and the theoretical
      properties are challenging to
      characterize~\citep{neumann1951,devroye1982,monahan1985}.
  \end{enumerate*}
In particular, many of these approximate methods,
  unlike the method presented in this paper, are not
  straightforwardly described as producing samples from a distribution
  that is close to the target distribution with respect to a
  specified measure of statistical error and provide no optimality
  guarantees.


\begin{figure}[t]
\begin{minipage}{.495\textwidth}
\begin{algorithm}[H]
\caption{Rejection Sampling}
\label{alg:rejection-sample}
\begin{mdframed}
Given probabilities $(M_i/\nbase)_{i=1}^{n}$:
\begin{enumerate}[leftmargin=*]
\item Let $k$ be such that $2^{k-1}\, {<}\, \nbase\, {\le}\, 2^k$.
\item \label{item:gossipred}
Draw a $k$-bit integer $W\in\set{0,\dots,2^{k}-1}$.
\item If $W\,{<}\,\nbase$, return integer $j\,{\in}\,[n]$
  such that $\sum_{i=1}^{j-1}M_i\,{\le}\,W\,{<}\,\sum_{i=1}^{j}M_i$;
  else go to~\ref{item:gossipred}.
\end{enumerate}
\end{mdframed}
\end{algorithm}
\end{minipage}\hfill%
\begin{minipage}{.495\textwidth}
\begin{algorithm}[H]
\caption{Inversion Sampling}
\label{alg:inversion-sample}
\begin{mdframed}
Given probabilities $(M_i/\nbase)_{i=1}^{n}$, precision $k$:
\begin{enumerate}[leftmargin=*]
\item \label{item:undilute-1}
  Draw a $k$-bit integer $W\in\set{0,\dots,2^{k}-1}$.
\item \label{item:undilute-2}
  Let $U' \defas W/2^k$.
\item \label{item:undilute-3}
  Return smallest integer
  $j\,{\in}\,[n]$ such that $U' < \sum_{i=1}^{j}M_{i}/\nbase$.
\end{enumerate}
\end{mdframed}
\end{algorithm}
\end{minipage}
\end{figure}

The interval method~\citep{han1997} is an implementation of the
  inversion method which, unlike the previous methods,
  lazily obtains a sequence $U_i$ of fair coin flips
  from the set $\set{0,1}$ and recursively partitions the unit interval until the
  outcome $j \in [n]$ can be determined.
\citet{han1997} present an exact sampling algorithm (using arbitrary
  precision) and \citet{uyematsu2003} present an approximate sampling
  algorithm (using limited precision).
Although entropy consumed by the interval method is close to the
  optimal limits of~\citet{knuth1976}, the exact
  sampler uses several floating-point computations and
  has an expensive search loop during sampling \citep[Algorithm~1]{devroye2015}.
The limited-precision sampler is more entropy-efficient
  than the limited-precision inversion sampler
  (Table~\ref{tab:baseline-entropy}) but often incurs a higher
  error (Figure~\ref{fig:baseline-error}).

\subsection{Optimal Approximate Sampling}
\label{subsec:introduction-proposed-method}

This paper presents a novel class of algorithms for optimal approximate
  sampling from discrete probability distributions.
Given a target distribution $\bp \defas (p_1, \dots, p_n)$, any
  measure of statistical error in the family of (1-1 transformations of)
  $f$-divergences (Definition~\ref{def:f-divergence}), and a number $k$
  specifying the allowed numerical precision, our system returns a
  sampler for $\bp$ that is optimal in a very strong sense: it
  produces random variates with the minimal sampling error
  possible given the specified precision, among the class of all entropy-optimal
  samplers of this precision
  (Theorems~\ref{thm:k-bit-bases} and~\ref{thm:optimization}).
Moreover these samplers comprise, to the best of our knowledge, the
  first algorithms that, for any target distribution,
  measure of statistical accuracy, and specification of bit precision,
  provide rigorous guarantees on the
  entropy-optimality and the minimality of the sampling error.
%

The key idea is to first find a distribution $\hat\bp \defas
  (\hat{p}_1, \dots, \hat{p}_n)$ whose approximation error of $\bp$ is
  minimal among the class of all distributions that can
  be sampled by any $k$-bit entropy-optimal sampler (Section~\ref{sec:optimal}).
The second step is to explicitly construct
  an entropy-optimal sampler for the distribution $\hat\bp$ (Section~\ref{sec:ddg}).
In comparison with previous limited-precision samplers,
  our samplers are more entropy-efficient and more
  accurate than any sampler that always consumes at most $k$ random
  bits (Proposition~\ref{prop:opt-finite-entropy}), which includes
  any algorithm that uses a finite number of approximately uniform
  floating-point numbers (e.g., limited-precision inversion sampling
  and interval sampling).
The time, space, and entropy resources required by our samplers can be
  significantly less than those required by the exact \citeauthor{knuth1976}
  and rejection methods
  (Section~\ref{subsec:results-exact}), with an approximation error
  that decreases exponentially quickly with the amount of precision
  (Theorem~\ref{thm:opt-error-tv}).

The sampling algorithms delivered by our system are algorithmically
  efficient: they use integer arithmetic, admit straightforward
  implementations in software and probabilistic hardware systems,
  run in constant time with respect to the length $n$ of the target
  distribution and linearly in the entropy of the sampler, and can
  generate billions of random variates per second.
In addition, we present scalable algorithms where, for a
  precision specification of $k$ bits, the runtime of finding the $n$ optimal
  approximate probabilities $\hat\bp$ is order $n\log{n}$, and of
  building the corresponding sampler is order $nk$.
Prototype implementations of the system in C and Python are available
  in the online artifact and at
  \url{https://github.com/probcomp/optimal-approximate-sampling}.

%

\subsection{Contributions}

The main contributions of this paper are:

\noindent \textbf{Formulation of optimal approximate sampling algorithms
  for discrete distributions.}
  This precise formulation allow us to rigorously study the notion of
  entropy consumption, statistical sampling error, and
  numerical precision.
  These three functional metrics are used to assess the entropy-efficiency,
  accuracy, and memory requirements of a sampling algorithm.

\noindent \textbf{Theoretical results for the class of entropy-optimal sampling algorithms.}
  For a specified precision, we characterize the set of
  output probability distributions achievable by any entropy-optimal
  sampler that operates within the given precision specification.
  We leverage these results to constrain the space of probability
  distributions for approximating a given target distribution
  to contain only those that correspond to
  limited-precision entropy-optimal samplers.

\noindent \textbf{Algorithms for finding optimal approximations to discrete distributions.}
  We present a new optimization algorithm that, given a target distribution
  $\bp$, a measure of statistical divergence, and a precision specification,
  efficiently searches the combinatorially large space of
  entropy-optimal samplers of the given precision,
  to find a optimal approximation sampler that
  most accurately approximates the target distribution $\bp$.
  We prove the correctness of the algorithm and analyze its
  runtime in terms of the size of the target distribution and
  precision specification.

\noindent \textbf{Algorithms for constructing entropy-optimal sampling algorithms.}
  We present detailed algorithms for sampling
  from any closest-approximation probability distribution $\hat\bp$ in a way
  that is entropy-optimal, using the guarantees provided by the main
  theorems of~\citet{knuth1976}.
  Our prototype implementation can generate billions of random variates
  per second and executes between $1.5$x (for low-dimensional
  distributions) and $195$x (for high-dimensional distributions)
  faster than the limited-precision linear inversion sampler provided
  as part of the GNU C++ standard library~\cite{lea1992}.

\noindent \textbf{Comparisons to baseline limited-precision sampling algorithms.}
  For several common probability distributions, we empirically
  demonstrate that the proposed sampling algorithms consume less
  entropy and are up to 1000x---10000x more accurate than the
  limited-precision inversion sampler from the GNU C++ standard
  library~\citep{lea1992} and interval algorithm~\citep{uyematsu2003}.
  We also show that
    \begin{enumerate*}[label=(\roman*)]
    \item our sampler scales more efficiently as the size
    of the target distribution grows; and

    \item using the information-theoretically minimal amount of
    bits per sample leads to up to 10x less wall-clock time spent calling the
    underlying pseudorandom number generator.
    \end{enumerate*}

\noindent \textbf{Comparisons to baseline exact sampling algorithms.}
  We present a detailed study of the exact
  \citeauthor{knuth1976} method, the rejection method,
  and the proposed method for a
  canonical discrete probability distribution.
  We demonstrate that our
  samplers can use 150x less random bits per sample than
  rejection sampling and many orders of magnitude less precision than
  exact \citeauthor{knuth1976} sampling, and can (unlike exact
  sampling algorithms) trade off greater numerical
  precision in exchange for exponentially smaller sampling accuracy,
  all while remaining entropy-optimal.

The remainder of this paper is structured as follows:
Section~\ref{sec:models} describes the random bit model of
  computation for sampling algorithms and provides formal definitions used
  throughout the paper.
Section~\ref{sec:limited-precision} presents theoretical results
  on the class of entropy-optimal samplers which are leveraged
  in future sections.
Section~\ref{sec:optimal} presents an efficient algorithm for
  finding a closest-approximation distribution to any given target
  distribution.
Section~\ref{sec:ddg} presents algorithms for constructing
  entropy-optimal samplers.
Section~\ref{sec:results} investigates the properties of the optimal
  samplers and compares them to multiple existing sampling methods in
  terms of accuracy, precision, entropy, and runtime.


\section{Computational models of sampling algorithms}
\label{sec:models}

In the \textit{algebraic model} of computation over the real
  numbers (also known as the real RAM model~\citep{blum1998}),
  a sampling algorithm has access to an ideal register machine
  that can
  \begin{enumerate*}[label=(\roman*)]
  \item sample a real random variable $U$ uniformly distributed on the
    unit interval $[0,1]$ using a primitive called
    $\textsf{uniform}()$, which forms the basic unit of randomness;
    and
  \item store and perform algebraic operations on infinitely-precise
    real numbers in unit time
  \end{enumerate*}
  \citep[Assumptions 1, 2, and 3]{devroye1986}.
The algebraic model is useful for proving the
  correctness of exact mathematical transformations applied to a
  uniform random variate $U$ and for analyzing the algorithmic runtime and
  storage costs of preprocessing and sampling, assuming access to infinite
  amounts of entropy or
  precision~\citep{walker1977general,vose1991linear,smith2002,bringmann2017}.

However, sampling algorithms that access an infinite amount of entropy and
  compute with infinite precision real arithmetic cannot be
  implemented on physical machines.
In practice, these algorithms are implemented on
  machines which use a finite amount of entropy and
  compute with approximate real arithmetic
  (e.g., double-precision floating point).
As a result, sampling algorithms typically have a non-zero sampling
  error, which is challenging to systematically assess in
  practice~\citep{devroye1982}.\footnote{\citet{neumann1951} objected
  that \textit{``the amount of theoretical information about the
  statistical properties of the round-off mechanism is nil''} and,
  more humorously, that \textit{``anyone who considers arithmetic
  methods of producing random digits is, of course, in a state of
  sin.''}}
While the quality of sampling algorithms implemented in practice is
  often characterized using ad-hoc statistical goodness-of-fit tests
  on a large number of
  simulations~\citep{walker1974discrete,leydold2014}, these empirical
  metrics fail to give rigorous statistical guarantees about the
  accuracy and/or theoretical optimality of the
  algorithm~\citep{monahan1985}.
In this paper, we consider an alternative computational model that is
  more appropriate in applications where limited numerical precision,
  sampling error, or entropy consumption are of interest.

\subsection{The Random Bit Model}
\label{subsec:models-random-bit}

In the \textit{random bit model}, introduced by~\citet{neumann1951},
  the basic unit of randomness is a random symbol in
  the set $\set{0,1, \dots, b-1}$ for some integer $b \ge 2$,
  obtained using a primitive called $\textsf{flip}()$.
Since the random symbols are produced lazily by the source and the output
  of the sampling algorithm is a deterministic function of the discrete
  symbols, this model is suitable for analyzing entropy consumption and
  sampling error.
In this paper, we consider the random bit model of computation where
  any sampling algorithm for a target distribution $\bp$ over $[n]$
  operates under the following assumptions:

\begin{enumerate}[label=A\arabic*., ref=A\arabic*]
  \item \label{item:assumptions-fair}
    each invocation of $\textsf{flip}()$ returns a single fair
    (unbiased) binary digit in $\set{0,1}$ (i.e., $b=2$);

  \item \label{item:assumptions-independent-in}
    the bits returned by separate invocations of $\textsf{flip}()$ are
    all mutually independent;

  \item \label{item:assumptions-independent-out}
    the output of the sampling algorithm is a single outcome in $[n]$,
    which is independent of all previous outputs of the algorithm; and

  \item \label{item:assumptions-fixed-precision}
    the output probabilities of the sampling algorithm can be specified
    using at most $k$ binary digits, where the numerical precision parameter $k$ is
    specified independently of the target distribution $\bp$.
\end{enumerate}

Several limited-precision algorithms for sampling from discrete probability
  distributions in the literature operate under assumptions similar
  to \ref{item:assumptions-fair}--\ref{item:assumptions-fixed-precision};
  examples include samplers for the
  uniform~\citep{lumbroso2013},
  discrete Gaussian~\citep{follath2014},
  geometric~\citep{bringmann2013}, random graph~\citep{blanca2012},
  and general discrete~\citep{uyematsu2003} distributions.
Since these sampling algorithms use limited numerical precision that is
  specified independently of the target
  distribution (\ref{item:assumptions-fixed-precision}),
  they typically have some statistical sampling error.

We also note that several variants of the random bit model for random
  variate generation, which operate under different assumptions
  than~\ref{item:assumptions-fair}--\ref{item:assumptions-fixed-precision},
  have been thoroughly investigated in the literature.
These variants include
  using a random source which provides flips of a biased $b$-sided
    coin (where the bias may be known or unknown);
  using a random source which provides non-i.i.d.\ symbols;
  sampling algorithms which return a random number of non-independent
    output symbols in each invocation; and/or
  sampling algorithms which use arithmetic operations whose numerical
    precision depends on the probabilities in the target
    distribution~\citep{neumann1951,elias1972,stout1984,blum1986,roche1991,peres1992,han1993,vembu1995,abrahams1996,pae2006,cicalese2006,kozen2014,kozen2018}.
For example, \citet{pae2006} solve the very general problem of
  optimally simulating an arbitrary target distribution using $k$
  independent flips of a $b$-sided coin with unknown bias, where
  optimality is defined in the sense of the asymptotic ratio of output
  bits per input symbol.
\citet{kozen2018} provide a unifying coalgebraic framework for
  implementing and composing entropy-preserving reductions between
  arbitrary input sources to output distributions, describe several
  concrete algorithms for reductions between random processes, and
  present bounds on the trade-off between the latency and asymptotic
  entropy-efficiency of these protocols.
%

The assumptions
  \ref{item:assumptions-fair}--\ref{item:assumptions-fixed-precision}
  that we make in this paper are designed to explore a new set
  of trade-offs compared to those explored in previous works.
More specifically, the current paper trades off accuracy with
  numerical precision in the non-asymptotic setting, while maintaining
  entropy-optimality of the output distribution, whereas the works of
  \citet{pae2006} and \citet{kozen2018}, for example, trade off
  asymptotic entropy-efficiency with numerical precision, while maintaining perfect
  accuracy.
The trade-offs we consider are motivated by the standard practice in
  numerical sampling
  libraries~\citep{lea1992,matlab1993,rteam2014,galassi2019}, which
  \begin{enumerate*}[label=(\roman*)]
    \item use an entropy source that provides independent fair bits
    (modulo the fact that they may use pseudorandom number generators);
    \item implement samplers that guarantee exactly one output symbol
    per invocation;
    \item implement samplers that have non-zero output error; and
    \item use arithmetic systems with a fixed amount of precision
    (using e.g., 32-bit or 64-bit floating point).
\end{enumerate*}
For the trade-offs considered in this paper, we present results that
  conclusively solve the problem of finding entropy-optimal
  sampling algorithms operating within any precision specification
  that yield closest-approximation distributions among the class of all
  entropy-optimal samplers that also operate within the given precision.
The next section formalizes these concepts.

\subsection{Preliminaries}
\label{subsec:models-preliminaries}


\begin{definition}[Sampling algorithm]
  \label{def:sampling-algorithm}
  Let $n \ge 1$ be an integer.
  A sampling algorithm, or sampler,
  $A: \biguplus_{k=1}^{\infty}\set{0,1}^{k}
    \to \set{1, \dots, n, \bot}$
  is a map that sends each finite tuple of bits to either an
    outcome in $[n]$ or a special symbol $\bot$ that indicates more bits are
    needed to determine the final outcome.
\end{definition}

\begin{remark}
In Assumption~\ref{item:assumptions-fair} and
  Definition~\ref{def:sampling-algorithm}, the assumption that the
  source outputs binary digits in $\set{0,1}$ (i.e., $b=2$) is made without
  loss of generality.
All the definitions and results in this paper
  generalize directly to the case of a source that outputs fair flips
  of any $b$-sided coin.
\end{remark}

\citet{knuth1976} present a computational framework for expressing the
  set of all sampling algorithms for discrete probability
  distribution in the random bit model.
Any sampling algorithm $A$ that draws random bits and returns an
  integer outcome $i$ with probability $p_i$ $(i=1,\dots,n)$
  is equivalent to some (possibly infinite) binary tree $T$.
Each internal node of $T$ has exactly 2 children and each leaf node is
  labeled with an outcome in $[n]$.
The sampling algorithm starts at the root of $T$.
It then draws a random bit $b$ from the source and takes the left
  branch if $b = 0$ or the right branch if $b = 1$.
If the child node is a leaf node, the label assigned to that leaf
  is returned and the computation halts.
Otherwise, the child node is an internal node, so a new random bit is
  drawn from the source and the process repeats.
The next definition presents a state machine model
  that formally describes the
  behavior of any sampling algorithm in terms of such a
  computation tree.

\begin{definition}[Discrete distribution generating tree]
  \label{def:ddg-tree}
  Let $A$ be a sampling algorithm.
  The computational behavior of $A$ is described by a state machine
    $T = (S, r, n, c, \delta)$, called the
    discrete distribution generating (DDG) tree of $A$, where
    \begin{itemize}
    \item $S \subseteq \Naturals$ is a set of states (nodes);
    \item $r \in S$ is a designated start node;
    \item $n \ge 1$ is an integer indicating the number of outcomes of the sampler;
    \item $c: S \to \set{1,\dots,n}\,\cup\,\set{\mathsf{branch}}$
      is a function that labels each node as either a branch node or a
      terminal (leaf) node assigned to an outcome in $[n]$; and
    \item $\delta: S \times \set{0,1} \to S$
      is a transition function that maps a node and a
      random bit to a new node.
    \end{itemize}

Let $\bb_k \defas (b_1, \dots, b_k) \in \set{0,1}^k$
  be a tuple of $k \ge 0$ bits,
  $i \in S$ a state, and $j \in \Naturals$.
The operational semantics of $T$
  for a configuration $\langle i, j, \bb_k \rangle$
  of the state machine are defined by the following rules
  \begin{align}
    \infer
      {\Denot{T}{i, j, \bb_k} \to \Denot{T}{\delta(i, b_{j+1}), {j+1}, \bb_k}}
      {0 \le j < k;\; c(i) = \mathsf{branch}}
    &&
    \infer{\Denot{T}{i, j, \bb_k} \to \bot}
      {k\le j; \; c(i) = \mathsf{branch}}
    &&
    \infer {\Denot{T}{i, j, \bb_k} \to c(i)}
      {0 \le j \le k;\; c(i) \in [n]}
    \label{eq:Aulostoma}
    \end{align}
In Eq.~\eqref{eq:Aulostoma}, the arrow $\to$ defines a transition
  relation from the current configuration (i.e., state $i$, consumed bits
  $j$, and input bits $\bb_k$) to either a new configuration (first
  rule) or to a terminal outcome in $\set{1,\dots,n,\bot}$ (second and third
  rules).
The output of $A$ on input $\bb_k$ is given by
  $A(\bb_k) \defas \Denot{T}{r, 0, \bb_k}$.
\end{definition}

\begin{definition}[Output distribution]
\label{def:ddg-output-distribution}
Let $T$ be the DDG tree of a sampler $A$,
  $\Indicator[\cdot]$ the indicator function,
  and $\bb_k \sim \uniform\left(\set{0,1}^k\right)$
  a random draw of $k\ge 0$ fair independent bits.
Then
  \begin{align}
  \Pr[A(\bb_k) = i] &=
    \frac{1}{2^k}\sum_{\bb' \in \set{0,1}^k}
      \Indicator[\Denot{T}{(r,0,\bb')} = i]
      && (i=1,\dots,n).
      \label{eq:disceptator}
    \shortintertext{The overall probability of returning $i$,
    over an infinite length random stream $\bb_\infty$ from the source, is}
    p_i &\defas \Pr[A(\bb_\infty) = i]
      = \lim_{k\to\infty}\Pr[A(\bb_k) = i]
      &&(i=1,\dots,n).
      \label{eq:adipate}
  \end{align}
For each $k$ we have $\Pr[A(\bb_k) = \bot]
  = 1 - \sum_{i=1}^{n}\Pr[A(\bb_k) = i]$.
The list of outcome probabilities
  $(p_1, \dots, p_n)$ defined in Eq.~\eqref{eq:adipate}
  is the called the output distribution
  of $T$,
and we say that $T$ is well-formed whenever
  these probabilities sum to one
  (equivalently, whenever $A$ halts with probability one, so that
  $\Pr[A(\bb_{\infty}) = \bot] = 0$).
\end{definition}

\begin{definition}[Number of consumed bits]
\label{def:ddg-input-bits}
For each $k \ge 0$, let
  $\bb_k \sim \uniform\left(\set{0,1}^k\right)$
  be a random draw of $k$ bits from the source.
The number of bits consumed by $A$
  is a random variable defined by
  \begin{align}
  N_{k}(A, \bb_k) \defas
    \min(k, \min\limits_{1 \le j \le k}\set{j \mid A(b_1, \dots, b_j) \in [n]})
    && (k=0,1,\dots).
  \label{eq:bemoisten}
  \end{align}
  (where $\min(\varnothing) \defas \infty$),
  which is precisely the (random) number of steps executed in
  the evaluation rules \eqref{eq:Aulostoma} on the (random)
  input $\bb_k$.
Furthermore, we define
  $N(A) \defas
    \lim_{k\to\infty} N_{k}(A, \bb_k)$
  to be the limiting number of bits per sample,
  which exists (in the extended reals) whenever $T$ is well-formed.
\end{definition}

\begin{definition}[Entropy~\citep{shannon1948}]
Let $\bp$ be a probability distribution over $[n]$.
The Shannon entropy
  $H(\bp) \defas \sum_{i=1}^{n}p_i\log(1/p_i)$
  is a measure of the stochasticity of $\bp$
  (unless otherwise noted, all instances of $\log$ are base 2).
For each integer $n$,
  a deterministic distribution has minimal entropy ($H(\bp) = 0$)
  and the uniform distribution has maximal entropy ($H(\bp) = \log(n)$).
\end{definition}

\begin{definition} [Entropy-optimal sampler]
A sampling algorithm $A$ (or DDG tree $T$)
  with output distribution $\bp$ is
  called entropy-optimal if
  the expected number of random bits consumed from the source is
  minimal among all samplers (or DDG trees)
  that yield the same output distribution $\bp$.
\end{definition}

\begin{definition}[Concise binary expansion]
\label{def:concise}
We say that a binary expansion of a rational number is concise
  if its repeating part is not of the form $\overline{1}$.
In other words, to be concise, the binary expansions of dyadic
  rationals must end in $\overline{0}$ rather than $\overline{1}$.
\end{definition}

\begin{theorem}[\citet{knuth1976}]
\label{thm:ddg-knuth-yao}
Let $\bp \defas (p_1, \dots, p_n)$ be a discrete probability distribution
  for some positive integer $n$.
Let $A$ be an entropy-optimal sampler
  whose output distribution is equal to $\bp$.
Then the number of bits $N(A)$ consumed by $A$ satisfies
  $H(\bp) \le \mathbb{E}[N(A)] < H(\bp) + 2$.
Further, the underlying DDG tree $T$ of $A$ contains exactly 1 leaf node
  labeled $i$ at level $j$ if and only if $p_{ij} = 1$,
  where $(0.p_{i1}p_{i2}\dots)_2$ denotes the concise
  binary expansion of each $p_i$.
\end{theorem}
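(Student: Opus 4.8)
The plan is to work directly with the leaf structure of the DDG tree. For a tree $T$, write $a_{ij}$ for the number of leaf nodes labeled $i$ at level $j$ (i.e.\ reached by a path of $j$ bits). A leaf at level $j$ is reached exactly when the first $j$ fair bits spell its address, hence with probability $2^{-j}$, so Definition~\ref{def:ddg-output-distribution} gives $p_i=\sum_{j\ge1}a_{ij}2^{-j}$; the number of consumed bits equals the level of the halting leaf, so $\mathbb{E}[N(A)]=\sum_{i,j}j\,a_{ij}2^{-j}$; and well-formedness is the Kraft equality $\sum_{i,j}a_{ij}2^{-j}=1$. A standard fact I would invoke is that a profile of nonnegative integers $(a_{ij})$ is realized by some full DDG tree precisely when this Kraft equality holds, with the labels and the shape of the tree irrelevant to realizability. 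Thus the search for an entropy-optimal sampler reduces to minimizing $\sum_{i,j}j\,a_{ij}2^{-j}$ over integer profiles subject to $\sum_j a_{ij}2^{-j}=p_i$ for each $i$.

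For the lower bound I would note that the leaves of $T$ define a finer distribution $\bq$ in which each leaf carries dyadic mass $2^{-j}$ at its level $j$; consequently $H(\bq)=\sum_{\text{leaves}}2^{-j}\log(2^{j})=\sum_{i,j}j\,a_{ij}2^{-j}=\mathbb{E}[N(A)]$ exactly. Since $\bp$ arises from $\bq$ by merging all leaves of a common label, and merging outcomes can only decrease Shannon entropy, we get $H(\bp)\le H(\bq)=\mathbb{E}[N(A)]$. This step uses nothing about optimality and so applies to every well-formed sampler.

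For the structural characterization I would minimize outcome-by-outcome, since the constraints decouple across $i$. Fixing $i$, any representation $p_i=\sum_j a_{ij}2^{-j}$ with some $a_{ij}\ge2$ is strictly improved by replacing two leaves at level $j$ with a single leaf at level $j-1$: this preserves $p_i$ and the Kraft sum but lowers the objective by $2^{-(j-1)}>0$. Iterating terminates (the total leaf count strictly decreases) and forces $a_{ij}\in\{0,1\}$, i.e.\ a genuine binary expansion of $p_i$; and among the at most two binary expansions of a dyadic $p_i$, the concise one (which truncates rather than trailing $\overline 1$) has the smaller objective. Hence every entropy-optimal tree satisfies $a_{ij}=p_{ij}$, which is the asserted ``if and only if.'' Conversely, the feasibility check $\sum_{i,j}p_{ij}2^{-j}=\sum_i p_i=1$ shows this profile is realizable, so such a tree exists.

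Finally I would bound the cost of this optimal tree. Writing $\nu(p_i)\defas\sum_j j\,p_{ij}2^{-j}$ for the per-outcome contribution, a short computation gives the identity $\nu(p_i)=p_i\log(1/p_i)+p_i\,h_i$, where $h_i$ is the Shannon entropy of the conditional halting-level distribution placing mass $2^{-j}/p_i$ on each $j$ with $p_{ij}=1$; summing over $i$ yields $\mathbb{E}[N(A)]-H(\bp)=\sum_i p_i h_i$, so it remains to show $h_i<2$. Letting $m$ be the position of the leading one, conciseness forces $p_i<2^{-m+1}$, so the largest conditional mass $w\defas 2^{-m}/p_i$ lies in $(1/2,1]$; bounding the mean level then gives $h_i\le 2w-\log w$. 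Since $w\mapsto 2w-\log w$ is strictly convex on $[1/2,1]$ with equal endpoint values $2$, it stays strictly below $2$ on $(1/2,1)$, while the remaining case $w=1$ is dyadic with $h_i=0$. Hence $h_i<2$ and $\mathbb{E}[N(A)]<H(\bp)+2$; combined with entropy-optimality (so $\mathbb{E}[N(A)]$ does not exceed the cost of this explicit tree) this closes the bound. I expect the two delicate points to be making the exchange-and-realizability argument airtight for infinite trees, and pinning down the constant $2$ through the convexity estimate for $h_i$.
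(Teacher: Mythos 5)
Your proposal addresses a statement that the paper itself never proves: Theorem~\ref{thm:ddg-knuth-yao} is imported from \citet{knuth1976} as a black box, so there is no in-paper proof to compare against. Judged on its own merits, your reconstruction is correct and is essentially the classical Knuth--Yao argument: reduce a DDG tree to its leaf profile $(a_{ij})$ via the Kraft equality, note that the objective $\sum_{i,j} j\,a_{ij}2^{-j}$ and the constraints $\sum_j a_{ij}2^{-j}=p_i$ decouple across outcomes, eliminate any $a_{ij}\ge 2$ by the two-leaves-to-one exchange (which preserves $p_i$ and the Kraft sum while lowering cost by $2^{-(j-1)}$), rule out trailing-$\overline{1}$ expansions by direct comparison, and bound the per-outcome cost $\nu(p_i)=\sum_j j\,p_{ij}2^{-j}$ by $p_i\log(1/p_i)+2p_i$. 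Your route to the constant $2$ is a clean packaging of Knuth and Yao's estimate: the identity $\nu(p_i)=p_i\log(1/p_i)+p_ih_i$ checks out, as does $h_i\le 2w-\log w$ (since the conditional mean offset is $w\sum_{s\in S}s2^{-s}\le 2w$), and the strict convexity of $2w-\log w$ on $[1/2,1]$ with both endpoint values equal to $2$ gives the strict bound on $(1/2,1)$, with $w=1$ handled separately as the dyadic point-mass case. Two small repairs are needed. First, for infinite trees the phrase ``iterating terminates (the total leaf count strictly decreases)'' is not meaningful; it is also unnecessary, since a single exchange already yields a strictly cheaper feasible profile, contradicting entropy-optimality (and a tree with infinite expected depth is immediately non-optimal by comparison with the explicit concise-expansion tree, whose cost is finite). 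Second, the realizability claim you invoke should be stated as the countable Kraft theorem---a nonnegative integer profile with $\sum_{i,j}a_{ij}2^{-j}= 1$ is realized by some full binary tree, with labels assignable arbitrarily---since your entire reduction from samplers to profiles rests on it; it is standard, but it is the one external fact your proof genuinely needs.
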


We next present three examples of target distributions and
  corresponding DDG trees that are both entropy-optimal, based on the
  construction from Theorem~\ref{thm:ddg-knuth-yao} and
  entropy-suboptimal.
%
%
By Theorem~\ref{thm:ddg-knuth-yao}, an entropy-optimal DDG tree for $\bp$
  can be constructed directly from a data structure called the
  binary probability matrix $\bP$, whose entry $\bP[i,j]$ corresponds
  to the $j$th bit in the concise binary expansion of $p_i$
  ($i = 1,\ldots, n; j \ge 0$).
In general, the matrix $\bP$ can contain infinitely many columns, but
  it can be finitely encoded when the probabilities of $\bp$ are
  rational numbers.

In the case where each $p_i$ is dyadic, as in Example~\ref{example:ddg-dyadic},
  we may instead work with the finite matrix $\bP$ that omits those columns
  corresponding to a final $\overline{0}$ in every row, i.e., whose
  width is the maximum number of non-zero binary digits to the right
  of ``$0.$'' in a concise binary expansion of $p_i$.

\begin{example}
\label{example:ddg-dyadic}
\begin{figure}[H]
\begin{subfigure}[b]{.3\linewidth}
  \setcounter{MaxMatrixCols}{20}
  \setlength\arraycolsep{1pt}
  \begin{equation*}
  \begin{bmatrix} p_1 \\ p_2 \\ p_3 \end{bmatrix}
  = \begin{bmatrix} 1/2 \\ 1/4 \\ 1/4 \end{bmatrix}
  = \begin{bmatrix}
    . & 1 & 0 \\
    . & 0 & 1 \\
    . & 0 & 1 \\
  \end{bmatrix}
  \end{equation*}
  \caption*{Binary probability matrix}
\end{subfigure}%
\begin{subfigure}[b]{.3\linewidth}
\centering
  \begin{tikzpicture}
  \centering
  \tikzset{level distance=10pt}
  \tikzset{every tree node/.style={anchor= north}}
  \Tree [
    [ 3 2 ] 1
  ]
  \end{tikzpicture}
  \caption*{Entropy-optimal DDG tree}
\end{subfigure}
\begin{subfigure}[b]{.33\linewidth}
\centering
  \begin{tikzpicture}
  \centering
  \tikzset{level distance=10pt}
  \tikzset{every tree node/.style={anchor= north}}
  \Tree [
    [ 1 [ 2 3 ] ] [ 1 [ 3 2 ] ]
  ]
  \end{tikzpicture}
  \caption*{Entropy-suboptimal DDG tree}
\end{subfigure}%
\end{figure}
Consider the distribution $\bp \defas (1/2, 1/4, 1/4)$ over $\set{1,2,3}$.
Since $p_1 = (0.10)_2$ and $p_2 = p_3 = (0.01)_2$ are all dyadic,
  the finite matrix $\bP$ has two columns and the entropy-optimal tree has three
  levels (the root is level zero).
Also shown above is an entropy-suboptimal tree for $\bp$.
\end{example}

Now consider the case where the values of $\bp$ are all rational but not all
  dyadic, as in Example~\ref{example:ddg-rat}.
Then the full binary probability matrix can be encoded using a
  probability ``pseudomatrix'' $\bP$, which has a finite number of
  columns that contain the digits in the finite prefix and the
  infinitely-repeating suffix of the concise binary expansions (a horizontal bar
  is placed atop the columns that contain the repeating suffix).
Similarly, the infinite-level DDG tree for $\bp$ can be finitely
  encoded by using back-edges in a ``pseudotree''.
Note that the DDG trees from Definition~\ref{def:ddg-tree} are
  technically pseudotrees of this form,
  where $\delta$ encodes back-edges that
  finitely encode infinite trees with repeating structure.
The terms ``trees'' and ``pseudotrees'' are used interchangeably
  throughout the paper.

\begin{example}
\label{example:ddg-rat}
\begin{figure}[H]
\centering
\begin{subfigure}[b]{.33\linewidth}
  \begin{equation*}
  \setcounter{MaxMatrixCols}{20}
  \setlength\arraycolsep{1pt}
  \begin{aligned}
  \begin{bmatrix} p_1 \\ p_2 \end{bmatrix}
  = \begin{bmatrix} 3/10 \\ 7/10 \end{bmatrix}
  = \begin{bmatrix}
    . & 0 & \overline{1\;0\;0\;1} \,\\
    . & 1 & \overline{0\;1\;1\;0} \,\\
  \end{bmatrix}
  \end{aligned}
  \end{equation*}
  \caption*{Binary probability matrix}
  \label{fig:ddg-rat-matrix}
\end{subfigure}%
\begin{subfigure}[b]{.32\linewidth}
\centering
  \begin{tikzpicture}
  \centering
  \tikzstyle{branch}=[shape=coordinate]
  \tikzstyle{leaf}=[circle,draw=red,fill=red,inner sep=0pt]
  \tikzset{level distance=10pt}
  \tikzset{every tree node/.style={anchor= north}}
  \Tree [
    [.\node[branch](b1){};
      [ [ [ \edge[color=red];\node[leaf](r1){}; 1 ] 2 ] 2 ] 1 ]
    2
  ]
  \draw[->,red] (r1.west) to[bend left=80] ([xshift=-.1cm]b1.west);
  \end{tikzpicture}
  \caption*{Entropy-optimal DDG tree}
  \label{fig:ddg-rat-opt}
\end{subfigure}
\begin{subfigure}[b]{.33\linewidth}
\centering
  \begin{tikzpicture}
  \centering
  \tikzstyle{branch}=[shape=coordinate]
  \tikzstyle{leaf}=[circle,draw=red,fill=red,inner sep=0pt]
  \tikzset{level distance=10pt}
  \tikzset{every tree node/.style={anchor= north}}
  \Tree
  [.\node[branch](b1){};
    [
      \edge[color=red];\node[leaf](r3){};
      [
        \edge[color=red];\node[leaf](r2){};
        [ \edge[color=red];\node[leaf](r1){}; 2 ]
      ]
    ]
    [
      [ 2 1 ]
      2
    ]
  ]
  \draw[->,red] (r1.west)
      to[out=180, in=-90]
      ($(r3)-(.5,0)$)
      to[out=90, in=140]
      ([xshift=-.1cm]b1.west);
  \draw[->,red] (r2.west)
      to[out=180, in=-90]
      ($(r3)-(.30,0)$)
      to[out=90, in=140]
      ([xshift=-.1cm]b1.west);
  \draw[->,red] (r3.south)
      to[out=160, in=140]
      ([xshift=-.1cm]b1.west);
  \end{tikzpicture}
  \caption*{Entropy-suboptimal DDG tree}
  \label{fig:ddg-rat-subopt}
\end{subfigure}%
\end{figure}
Consider the distribution $\bp \defas (3/10, 7/10)$ over $\set{1,2}$.
As $p_1$ and $p_2$ are non-dyadic rational numbers, their infinite binary
  expansions can be finitely encoded using a pseudotree.
The (shortest) entropy-optimal pseudotree shown above has five levels
  and a back-edge (red) from level four to level one.
This structure corresponds to the structure of $\bP$, which has five
  columns and a prefix length of one, as indicated by the horizontal
  bar above the last four columns of the matrix.
\end{example}

If any probability $p_i$ is irrational, as in Example~\ref{example:ddg-irrat},
  then its concise binary expansion will not repeat, and so we must work with the full binary
  probability matrix, which has infinitely many columns.
Any DDG tree for $\bp$ has infinitely many levels, and neither the
  matrix nor the tree can be finitely encoded.
Probability distributions whose samplers cannot be finitely encoded
  are not the focus of the sampling algorithms in this paper.

\begin{example}[\citet{knuth1976}]
\label{example:ddg-irrat}
\begin{figure}[H]
\centering
\begin{subfigure}[b]{.5\linewidth}
 \centering
  \begin{equation*}
  \setcounter{MaxMatrixCols}{20}
  \setlength\arraycolsep{1pt}
  \begin{bmatrix} p_1 \\ p_2 \\ p_3 \end{bmatrix}
  = \begin{bmatrix} 1/\pi \\ 1/e \\ 1-1/\pi- 1/e \end{bmatrix}
  = \begin{bmatrix}
    . & 0 & 1 & 0 & 1 & 0 & 0 & 0 & \dots\\
    . & 0 & 1 & 0 & 1 & 1 & 1 & 1 & \dots\\
    . & 0 & 1 & 0 & 1 & 0 & 0 & 0 & \dots\\
  \end{bmatrix}
  \end{equation*}
  \caption*{Binary probability matrix}
\end{subfigure}%
\begin{subfigure}[b]{.45\linewidth}
\centering
  \begin{tikzpicture}
  \centering
  \tikzset{level distance=10pt}
  \tikzset{every tree node/.style={anchor= north}}
  \Tree [
    [ [ [ [ $\dots$ 2 ] 3 ] [ 1 2 ] ] 3 ]
    [ 2 1 ]
  ]
  \end{tikzpicture}
  \caption*{Entropy-optimal DDG tree}
\end{subfigure}
\end{figure}

Consider the distribution $\bp \defas (1/\pi,\, 1/e,\, 1-1/\pi- 1/e)$
  over $\set{1,2,3}$.
The binary probability matrix has infinitely many columns and the
  corresponding DDG tree shown above has infinitely many levels, and neither can
  be finitely encoded.
\end{example}

\setcounter{figure}{0}

\subsection{Sampling Algorithms with Limited Computational Resources}
\label{subsec:models-bounded-resources}

The previous examples present three classes of sampling algorithms, which
  are mutually exclusive and collectively exhaustive:
Example~\ref{example:ddg-dyadic} shows a sampler that halts after
  consuming at most $k$ bits from the source and has a finite
  DDG tree;
Example~\ref{example:ddg-rat} shows a sampler that needs an unbounded
  number of bits from the source and has an infinite DDG tree that
  can be finitely encoded;
and Example~\ref{example:ddg-irrat} shows a sampler that needs an
  unbounded number of bits from the source and has an infinite
  DDG tree that \textit{cannot} be finitely encoded.
The algorithms presented in this paper do not consider
  target distributions and samplers that cannot be finitely encoded.

In practice, any sampler $A$ for a distribution $\bp$ of
  interest that is implemented in a finite-resource system
  must correspond to a DDG tree $T$ with a finite encoding.
As a result, the output probability of the sampler
  is typically an approximation to $\bp$.
This approximation arises from the fact that finite-resource machines
  do not have unbounded memory to store or even lazily construct DDG
  trees with an infinite number of levels---a necessary condition for
  perfectly sampling from an arbitrary target distribution---let alone
  construct entropy-optimal ones by computing the infinite binary
  expansion of each $p_i$.
Even for a target distribution whose probabilities are rational
  numbers, the size of the entropy-optimal DDG tree may be
  significantly larger than the available resources on the system
  (Theorem~\ref{thm:precision-perfect-sampling}).
Informally speaking, a ``limited-precision'' sampler $A$
  is able to represent each probability $p_i$ using no more than $k$
  binary digits.
The framework of DDG trees allows us to precisely characterize this
  notion in terms of the maximum depth of any leaf in the generating
  tree of $A$, which corresponds to the largest number of bits used
  to encode some $p_i$.

\begin{definition}[Precision of a sampling algorithm]
\label{def:ddg-precision}
Let $A$ be any sampler and
  $T \defas (S, r, n, c, \delta)$ its DDG tree.
We say that $A$ uses $k$ bits of precision
  (or that $A$ is a $k$-bit sampler)
  if $S$ is finite and the
  longest simple path through $\delta$ starting from the
  root $r$ to any leaf node $l$ has exactly $k$ edges.
\end{definition}

\begin{remark}
Suppose $A$ uses $k$ bits of precision.
If $\delta$ is cycle-free, as in Example~\ref{example:ddg-dyadic},
  then $A$ halts after consuming no more than $k$ bits from the source
  and has output probabilities that are dyadic rationals.
If $\delta$ contains a back-edge, as in Example~\ref{example:ddg-rat},
  then $A$ can consume an unbounded number of bits from the source
  and has output probabilities that are general rationals.
\end{remark}

Given a target distribution $\bp$, there may exist an exact sampling
  algorithm for $\bp$ using $k$ bits of precision which is
  entropy-suboptimal and for which the entropy-optimal exact sampler
  requires $k' > k$ bits of precision.
Example~\ref{example:ddg-rat} presents such an instance:
  the entropy-suboptimal DDG tree has depth $k = 4$
  whereas the entropy-optimal DDG tree has depth $k'=5$.
Entropy-suboptimal exact samplers typically require polynomial precision
  (in the number of bits used to encode $\bp$)
  but can be slow and wasteful of random bits
  (Example~\ref{example:rejection-exponential}), whereas
  entropy-optimal exact samplers are fast but can require precision
  that is exponentially large
  (Theorem~\ref{thm:precision-perfect-sampling}).
%
In light of these space--time trade-offs,
  this paper considers the problem of finding the ``most accurate''
  entropy-optimal sampler for a target distribution $\bp$
  when the precision specification
  is set to a fixed constant
  (recall from Section~\ref{sec:introduction} that fixing the
  precision independently of $\bp$ necessarily introduces sampling
  error).

\begin{problem}
\label{problem:f-divergence-opt-precision}
Given a target probability distribution $\bp \defas (p_1, \dots, p_n)$,
  a measure of statistical error $\Delta$,
  and a precision specification of $k \ge 1$ bits,
  construct a $k$-bit entropy-optimal sampler $\hat{T}$ whose
  output probabilities $\hat\bp$ achieve the smallest possible error
  $\Delta(\bp, \hat\bp)$.
\end{problem}

In the context of Problem~\ref{problem:f-divergence-opt-precision},
  we refer to $\hat\bp$
  as a \textit{closest approximation} to $\bp$, or as
  a \textit{closest-approximation} distribution to $\bp$, and say that $\hat{T}$ is
  an \textit{optimal approximate sampler} for $\bp$.

For any precision specification $k$, the $k$-bit entropy-optimal samplers
  that yield some closest approximation to a given target distribution
  are not necessarily closer to $\bp$
  than all $k$-bit entropy-suboptimal samplers.
The next proposition, however, shows they obtain the smallest
  error among the class of all samplers that always halt
  after consuming at most $k$ random bits from the source.

\begin{proposition}
\label{prop:opt-finite-entropy}
Given a target $\bp \defas (p_1, \dots, p_n)$,
  an error measure $\Delta$,
  and $k \ge 1$, suppose $\hat{T}$ is a $k$-bit entropy-optimal
  sampler whose output distribution is a $\Delta$-closest approximation
  to $\bp$.
Then $\hat\bp$ is closer to $\bp$ than the output distribution $\widetilde\bp$ of any
  sampler $\widetilde{T}$ that halts after consuming at most $k$ random bits
  from the source.
\end{proposition}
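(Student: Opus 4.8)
The plan is to realize $\widetilde\bp$ itself as the output distribution of some entropy-optimal sampler of precision at most $k$; since $\hat\bp$ is a $\Delta$-closest approximation to $\bp$ among exactly this class, the desired inequality $\Delta(\bp,\hat\bp)\le\Delta(\bp,\widetilde\bp)$ then follows immediately from optimality. The whole argument thus reduces to one set-inclusion claim: every output distribution attainable by a sampler that halts within $k$ bits is also the output distribution of an entropy-optimal sampler using at most $k$ bits of precision.

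First I would pin down the output distribution of $\widetilde{T}$. Because $\widetilde{T}$ halts after reading at most $k$ bits on every stream, its DDG tree is cycle-free with every leaf at level at most $k$ (Definition~\ref{def:ddg-precision}) and is well-formed, so $\sum_{i}\widetilde{p}_i = 1$. Evaluating Eq.~\eqref{eq:disceptator} at length $k$, each of the $2^k$ bitstrings of length $k$ determines a unique leaf, and a leaf labeled $i$ at level $j$ is reached by exactly $2^{k-j}$ of them; hence $\widetilde{p}_i = m_i/2^k$ for nonnegative integers $m_i$ with $\sum_{i} m_i = 2^k$. In particular every $\widetilde{p}_i$ is a dyadic rational representable with at most $k$ fractional binary digits. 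Next I would realize this dyadic $\widetilde\bp$ by an entropy-optimal sampler: applying the construction of Theorem~\ref{thm:ddg-knuth-yao} to $\widetilde\bp$, the entropy-optimal DDG tree $T^\star$ places a single leaf labeled $i$ at level $j$ precisely when the $j$th bit of the concise binary expansion of $\widetilde{p}_i$ equals $1$. Since $\widetilde{p}_i = m_i/2^k$ is dyadic, its concise expansion ends in $\overline{0}$ (Definition~\ref{def:concise}) and has no nonzero bit beyond position $k$; consequently $T^\star$ is finite, cycle-free, and has all leaves at level at most $k$, so it uses at most $k$ bits of precision while remaining entropy-optimal with output distribution $\widetilde\bp$. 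This establishes the inclusion claim, and therefore $\widetilde\bp$ lies in the feasible set over which $\hat\bp$ is a $\Delta$-closest approximation, giving $\Delta(\bp,\hat\bp)\le\Delta(\bp,\widetilde\bp)$.

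I expect the main delicacy to be the bookkeeping around the precision convention of Definition~\ref{def:ddg-precision}, which fixes the longest root-to-leaf path at exactly $k$ edges rather than at most $k$. The sampler $\widetilde{T}$ need not be entropy-optimal, and the entropy-optimal tree $T^\star$ realizing $\widetilde\bp$ may have strictly smaller depth; so I must confirm that the optimization in Problem~\ref{problem:f-divergence-opt-precision} (and the characterization of the achievable distributions in Theorem~\ref{thm:k-bit-bases}) ranges over entropy-optimal samplers of precision \emph{at most} $k$, equivalently that this class is closed downward in precision. This holds because the dyadic construction above keeps all leaves within depth $k$, so $T^\star$ is admissible. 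Once this convention is settled, the two verifications above---the dyadic characterization of halts-within-$k$ distributions and their entropy-optimal realizability via Theorem~\ref{thm:ddg-knuth-yao}---are routine, and the conclusion is exactly the optimality inequality.
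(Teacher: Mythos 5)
Your proof is correct and follows essentially the same route as the paper's: both arguments observe that any sampler halting within $k$ bits must have $k$-bit dyadic output probabilities, realize any such distribution as the output of an entropy-optimal, cycle-free DDG tree of depth at most $k$ via Theorem~\ref{thm:ddg-knuth-yao}, and then invoke the optimality of $\hat\bp$ over that class (the paper merely phrases this as a contradiction, while you argue directly). Your additional care about the ``exactly $k$ edges'' convention of Definition~\ref{def:ddg-precision} addresses a point the paper's proof silently glosses over (it asserts the entropy-optimal tree for $\widetilde\bp$ has depth $k$, when it may be shallower), and your downward-closure resolution is the right fix.
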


\begin{proof}
Suppose for a contradiction that there is an approximation
  $\widetilde\bp$ to $\bp$ which is the output distribution of some
  sampler (either entropy-optimal or entropy-suboptimal) that consumes
  no more than $k$ bits from the source such that
  $\Delta(\bp, \widetilde\bp) < \Delta(\bp, \hat\bp)$.
But then all entries in $\widetilde\bp$ must be $k$-bit dyadic rationals.
Thus, any entropy-optimal DDG tree $\widetilde{T}$ for
  $\widetilde\bp$ has depth $k$ and no back-edges, contradicting the
  assumption that the output distribution $\hat\bp$ of $\hat{T}$ is
  a closest approximation to $\bp$.
\end{proof}

\begin{remark}
\label{remark:f-divergence-opt-precision-no-back-edge}
In light of Proposition~\ref{prop:opt-finite-entropy},
  we will also consider the restriction of
  Problem~\ref{problem:f-divergence-opt-precision} to $k$-bit
  entropy-optimal samplers whose DDG trees do not have back-edges,
  which yields an entropy-optimal sampler in the class of
  samplers that halt after consuming at most $k$ random bits.
\end{remark}

\subsection{Pitfalls of Naively Truncating the Target Probabilities}
\label{subsec:models-pitfalls}

Let us momentarily consider the class of samplers from
  Proposition~\ref{prop:opt-finite-entropy}.
Namely, for given a precision specification $k$ and target distribution $\bp$,
  solve Problem~\ref{problem:f-divergence-opt-precision} over the class
  of all algorithms that halt after consuming at most $k$ random bits
  (and thus have output distributions whose probabilities are dyadic
  rationals).
This section shows examples of how naively truncating the target
  probabilities $p_i$ to have $k$ bits of precision
  (as in, e.g., \citet{ladd2009,dwarakanath2014})
  can fail to deliver accurate limited-precision samplers for various
  target distributions and error measures.

More specifically, the naive truncation initializes
  $\hat{p}_i = (0.p_1p_2\dots{p_k})_2 = \floor{2^kp_i}/2^k$.
As the $\hat{p}_i$ may not sum to unity, lower-order bits can be
  arbitrarily incremented until the terms sum to one (this
  normalization is implicit when using floating-point
  computations to implement limited-precision inversion sampling,
  as in Algorithm~\ref{alg:inversion-sample}).
The $\hat{p}_i$ can be organized into a probability matrix $\hat\bP$,
  which is the truncation of the full probability matrix $\bP$ to $k$
  columns.
The matrix $\hat\bP$ can then be used to construct a finite
  entropy-optimal DDG tree, as in Example~\ref{example:ddg-dyadic}.
While such a truncation approach may be sensible when
  the error of the approximate probabilities $\hat{p}_i$ is measured
  using total variation distance, the error in the general case can
  be highly sensitive to the setting of lower-order bits after
  truncation, depending on
  the target distribution $\bp$,
  the precision specification $k$, and
  the error measure $\Delta$.
We next present three conceptual examples that highlight these numerical
  issues for common measures of statistical error that are used
  in various applications.

\begin{example}[Round-off with relative entropy divergence]
\label{example:round-off-kl}
Suppose the error measure is relative entropy (Kullback-Leibler divergence),
  $\Delta(\bp, \hat\bp) \defas \sum_{i=1}^{n}\log(\hat{p}_i/p_i)p_i$,
  which plays a key role in information theory
  and data compression~\citep{kullback1951}.
Suppose $n$, $k$ and $\bp$ are such that $n \le 2^k$ and there exists $i$
  where $p_i = \epsilon \ll 1/2^k$.
Then setting $\hat{p}_i$ so that $2^k\hat{p}_i = \floor{2^kp_i} = 0$ and failing to
  increment the lower-order bit of $\hat{p}_i$ results in an infinite
  divergence of $\hat\bp$ from $\bp$, whereas, from the assumption that
  $n \le 2^k$, there exist approximations that have finite divergence.
\end{example}

In the previous example,
  failing to increment a low-order bit results
  in a large (infinite) error.
In the next example,
  choosing to increment a low-order bit results in
  an arbitrarily large error.
\begin{example}[Round-off with Pearson chi-square divergence]
\label{example:round-off-chi2}
Suppose the error measure is Pearson chi-square,
  $\Delta(\bp, \hat\bp) \defas \sum_{i=1}^{n}(p_i - \hat{p}_i)^2/p_i$,
  which is central to goodness-of-fit testing in statistics~\citep{pearson1900}.
Suppose that $k$ and $\bp$
  are such that there exists $i$
  where $p_i = c/2^k + \epsilon$, for $0 < \epsilon \ll 1/2^k$
  for some integer $0 \le c \le 2^k-1$.
Then setting $\hat{p}_i$ so that
  $2^k\hat{p}_i = \floor{2^kp_i} = c$
  (not incrementing the lower-order bit)
  gives a small contribution to the error, whereas setting
  $\hat{p}^+_i$ so that
  $2^k\hat{p}^+_i = \floor{2^kp_i} = c + 1$
  (incrementing the lower-order bit)
  gives a large contribution to the error.
More specifically, the relative error
  of selecting $\hat{p}^+_i$ instead of $\hat{p}_i$ is
  arbitrarily large:
  \begin{align*}
  (p_i - \hat{p}_i^+)^2 / (p_i - \hat{p}_i)^2
    &= (c/2^k + \epsilon - c/2^k -1/2^k)^2 / (c/2^k + \epsilon - c/2^k)^2 \\
    &= (1/2^k - \epsilon)^2 / \epsilon^2
    \approx 1/(2^k\epsilon)^2
    \gg 1.
  \end{align*}
\end{example}
The next example shows that the first $k$ bits of $p_i$
  can be far from the globally optimal $k$-bit approximation, even in
  higher-precision regimes where $1/2^k \le \min(p_1, \dots, p_n)$.

\begin{example}[Round-off with Hellinger divergence]
\label{example:round-off-hellinger}
Suppose the error measure is the Hellinger divergence,
  $\Delta(\bp, \hat\bp) \defas \sum_{i=1}^{n}(\sqrt{p_i} - \sqrt{\hat{p}_i})^2$,
  which is used in fields such as information complexity~\citep{bar2004}.
Let $k=16$ and $n = 1000$,
  with $p_1 = 5/8$ and $p_2 = \dots = p_{n} = 3/8(n-1)$.
Let $(\hat{p}_{1}, \dots \hat{p}_{n})$ be the $k$-bit approximation that
  minimizes $\Delta(\bp, \hat\bp)$.
It can be shown that
  $2^k\hat{p}_{1} = 40788$ whereas $\floor{2^kp_1} = 40960$,
  so that $\bigabs{\floor{2^kp_1} - 2^k\hat{p}_{1}} = 172$.
\end{example}

In light of these examples, we turn our attention to solving
  Problem~\ref{problem:f-divergence-opt-precision} by truncating the
  target probabilities in a principled way that avoids these pitfalls
  and finds a closest-approximation distribution for any target
  probability distribution, error measure, and precision specification.

\section{Characterizing the space of entropy-optimal sampling algorithms}
\label{sec:limited-precision}

This section presents several results about the class of
  entropy-optimal $k$-bit sampling algorithms over which
  Problem~\ref{problem:f-divergence-opt-precision} is defined.
These results form the basis of the algorithm for finding a
  closest-approximation distribution $\hat\bp$ in Section~\ref{sec:optimal} and
  the algorithms for constructing the corresponding entropy-optimal
  DDG tree $\hat{T}$ in Section~\ref{sec:ddg}, which together will
  form the solution to Problem~\ref{problem:f-divergence-opt-precision}.

Section~\ref{subsec:models-pitfalls} considered sampling algorithms
  that halt after consuming at most $k$ random bits (so that each
  output probability is an integer multiple of
  $1/2^k$) and showed that naively discretizing the target
  distribution can result in poor approximations.
The DDG trees of those sampling algorithms
  are finite: they have depth $k$ and no back-edges.
For entropy-optimal DDG trees that use $k \ge 1$ bits of
  precision (Definition~\ref{def:ddg-precision}) and have back-edges,
  the output distributions
  (Definition~\ref{def:ddg-output-distribution}) are described by a
  $k$-bit number.
The $k$-bit numbers $x$ are those such that for some integer
  $l$ satisfying $0 \le l \le k$, there is some element
  $(x_1, \dots, x_k) \in \set{0,1}^{l} \times \set{0,1}^{k-l}$,
  where the first $l$ bits correspond to a finite prefix and the
  final $k-l$ bits correspond to an infinitely repeating suffix, such that
  $x = (0.x_1\ldots x_l \overline{x_{l+1} \ldots x_{k}})_2$.
Write $\NumSys{kl}$ for the set of rationals in $[0,1]$ describable
  in this way.
\begin{proposition}
\label{prop:numsys-bases}
For integers $k$ and $l$ with $0 \le l \le k$,
  define $Z_{kl} \defas 2^k - 2^{l}\Indicator_{l < k}$.
Then
  \begin{align}
  \NumSys{kl} =
  \left\lbrace
    \frac{0}{Z_{kl}},
    \frac{1}{Z_{kl}},
    \dots,
    \frac{Z_{kl}-1}{Z_{kl}},
    \frac{Z_{kl}}{Z_{kl}}\Indicator_{l < k}
  \right\rbrace.
  \label{eq:wqa}
  \end{align}
\end{proposition}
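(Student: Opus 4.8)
The plan is to compute the rational value of a general $k$-bit number in closed form, reduce it to a single fraction with denominator $Z_{kl}$, and then show that the resulting numerator ranges over exactly $\set{0, 1, \dots, Z_{kl}}$ (respectively $\set{0, \dots, Z_{kl}-1}$ when $l = k$).

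First I would fix an element $(x_1, \dots, x_k) \in \set{0,1}^l \times \set{0,1}^{k-l}$ and write $P \defas \sum_{i=1}^{l} x_i 2^{l-i}$ for the integer encoded by the prefix bits and $R \defas \sum_{j=l+1}^{k} x_j 2^{k-j}$ for the integer encoded by the repeating block, so that $P$ ranges over $\set{0, \dots, 2^l-1}$ and $R$ over $\set{0, \dots, 2^{k-l}-1}$. When $l = k$ there is no repeating suffix and directly $x = P/2^k$ ranges over $\set{0/2^k, \dots, (2^k-1)/2^k}$; since $Z_{kk} = 2^k$ and $\Indicator_{l<k} = 0$, this matches Eq.~\eqref{eq:wqa}. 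For $l < k$ I would evaluate the repeating suffix by summing the geometric series with ratio $2^{-(k-l)}$: the first copy of the block contributes $2^{-k}R$, and the factor $\sum_{m \ge 0} 2^{-(k-l)m} = 2^{k-l}/(2^{k-l}-1)$ accounts for the repetitions, giving a suffix value of $R/(2^k - 2^l)$. Adding the prefix value $P/2^l$ and clearing denominators over $Z_{kl} = 2^k - 2^l = 2^l(2^{k-l}-1)$ yields
\begin{equation*}
x = \frac{P}{2^l} + \frac{R}{2^k - 2^l} = \frac{P(2^{k-l}-1) + R}{Z_{kl}}.
\end{equation*}

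The crux is then to show that the numerator $N \defas P(2^{k-l}-1) + R$ attains every integer in $\set{0, 1, \dots, Z_{kl}}$. Writing $q \defas 2^{k-l} \ge 2$, for each fixed $P$ the value $N$ sweeps the full integer interval from $P(q-1)$ to $(P+1)(q-1)$ as $R$ runs over $\set{0, \dots, q-1}$, i.e., $q$ consecutive integers spanning a length of $q-1$. As $P$ increments by one, the left endpoint advances by exactly $q-1$, which equals the length just covered; hence successive intervals overlap in precisely their shared endpoint $(P+1)(q-1)$ and leave no gaps. Taking the union over $P \in \set{0, \dots, 2^l-1}$ therefore covers every integer in $[0, 2^l(q-1)] = [0, Z_{kl}]$, the maximum $N = Z_{kl}$ being realized by $P = 2^l-1$ and $R = q-1$. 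This shows $x$ ranges over exactly $\set{0/Z_{kl}, \dots, Z_{kl}/Z_{kl}}$, matching the right-hand side of Eq.~\eqref{eq:wqa}, with the final element $Z_{kl}/Z_{kl} = 1$ present precisely because $\Indicator_{l<k} = 1$.

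I expect the surjectivity step to be the main obstacle: it is essential that the span $q-1$ of integers reachable for each fixed prefix matches the increment $q-1$ induced by advancing the prefix, so the per-prefix images overlap by a single point rather than skipping integers (which would destroy surjectivity). This matching is exactly what forces the $-2^l$ correction term in $Z_{kl}$, and is where the interaction between the finite prefix of length $l$ and the repeating suffix of period $k-l$ is felt. The remaining verifications --- the geometric-series evaluation, the denominator identity $2^l(2^{k-l}-1) = 2^k - 2^l$, and the degenerate case $l = k$ (where $q = 1$ and the argument collapses to the trivial dyadic enumeration) --- are routine.
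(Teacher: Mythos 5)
Your proposal is correct and follows essentially the same route as the paper's proof: both split on $l = k$ versus $l < k$, and for $l < k$ both reduce a prefix-plus-repeating-suffix expansion to the single fraction $\bigl(P(2^{k-l}-1)+R\bigr)/(2^k-2^l)$ over the denominator $Z_{kl}$. The only difference is that you additionally spell out the surjectivity step (the overlapping-intervals sweep showing every numerator in $\set{0,\dots,Z_{kl}}$ is attained), which the paper's proof leaves implicit; this is a completion of the same argument rather than a departure from it.
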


\begin{proof}
For $l=k$, the number system
  $\NumSys{kl} = \NumSys{kk}$
  is the set of dyadic rationals in $[0,1)$
  with denominator $Z_{kk} = 2^k$.
For $0 \le l < k$, any element $x \in \NumSys{kl}$ when written in
  base $2$ has a (possibly empty)
  non-repeating prefix and a non-empty infinitely repeating
  suffix, so that
  $x$ has binary expansion
  $(0.a_1\dots a_l\overline{s_{l+1}\dots s_k})_2$.
  The first two lines of equalities
  (Eqs.~\eqref{eq:convoker} and~\eqref{eq:convince})
  imply Eq.~\eqref{eq:meningism}:
  \begin{align}
  2^l(0.a_1\dots a_l)_2
    &= (a_1\dots a_l)_2
    = \textstyle\sum_{i=0}^{l-1}a_{l-i}2^{i}, \label{eq:convoker} \\
  (2^{k-l}-1)(0.\overline{s_{l+1}\dots s_k})_2
    &= (s_{l+1}\dots s_k)_2
    = \textstyle\sum_{i=0}^{k-(l+1)}s_{k-i}2^{i}, \label{eq:convince} \\
  x = (0.a_1\dots a_l)_2 + 2^{-l}(0.\overline{s_{l+1}\dots s_k})_2
    &= \frac
      {(2^{k-l}-1)\sum_{i=0}^{l-1}a_{l-i}2^{i} + \sum_{i=0}^{k-(l+1)}s_{k-i}2^{i}}
      {2^{k}-2^{l}}. \label{eq:meningism}
  \end{align}
\end{proof}

\begin{remark}
For a rational $x \in [0,1]$, we take a representative
  $((x_1, \dots, x_l), (x_{l+1}, \dots, x_k)) \in \NumSys{kl}$
  that is both concise (Definition~\ref{def:concise}) and chosen such
  that the number $k$ of digits is as small possible.
\end{remark}

\begin{remark}
\label{remark:encoded-k-plus-one}
When $0 \le l \le k$, we have $\NumSys{kl} \subseteq \NumSys{k+1,l+1}$,
  since if $x \in \NumSys{kl}$
  then Proposition~\ref{prop:numsys-bases}
  furnishes an integer $c$ such that
  $x = c/(2^k-2^l\Indicator_{l<k})
    = 2c/(2^{k+1}-2^{l+1}\Indicator_{l<k})
    \in \NumSys{k+1,l+1}$.
Further, for $k\ge 2$, we have
  $\NumSys{k,k-1} \setminus \set{1} = \NumSys{k-1,k-1} \subseteq \NumSys{kk}$,
  since any infinitely repeating suffix comprised of a
  single digit can be folded into the prefix, except when the prefix
  and suffix are all ones.
\end{remark}

\begin{theorem}
\label{thm:k-bit-bases}
Let $\bp \defas (p_1, \dots, p_n)$ be a non-degenerate rational
  distribution for some integer $n > 1$.
The precision $k$ of the shortest entropy-optimal DDG (pseudo)tree with output
  distribution $\bp$ is the smallest integer such that every $p_i$
  is an integer multiple of $1/Z_{kl}$ (hence in $\NumSys{kl}$) for
  some $l \le k$.
\end{theorem}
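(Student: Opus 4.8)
The plan is to prove the equality by showing that both sides equal the same intrinsic invariant of $\bp$. Write $k^*$ for the precision of the shortest entropy-optimal DDG tree for $\bp$ (the left-hand side) and $\kappa$ for the smallest $k$ admitting an $l \le k$ with every $p_i \in \NumSys{kl}$ (the right-hand side). Let the least common denominator of $p_1,\dots,p_n$ be $2^L u$ with $u$ odd, and set $D \defas \operatorname{ord}_u(2)$, the multiplicative order of $2$ modulo $u$ (equivalently the lcm of the periods of the concise binary expansions of the $p_i$; when $\bp$ is dyadic, $u=1$ and there is no repeating part, so one takes $D=0$). I will show $k^* = \kappa = L + D$. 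The two levers are Theorem~\ref{thm:ddg-knuth-yao}, which fixes the leaves of every entropy-optimal tree at the $1$-bits of the concise expansions, and Proposition~\ref{prop:numsys-bases}, which identifies $\NumSys{kl}$ with the rationals whose reduced denominator divides $Z_{kl}=2^l(2^{k-l}-1)$ (and $Z_{kk}=2^k$).

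That $\kappa = L+D$ is a denominator computation: $p_i \in \NumSys{kl}$ for a common $(k,l)$ forces $2^L \mid 2^l$ and $u \mid 2^{k-l}-1$, i.e. $l \ge L$ and $D \mid (k-l)$, and minimizing $k=l+(k-l)$ gives $l=L$ and $k-l=D$. For the bound $k^* \le \kappa$ I construct a witnessing tree. Taking the common representation guaranteed by Proposition~\ref{prop:numsys-bases} at $(k,l)=(L+D,L)$, its digits at every position coincide with the concise binary expansion of each $p_i$, because padding the prefix to length $L$ and inflating the period to length $D$ merely re-expresses the same infinite bit string ($D$ is a common period and $L$ dominates every individual prefix length). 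Hence placing leaves at the $1$-bits over levels $1,\dots,L+D$ and installing one back-edge from level $L+D$ to level $L$ produces precisely the Knuth-Yao tree of Theorem~\ref{thm:ddg-knuth-yao}, which is entropy-optimal; a simple path cannot traverse the back-edge without revisiting the level-$L$ node, so its longest root-to-leaf simple path has at most $L+D$ edges, giving $k^* \le L+D = \kappa$.

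For the reverse bound $\kappa \le k^*$, let $T$ be the shortest entropy-optimal tree, of precision $k^*$. Since its state set is finite, each $p_i$ is rational with eventually periodic concise expansion, and the back-edges of $T$ encode that periodicity. Reading each output probability as a sum of geometric series over the loops of $T$ (a leaf appearing at unfolded depth $d$ on a loop of length $D'$ contributes $2^{-d}/(1-2^{-D'})$, with denominator $2^{D'}-1$) shows that every $p_i$ has reduced denominator dividing $Z_{k^*,l}$, where $l \le k^*$ is the level to which the loops return; thus each $p_i \in \NumSys{k^*,l}$ and $\kappa \le k^*$. Combining the two bounds yields $k^* = \kappa$, which is the claim.

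The main obstacle is this reverse bound for pseudotrees whose back-edge structure is not the canonical single-loop form: an entropy-optimal realization of $\bp$ may carry several back-edges returning to different levels (as in the entropy-suboptimal tree of Example~\ref{example:ddg-rat}), so one cannot simply read off a single prefix length $l$ and period $k^*-l$. The work is to show that finiteness of the state set, together with the leaf placement forced by Theorem~\ref{thm:ddg-knuth-yao}, still confines all prefixes and periods within $k^*$ levels returning to a common level. I would organize this through the Kraft-type recursion $v_j = 2v_{j-1} - \lambda_j$ on the number $v_j$ of open branch nodes at level $j$, where $\lambda_j$ is the number of leaves at level $j$ (determined by $\bp$ alone via Theorem~\ref{thm:ddg-knuth-yao}); the loop-closure relation $v_{L+D}=v_L$ then both forces a leaf at the deepest period level---so the precision genuinely attains $L+D$ rather than falling short---and pins the output denominators, ruling out any shorter entropy-optimal encoding.
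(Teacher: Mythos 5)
Your reduction of both sides to the invariant $L+D$ is sound, and two of your three steps are correct: the identity $\kappa = L+D$ follows from Proposition~\ref{prop:numsys-bases} exactly as you say, and your construction for $k^* \le \kappa$ (the level-merged Knuth--Yao pseudotree with a single back-edge from level $L+D$ to level $L$) is a legitimate, explicit version of the easy direction that the paper leaves implicit. One remark: your concern that the precision might ``fall short'' of $L+D$ needs no separate lemma at all --- if the constructed machine had no leaf at level $L+D$ its precision would be strictly less than $\kappa$, which would contradict the inequality $\kappa \le k^*$ once that is proved; so the leaf-at-the-bottom fact is a corollary of the sandwich, not an ingredient of it.

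The genuine gap is precisely the bound $\kappa \le k^*$, and your own last paragraph concedes it. Your geometric-series computation is valid only when every back-edge of the shortest tree returns to one common level $l$ and every loop has length $k^*-l$ --- but that is the statement to be proved, not a hypothesis you may assume. The fallback plan does not close the gap: the relation $v_{L+D}=v_L$ is a property of the canonical single-loop encoding, so invoking it for an arbitrary shortest machine begs the question. Worse, Theorem~\ref{thm:ddg-knuth-yao} fixes only \emph{how many} leaves with each label sit at each level, not how they attach to parents, so an entropy-optimal sampler may fold a non-level-homogeneous arrangement into a machine with several cycles of different lengths returning to different states, whose states do not correspond to levels at all; the quantities $v_j$ and $\lambda_j$ of the unfolded tree then do not attach to the machine's graph in the way your recursion needs, and nothing in your proposal rules out that such a machine has every simple root-to-leaf path shorter than $L+D$. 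The paper closes this direction by arithmetic rather than tree surgery: precision $k^*$ confines each $p_i$ to a representation with prefix length $l_i$ and period $k^*-l_i$, and then the constraint $\sum_{i=1}^{n} p_i = 1$ forces the $l_i$ to coincide (two reduced fractions with denominators $2^{k^*}-2^{l_1} \ne 2^{k^*}-2^{l_2}$ cannot sum to $1$); with a common $l$ every $d_i$ divides $k^*-l$ and every $l_i \le l$, hence $L+D \le k^*$. Some such global use of the sum-to-one constraint is exactly the ingredient your argument is missing, and without it the reverse bound --- and hence the theorem --- is not established.
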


\begin{proof}
Suppose that $T$ is a shortest entropy-optimal DDG (pseudo)tree and
  let $k$ be its depth (note that $k\ge 1$, as $k=0$ implies $\bp$ is
  degenerate).
Assume $n=2$.
From Theorem~\ref{thm:ddg-knuth-yao},
  Definition~\ref{def:ddg-precision}, and the hypothesis that the transition
  function $\delta$ of $T$ encodes that shortest possible
  DDG tree, we have that for each $i=1,2$, the
  probability $p_i$ is a rational number where the number of digits in
  the shortest prefix and suffix of the concise binary expansion is at
  most $k$.
Therefore, we can write
  \begin{align}
  p_1 = (0.a_1\dots a_{l_1}\overline{s_{l_1+1}\dots s_{k}}), &&
  p_2 = (0.w_1\dots w_{l_2}\overline{u_{l_2+1}\dots u_{k}}),
  \end{align}
where $l_i$ and $k-l_i$ are the number of digits in the
  shortest prefix and suffix, respectively, of each $p_i$.

If $l_1 = l_2$ then the conclusion follows from
  Proposition~\ref{prop:numsys-bases}.
If $l_1 = k-1$ and $l_2 = k$ then the conclusion
  follows from Remark~\ref{remark:encoded-k-plus-one}
  and the fact that $p_1\ne 1$, $p_2 \ne 1$.
%
Now, from Proposition~\ref{prop:numsys-bases}, it suffices to establish
  that $l_1 = l_2 \asdef l$, so that $p_1$ and $p_2$ are both integer
  multiples of $1/Z_{kl}$.
Suppose for a contradiction that $l_1 < l_2$ and $l_1 \ne k-1$.
Write
  $p_1 = a/c$
and
  $p_2 = b/d$
where each summand is in reduced form.
By Proposition~\ref{prop:numsys-bases}, we have
  $c = 2^k - 2^{l_1}$
  and
  $d = 2^k - 2^{l_2}\Indicator_{l_2<k}$.
Then as
  $p_1 + p_2 = 1$
we have
  $ad + bc = cd$.
If $c \neq d$ then either $b$ has a positive factor in common with
  $d$ or $a$ with $c$, contradicting the summands being in reduced form.
But $c=d$ contradicts $l_1 < l_2$.
%

The case where $n > 2$ is a straightforward extension of this argument.
\end{proof}

An immediate consequence of Theorem~\ref{thm:k-bit-bases} is that all
  back-edges in an entropy-optimal DDG tree that uses $k$ bits of precision
  must originate at level $k-1$ and end at the same level $l < k-1$.
The next result,
  Theorem~\ref{thm:precision-perfect-sampling}, shows that at
  most $Z-1$ bits of precision are needed by an entropy-optimal DDG tree to
  \textit{exactly} flip a coin with rational probability $p = c/Z$,
  which is exponentially larger than the $\log(Z)$ bits
  needed to encode $Z$.
Theorem~\ref{thm:precision-perfect-sampling-prime} shows that this
  bound is tight for many $Z$ and, as we note in
  Remark~\ref{rem:Artin}, is likely tight for infinitely many $Z$.
These results highlight the need
  for \textit{approximate} entropy-optimal sampling
  from a computational complexity standpoint.

\begin{theorem}
\label{thm:precision-perfect-sampling}
Let $M_1, \dots, M_n$ be $n$ positive integers
  that sum to $Z$ and let
  $\bp \defas (M_1/Z, \dots, M_n/Z)$.
Any exact, entropy-optimal sampler whose output distribution
  is $\bp$ needs at most $Z-1$ bits of precision.
\end{theorem}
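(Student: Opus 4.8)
The plan is to reduce the statement to a purely number-theoretic question and then invoke Theorem~\ref{thm:k-bit-bases}. First I would dispose of the trivial cases: if $n = 1$ (or more generally if some $p_i = 1$) the sampler needs $0$ bits and the bound is immediate. For $n \ge 2$ every $M_i < Z$, so each $p_i = M_i/Z < 1$ and $\bp$ is non-degenerate; Theorem~\ref{thm:k-bit-bases} then tells me that the precision of the shortest entropy-optimal DDG tree is the least integer $k$ for which there is an $l \le k$ with every $p_i$ an integer multiple of $1/Z_{kl}$, where $Z_{kl} = 2^k - 2^l\Indicator_{l<k}$. Since $p_i = M_i/Z$, it therefore suffices to exhibit a pair $(k,l)$ with $0 \le l \le k \le Z-1$ such that $Z \mid Z_{kl}$: for then $Z_{kl}/Z$ is a positive integer, each $p_i = (M_i Z_{kl}/Z)/Z_{kl}$ is an integer multiple of $1/Z_{kl}$, and by Proposition~\ref{prop:numsys-bases} every $p_i$ lies in $\NumSys{kl}$, whence the minimal precision is at most $k \le Z-1$.

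Next I would construct such a pair by factoring $Z = 2^a q$ with $q$ odd. If $q = 1$, so that $Z$ is a power of two, I take $l = k = a$, whence $Z_{kk} = 2^a = Z$ divides itself. If $q > 1$, I let $d$ be the multiplicative order of $2$ modulo $q$ (well-defined since $\gcd(2,q) = 1$) and set $l = a$ and $k = a + d$. Then $Z_{kl} = 2^k - 2^l = 2^a(2^d - 1)$, and since $q \mid 2^d - 1$ by definition of $d$, the product $2^a q = Z$ divides $Z_{kl}$, as required.

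It remains to verify the depth bound $k \le Z - 1$, which I expect to be the main obstacle, since it is precisely where the exponential gap between $\log Z$ and $Z$ is made to appear. In the power-of-two case this is the elementary inequality $a \le 2^a - 1$. In the general case I would bound $d \le \varphi(q) \le q - 1$ (the order divides $\varphi(q)$ by Euler's theorem, and $\varphi(q) \le q-1$ for $q > 1$), so that $k = a + d \le a + q - 1$; the target inequality $a + q - 1 \le 2^a q - 1 = Z - 1$ then reduces to $a \le q(2^a - 1)$, which follows at once from $q \ge 1$ together with $2^a \ge a + 1$. This closes the chain and produces a valid pair $(k,l)$ with $k \le Z-1$, establishing the claimed precision bound.
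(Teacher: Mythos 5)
Your proof is correct and takes essentially the same route as the paper's: both reduce the claim, via Theorem~\ref{thm:k-bit-bases}, to exhibiting integers $l \le k \le Z-1$ with $Z \mid Z_{kl}$, and both construct such a pair by splitting off the $2$-part of $Z$ and applying Euler's theorem to the odd part. The only differences are minor refinements on your side: you use the multiplicative order of $2$ modulo the odd part $q$ (which divides $\phi(q)$, so the same bound $k \le Z-1$ follows), and you treat the power-of-two case $q=1$ separately with $l=k=a$ — which is in fact cleaner than the paper's even case, whose choice $j = (Z'-1)-\phi(Z')$ degenerates to $-1$ when $Z'=1$.
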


\begin{proof}
By Theorem~\ref{thm:k-bit-bases},
  it suffices to find integers $k \le Z-1$ and $l \le k$ such that
  $Z_{kl}$ is a multiple of $Z$, which in turn implies that any
  entropy-optimal sampler for $\bp$ needs at most $Z-1$ bits.
\begin{enumerate}[
  label={Case \arabic*:},
  wide,
  ]

\item $Z$ is odd.
\label{case:trustlessness}
  Consider $k = Z-1$.
  %
  We will show that $Z$ divides $2^{Z-1} - 2^{l}$ for some $l$ such
    $0 \le l \le Z-2$.
  Let $\phi$ be Euler's totient function,
    which satisfies $1 \le \phi(Z) \le Z-1 = k$.
  Then $2^{\phi(Z)} \equiv 1 \pmod{Z}$ as $\mathrm{gcd}(Z,2)=1$.
  Put $l = Z - 1 - \phi(Z)$ and conclude
    that $Z$ divides $2^{Z-1} - 2^{Z-1-\phi(Z)}$.

\item $Z$ is even. Let $t\ge1$ be the maximal power of $2$ dividing $Z$,
  and write $Z = Z'2^t$.
  Consider $k = Z'-1+t$ and $l = j + t$
    where $j = (Z'-1) - \phi(Z')$.
  As in the previous case applied to $Z'$, we have that
    $Z'$ divides $2^{Z'-1} - 2^{j}$, and so $Z$ divides
  $2^k - 2^l$.
  We have $0 \le l \le k$ as $1 \le \phi(Z) \le Z-1$.
  Finally, $k = Z'+t - 1 \le Z'2^t - 1 = Z - 1$ as $t < 2^t$.
  \qedhere
\end{enumerate}
\end{proof}

\begin{theorem}
\label{thm:precision-perfect-sampling-prime}
Let $M_1, \dots, M_n$ be $n$ positive integers
  that sum to $Z$ and put
  $\bp = (M_1/Z, \dots, M_n/Z)$.
If $Z$ is prime and 2 is a primitive root modulo $Z$, then
  any exact, entropy-optimal sampler whose output distribution
  is $\bp$ needs exactly $Z-1$ bits of precision.
\end{theorem}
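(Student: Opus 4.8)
The plan is to reduce the statement to a divisibility condition via Theorem~\ref{thm:k-bit-bases} and then exploit the primitive-root hypothesis. Theorem~\ref{thm:precision-perfect-sampling} already supplies the upper bound of $Z-1$ bits, so it suffices to show that no entropy-optimal sampler for $\bp$ can use fewer. By Theorem~\ref{thm:k-bit-bases}, the precision of the shortest entropy-optimal DDG tree for $\bp$ is the least $k$ for which there exists $l \le k$ with every $p_i = M_i/Z$ an integer multiple of $1/Z_{kl}$, i.e.\ with $Z_{kl}M_i/Z \in \Integers$ for all $i$. I would phrase the whole argument as computing this least $k$ exactly.

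First I would strip off the $M_i$. Taking $n \ge 2$ (the case $n=1$ being the trivial deterministic distribution requiring no precision), every $M_i$ satisfies $1 \le M_i \le Z - (n-1) \le Z-1 < Z$, so since $Z$ is prime we have $\gcd(M_i, Z) = 1$ and hence $Z \nmid M_i$. Primality of $Z$ then gives that $Z \mid Z_{kl}M_i$ for all $i$ if and only if $Z \mid Z_{kl}$. Thus the precision equals the least $k$ admitting some $l \le k$ with $Z \mid Z_{kl} = 2^k - 2^l\Indicator_{l<k}$.

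Next I would analyze this divisibility by cases on $l$. Because $2$ is a primitive root modulo $Z$, the modulus $Z$ must be an odd prime, so $Z \nmid 2^k$ and the dyadic case $l=k$ (where $Z_{kk} = 2^k$) is immediately ruled out. For $l < k$ we have $Z_{kl} = 2^l(2^{k-l}-1)$, and since $Z$ is odd this is equivalent to $2^{k-l} \equiv 1 \pmod{Z}$. The hypothesis that $2$ is a primitive root modulo the prime $Z$ means the multiplicative order of $2$ is exactly $\phi(Z) = Z-1$, so $2^{k-l} \equiv 1 \pmod{Z}$ holds precisely when $(Z-1) \mid (k-l)$. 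As $l < k$, the quantity $k-l$ must be a positive multiple of $Z-1$, whose minimum is $Z-1$.

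Finally, minimizing $k = l + (k-l)$ subject to $l \ge 0$ and $k-l \ge Z-1$ yields the least admissible value $k = Z-1$, attained at $l = 0$ (so that $Z \mid 2^{Z-1}-1$). This matches the upper bound from Theorem~\ref{thm:precision-perfect-sampling}, establishing that exactly $Z-1$ bits are required. I do not expect a serious obstacle here: the only delicate points are the reduction that uses primality to discard the $M_i$ and the separate disposal of the $l=k$ case, after which the conclusion is a direct consequence of the order of $2$ modulo $Z$ being maximal.
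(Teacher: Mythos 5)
Your proof is correct and takes essentially the same route as the paper's: reduce via Theorem~\ref{thm:k-bit-bases} to the condition that $Z$ divides $Z_{kl}$, dispose of the $l=k$ case because $2$ being a primitive root forces $Z$ to be odd, and use that the multiplicative order of $2$ modulo $Z$ equals $\phi(Z)=Z-1$ to force $k-l\ge Z-1$, with the upper bound coming from Theorem~\ref{thm:precision-perfect-sampling}. If anything, your version is slightly more careful than the paper's, which passes directly from ``each $M_i/Z$ is an integer multiple of $1/Z_{k'l}$'' to ``$Z$ divides $Z_{k'l}$'' without spelling out the step, made explicit by you, that $\gcd(M_i,Z)=1$ because $Z$ is prime and $0<M_i<Z$.
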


\begin{proof}
Since $2$ is a primitive root modulo $Z$,
  the smallest integer $a$ for
  which $2^{a} - 1 \equiv 0 \pmod{Z}$
  is precisely $\phi(Z) = Z-1$.
We will show that for any $k' < Z-1$ there is no exact
  entropy-optimal sampler that uses $k'$ bits of precision.
By Theorem~\ref{thm:precision-perfect-sampling}, if there were such a
  sampler, then $Z_{k' l}$ must be a multiple of $Z$ for some
  $l \le k'$.
If $l < k'$, then $Z_{k' l} = 2^{k'}- 2^l$. Hence
  $2^{k'} \equiv 2^l \pmod{Z}$ and so $2^{k'-l} \equiv 1 \pmod{Z}$
  as $Z$ is odd.
  But $k' < Z-1 = \phi(Z)$,
  contradicting the assumption that $2$ is a primitive root modulo $Z$.
If $l = k'$, then $Z_{k' l} = 2^{k'}$,
  which is not divisible by $Z$ since we have assumed that $Z$ is odd
  (as $2$ is not a primitive root modulo $2$).
\end{proof}

\begin{remark}
\label{rem:Artin}
The bound in Theorem~\ref{thm:precision-perfect-sampling} is likely
  the tightest possible for infinitely many $Z$.
Assuming Artin's conjecture, there are infinitely many primes $Z$ for
  which $2$ is a primitive root, which in turns implies by
  Theorem~\ref{thm:precision-perfect-sampling-prime} that any
  entropy-optimal DDG tree must have $Z$ levels.
\end{remark}

\section{Optimal approximations of discrete probability distributions}
\label{sec:optimal}

Returning to Problem~\ref{problem:f-divergence-opt-precision}, we next
  present an efficient algorithm for finding a closest-approximation
  distribution $\hat\bp$ to any target distribution $\bp$,
  using Theorem~\ref{thm:k-bit-bases} to constrain the set of
  allowable distributions to those that are the output
  distribution of some entropy-optimal $k$-bit sampler.
%

\subsection{\texorpdfstring{$f$}{f}-divergences: A Family of Statistical Divergences}
\label{subsec:optimal-f-divergences}

We quantify the error of approximate sampling algorithms using a broad
  family of statistical error measures called
  $f$-divergences~\citep{ali1966general}, as is common in the random
  variate generation literature~\citep{cicalese2006}.
This family includes well-known divergences such as
  total variation (which corresponds to Euclidean $L_1$ norm),
  relative entropy
    (used in information theory~\citep{kullback1951}),
  Pearson chi-square
    (used in statistical hypothesis testing~\citep{pearson1900}),
  Jensen--Shannon
    (used in text classification~\citep{dhillon2003}),
  and Hellinger
    (used in cryptography~\citep{steinberger2012} and information complexity~\citep{bar2004}).

\begin{table}[b]
\centering
\captionsetup{skip=0pt}
\captionof{table}{Common statistical divergences expressed as $f$-divergences.}
\label{tab:f-divergences}
\begin{tabular*}{\linewidth}{@{\extracolsep{\fill}}llc}
\toprule
Divergence Measure & Formula $\Delta_g(\bp,\bq)$ & Generator $g(t)$ \\ \midrule
Total Variation
  & $\frac{1}{2}\sum_{i=1}^{n}\abs{q_i - p_i}$
  & $\frac{1}{2}\abs{t-1}$
\\[3pt]
Hellinger Divergence
  & $\frac{1}{2}\sum_{i=1}^{n}(\sqrt{p_i} - \sqrt{q_i})^2$
  & $(\sqrt{t}-1)^2$
\\[3pt]
Pearson Chi-Squared
  & $\sum_{i=1}^{n}{(q_i-p_i)^2}/{q_i}$
  & $(t-1)^2$
\\[3pt]
Triangular Discrimination
  & $\sum_{i=1}^{n} {(p_i - q_i)^2}/{(p_i + q_i)}$
  & ${(t-1)^2}/{(t+1)}$
\\[3pt]
Relative Entropy
  & $\sum_{i=1}^{n}\log({q_i}/{p_i}) {q_i}$
  & $t\log{t}$
\\[3pt]
$\alpha$-Divergence
  & ${{4}(1 - \textstyle\sum_{i=1}^{n}(p_i^{(1-\alpha)/2} q_i^{1+\alpha}))}/{(1-\alpha^2)}$
  & $4(1-t^{(1+\alpha)/2})/(1-\alpha^2)$
\\ \bottomrule
\end{tabular*}
\bigskip

\includegraphics[width=\linewidth]{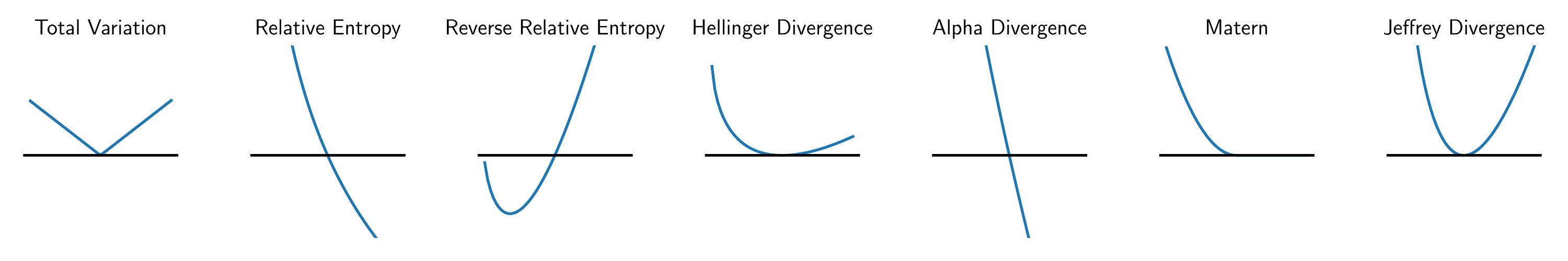}
\captionsetup{skip=0pt}
\captionof{figure}{
  Plots of generating functions $g$ for various $f$-divergences,
  a subset of which are shown in Table~\ref{tab:f-divergences}.
}
\label{fig:f-divergences}
\end{table}

\begin{definition}[Statistical divergence]
\label{def:statistical-divergence}
Let $n$ be a positive integer and $\mathcal{S}_n$ be the $(n-1)$-dimensional
  probability simplex, i.e., the set of all probability distributions over $[n]$.
A statistical divergence
  $\Delta \colon \mathcal{S}_n \times \mathcal{S}_n \to [0,\infty]$
  is any mapping from pairs of distributions on $[n]$
  to non-negative extended real numbers, such that for
  all $\bp, \bq \in \mathcal{S}_n$ we have $\Delta(\bp, \bq) = 0$
  if and only if $p_i = q_i$ $(i=1,\dots,n)$.
\end{definition}

\begin{definition}[$f$-divergence]
\label{def:f-divergence}
An $f$-divergence is any statistical divergence of the form
  \begin{align}
  \Delta_g(\bp, \bq) \defas \textstyle\sum_{i=1}^{n}g(q_i/p_i)p_i,
  \label{eq:f-divergence}
  \end{align}
  for some convex function
  $g \colon (0, \infty) \to \mathbb{R}$ with $g(1) = 0$.
The function $g$ is
  called the \textit{generator} of $\Delta_g$.
\end{definition}

For concreteness, Table~\ref{tab:f-divergences} expresses several
  statistical divergence measures as $f$-divergences and
  Figure~\ref{fig:f-divergences} shows plots of generating functions.
The class of $f$-divergences is closed under several operations; for
  example, if $\Delta_g(\bp, \bq)$ is an $f$-divergence then so is the dual
  $\Delta_{g_*}(\bq, \bp)$, where
  $g_*(t) = tg(1/t)$ is the perspective of $g$.
A technical review of these concepts can be found
  in~\citet[Section~III]{liese2006divergences}.
In this paper, we address Problem~\ref{problem:f-divergence-opt}
  assuming the error measure $\Delta$ is an
  $f$-divergence, which in turn allows us to optimize any error measure
  that is a 1-1 transformation of an underlying $f$-divergence.

\subsection{Problem Statement for Finding Closest-Approximation Distributions}
\label{subsec:optimal-problem-statement}

Recall that Theorem~\ref{thm:k-bit-bases} establishes that the probability
  distributions that can be simulated exactly by an entropy-optimal
  DDG tree with $k$ bits of precision
  have probabilities $p_i$ of the form  $M_i/\nbase_{kl}$,
  where $M_i$ is a non-negative integer
  and $\nbase_{kl} = 2^k-2^{l}\Indicator_{k<l}$ is the denominator
  of the number system $\NumSys{kl}$.
This notion is a special case of the following concept.

\begin{definition}[$Z$-type distribution \citep{cover2012elements}]
\label{def:z-type-distribution}
For any positive integer $\nbase$, a probability distribution $\bp$ over
  $[n]$ is said to be $\nbase$-type distribution if
  \begin{align}
  p_i \in \left\{
      \frac{0}{\nbase},
      \frac{1}{\nbase},
      \frac{2}{\nbase},
      \dots,
      \frac{\nbase}{\nbase}
    \right\}
    \label{eq:agamont}
  && (i = 1, \dots, n).
  \end{align}
\end{definition}

\begin{definition}
For positive integer $n$ and non-negative integer $\nbase$, define the set
  \begin{align}
  \Assignments[n,\nbase] \defas \left
    \{(M_1, \dots, M_n) \; \big\vert \;
    M_i \ge 0,\;
    M_i \in \Integers,\;
    \textstyle\sum_{i=1}^{n}M_i = \nbase
  \right\},
  \label{eq:bushmaking}
  \end{align}
  which can be thought of as the set of all possible assignments
  of $\nbase$ indistinguishable
  balls into $n$ distinguishable bins
  such that each bin $i$ has $M_i$ balls.
\end{definition}

\begin{remark}
\label{remark:assignments-distributions-bijection}
Each element $\bM \in \Assignments[n, \nbase]$ may be identified with a
  $\nbase$-type distribution $\bq$ over $[n]$ by letting
  $q_i \defas M_i/\nbase$ ($i=1,\dots,n$),
  and thus adopt the notation $\Delta_g(\bp, \bM)$
  to indicate the $f$-divergence between probability distributions
  $\bp$ and $\bq$ (cf.\ Eq.~\eqref{eq:f-divergence}).
\end{remark}

By Theorem~\ref{thm:k-bit-bases}
  and Remark~\ref{remark:assignments-distributions-bijection},
  Problem~\ref{problem:f-divergence-opt-precision} is a special case of
  the following problem, since the output distribution of
  any $k$-bit entropy-optimal sampler is $\nbase$-type, where
  $\nbase \in \set{\nbase_{k0}, \dots, \nbase_{kk}}$.

\begin{problem}
\label{problem:f-divergence-opt}
Given a target distribution $\bp$ over $[n]$,
  an $f$-divergence $\Delta_g$,
  and a positive integer $\nbase$,
  find a tuple
  $\bM = (M_1, \dots, M_n) \in \Assignments[n,\nbase]$
  that minimizes the divergence
  \begin{align}
  \Delta_g(\bp, \bM) = \sum_{i=1}^{n}g\pfrac{M_i}{\nbase p_i}p_i.
  \label{eq:poi}
  \end{align}
\end{problem}

As the set $\Assignments[n,\nbase]$ is combinatorially large,
  Problem~\ref{problem:f-divergence-opt}
  cannot be solved efficiently by enumeration.
In the next section, we present an algorithm that finds an
  assignment $\bM$ that minimizes the objective function~\eqref{eq:poi}
  among the elements of $\Assignments[n,\nbase]$.
By Theorem~\ref{thm:k-bit-bases}, for any precision specification $k \ge 0$,
  using $\nbase = \nbase_{kl}$ for each $l=0,\dots,k$ and then selecting the value of
  $l$ for which Eq.~\eqref{eq:poi} is smallest
  corresponds to finding a closest-approximation distribution $\hat\bp$ for the class
  of $k$-bit entropy-optimal samplers, and thus solves the first part
  of Problem~\ref{problem:f-divergence-opt-precision}.


\begin{algorithm}[t]
\caption{Finding an error-minimal $\nbase$-type probability distribution.}
\label{alg:optimization}
\begin{framed}
\begin{tabularx}{\textwidth}{ll}
\textbf{Input}: &
  Probability distribution $\bp \defas (p_1, \dots, p_n)$;
  integer $\nbase > 0$; and $f$-divergence $\Delta_g$. \\
\textbf{Output}: & Numerators $\bM \defas (M_1, \dots, M_n)$ of
  $\nbase$-type distribution that minimizes $\Delta_g(\bp, \bM)$.
\end{tabularx}
\begin{enumerate}[label=\arabic*., ref=\arabic*]
\item \label{algline:opt-step-initialize}
For each $i=1,\dots,n$:
  \begin{enumerate}[label*=\arabic*, leftmargin=*]
    \item If $
      g\left(\frac{\floor{\nbase p_i}}{\nbase p_i}\right)
        \le g\left(\frac{\floor{\nbase p_i}+1}{\nbase p_i}\right)
      $
      then set $M_i \defas \floor{\nbase p_i}$;

    Else set $M_i \defas \floor{\nbase p_i} + 1$.
  \end{enumerate}

\item \label{algline:opt-step-epsilon}
For $\bW \in \Assignments[n, \Ms_1 + \dots + \Ms_n]$,
  $i \in [n]$, and $\delta \in \set{+1, -1}$,
  define the function
  \begin{align}
  \begin{aligned}
  \epsilon(\bW, i, \delta) &\defas p_i\left[
    g((W_i+\delta)/(\nbase p_i)) - g(W_i/(\nbase p_i))
  \right],
  \label{eq:utopia}
  \end{aligned}
  \end{align}
  which is the cost of setting $W_i \gets W_i + \delta$
  (or $\infty$ if $(W_i + \delta) \not\in \set{0,\dots,\nbase}$).

\item \label{algline:opt-step-prune}
Repeat until convergence:
  \subitem Let $(j, j') \defas
    \argmin_{(i,i') \in [n]^2 \mid i\ne{i'}}
    \left\{\epsilon(\bM, i, +1) + \epsilon(\bM, i', -1)\right\}$.
  \subitem If $\epsilon(\Ms, j, +1) + \epsilon(\Ms, j', -1) < 0$ then:
    \subsubitem Update $\Ms_{j} \gets \Ms_{j} + 1$.
    \subsubitem Update $\Ms_{j'} \gets \Ms_{j'} - 1$.

\item \label{algline:opt-step-shortfall}
Let $S \defas (\Ms_1 + \dots + \Ms_n) - \nbase$ be the number of units that
  need to be added to $\bM$ (if $S <0$) or subtracted from
  $\bM$ (if $S>0$) in order to ensure that $\bM$ sums to $\nbase$.

\item \label{algline:opt-step-exit}
If $S = 0$, then return $\bM$ as the optimum.

\item \label{algline:opt-step-delta}
Let $\delta_S \defas \mathbf{1}[{S<0}] - \mathbf{1}[S>0]$.

\item \label{algline:opt-step-loop}
Repeat $S$ times:

  \subitem Let $j \defas \arg\min_{i=1,\dots,n}(\epsilon(\bM, i, \delta_S))$.
  \subitem Update $\Ms_j \gets \Ms_j + \delta_S$.

\item \label{algline:opt-step-return}
Return $\bM$ as the optimum.
\end{enumerate}
\end{framed}
\end{algorithm}

\subsection{An Efficient Optimization Algorithm}
\label{subsec:optimal-algorithm}

Algorithm~\ref{alg:optimization} presents an efficient
  procedure that
  solves Problem~\ref{problem:f-divergence-opt}.
We now state the main theorem.

\begin{theorem}
\label{thm:optimization}
For any probability distribution $\bp$, $f$-divergence $\Delta_g$,
  and denominator $\nbase > 0$, the distribution returned by
  Algorithm~\ref{alg:optimization} minimizes the objective function
  \eqref{eq:poi} over all $\nbase$-type distributions.
\end{theorem}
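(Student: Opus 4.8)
The plan is to exploit the fact that the objective in Eq.~\eqref{eq:poi} is \emph{separable}: writing $\phi_i(m) \defas g(m/(\nbase p_i))p_i$, we have $\Delta_g(\bp,\bM) = \sum_{i=1}^{n}\phi_i(M_i)$ and $\epsilon(\bM,i,\delta) = \phi_i(M_i+\delta) - \phi_i(M_i)$. Since $g$ is convex and $m \mapsto m/(\nbase p_i)$ is affine, each $\phi_i$ is discretely convex in the integer argument $m$; equivalently, its forward differences $d_i(m) \defas \phi_i(m+1) - \phi_i(m)$ are non-decreasing in $m$, with the natural $\pm\infty$ conventions at the boundaries $m=0$ and $m=\nbase$ (matching the paper's rule that $\epsilon$ is $+\infty$ on infeasible moves, and using $\lim_{t\to 0^+}g(t)$ when $M_i=0$). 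The entire argument rests on this single structural observation, so I would record it first as a lemma.

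Second, I would prove a \textbf{threshold characterization of optimality}: a feasible $\bM \in \Assignments[n,m]$ minimizes $\Delta_g(\bp,\cdot)$ over $\Assignments[n,m]$ if and only if there is a scalar $\lambda$ with $d_i(M_i - 1) \le \lambda \le d_i(M_i)$ for every $i$. The forward implication I would obtain by contraposition: a pair violating the inequalities yields a swap that strictly lowers the objective. For the reverse implication I would use a discrete Lagrangian argument: given such a $\lambda$, monotonicity of $d_i$ shows that $M_i$ minimizes $m \mapsto \phi_i(m) - \lambda m$ over the non-negative integers, so summing these coordinatewise inequalities against any competitor $\bM^\ast \in \Assignments[n,m]$ and using $\sum_i M_i = \sum_i M_i^\ast = m$ to cancel the $\lambda$-terms gives $\Delta_g(\bp,\bM) \le \Delta_g(\bp,\bM^\ast)$. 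I would also note the equivalent ``no improving swap'' form, $\epsilon(\bM,i,+1) + \epsilon(\bM,i',-1) \ge 0$ for all $i \ne i'$ with $M_{i'}\ge 1$, which is exactly the termination condition of the loop in Step~\ref{algline:opt-step-prune}.

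Third, I would prove a \textbf{greedy cross-level lemma}: if $\bM$ is optimal over $\Assignments[n,m]$ and $j \defas \argmin_i \epsilon(\bM,i,\delta)$ for a fixed $\delta \in \set{+1,-1}$, then updating $M_j \gets M_j + \delta$ yields a point optimal over $\Assignments[n,m+\delta]$. Using the threshold $\lambda$ supplied by the characterization for $\bM$, I would exhibit a valid new threshold for the updated point by checking the inequalities relating the boundary increments at the perturbed configuration; each follows from the old threshold bounds, the minimality of the coordinate $j$, and monotonicity of the $d_i$. With these three lemmas the theorem assembles cleanly: Steps~\ref{algline:opt-step-initialize}--\ref{algline:opt-step-prune} terminate (the objective strictly decreases over the finite set $\Assignments[n,\Ms_1+\dots+\Ms_n]$ and every swap preserves the total) at a point optimal over $\Assignments[n,\nbase']$ for $\nbase' \defas \Ms_1+\dots+\Ms_n$, by the characterization; and Steps~\ref{algline:opt-step-shortfall}--\ref{algline:opt-step-return} apply the greedy lemma $\abs{S}$ times, each move shifting the total by $\delta_S$ and preserving optimality, terminating at a point optimal over $\Assignments[n,\nbase]$, as required.

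The main obstacle I anticipate is the reverse direction of the threshold characterization together with the greedy cross-level lemma: these are the places where integrality genuinely bites and where the continuous convex-duality intuition must be replaced by the discrete Lagrangian/exchange argument built on non-decreasing forward differences. Everything else—separability, termination, and the bookkeeping of the shortfall $S$ and sign $\delta_S$—is routine once that discrete-convexity machinery is in place. A minor point worth handling carefully is the boundary and extended-real behavior when some $M_i = 0$ or $\lim_{t\to 0^+}g(t) = \infty$, which the $\pm\infty$ conventions on $d_i$ and $\epsilon$ accommodate without altering any of the arguments.
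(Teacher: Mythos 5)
Your proposal is correct, but it takes a genuinely different route from the paper's proof. The paper argues by direct exchange: Theorem~\ref{thm:opt-step-prune-correct} assumes a better competitor exists at termination of Step~\ref{algline:opt-step-prune} and telescopes the convexity inequalities \eqref{eq:Arawak}--\eqref{eq:legalness} to produce an improving single swap, contradicting the loop's exit condition; Theorem~\ref{thm:opt-step-shortfall-correct} and its corollary then prove the greedy cross-level step by a four-case exchange analysis. You instead package the same discrete convexity into a Lagrangian certificate: $\bM$ is optimal over $\Assignments[n,m]$ if and only if some threshold $\lambda$ satisfies $d_i(M_i-1)\le\lambda\le d_i(M_i)$ for every $i$, which is the classical optimality condition for separable convex resource allocation. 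The certificate does real work in two places: the reverse implication is a clean summation argument (the $\lambda$-terms cancel because competing assignments share the same total), and your cross-level lemma becomes nearly mechanical---after incrementing at the argmin coordinate $j$ one may take the new threshold to be $\lambda' = d_j(M_j)$, and every required inequality follows from monotonicity of the $d_i$ together with the old certificate (since $\lambda \le \lambda' \le d_i(M_i)$ for all $i$), where the paper instead needs its case analysis. What the paper's route buys in exchange is the sharper termination result (Theorems~\ref{thm:opt-step-prune-runtime-increment} and~\ref{thm:opt-step-prune-runtime}: at most $n$ iterations of Step~\ref{algline:opt-step-prune}), which underlies the $O(n\log n)$ runtime claim; your termination argument---strict decrease of the objective over the finite set $\Assignments[n,t]$---suffices for the theorem as stated but not for that bound. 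One point to tighten in your version: when some configuration has infinite objective (e.g.\ $M_i=0$ with $\lim_{t\to 0^+}g(t)=\infty$), ``strict decrease'' is not meaningful between two infinite values, so termination in that corner case should be routed through the certificate (or the paper's no-reversal argument) rather than the naive potential argument; your stated $\pm\infty$ conventions can absorb this, but it deserves an explicit sentence.
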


The remainder of this section contains the proof of
  Theorem~\ref{thm:optimization}.
Section~\ref{ssubsec:optimal-algorithm-correctness} establishes
  correctness and Section~\ref{ssubsec:optimal-algorithm-runtime}
  establishes runtime.

\subsubsection{Theoretical Analysis: Correctness}
\label{ssubsec:optimal-algorithm-correctness}

In this section, let $\bp$, $n$, $\nbase$, and $g$ be defined as in
  Algorithm~\ref{alg:optimization}.
%

\begin{definition}
\label{def:delta-error}
Let $t > 0$ be an integer and
  $\bM \in \Assignments[n,t]$.
For integers $a$ and $b$, define
  \begin{align}
  \Delta^{i}[a \to b; \bM, \nbase] \defas
    p_i \left[ g\pfrac{M_i+b}{\nbase{p_i}} - g\pfrac{M_i+a}{\nbase{p_i}} \right]
    && (i=1,\dots,n).
    \label{eq:Tiliaceae}
  \end{align}

For typographical convenience,
  we write $\Difff{i}{a}{b}{\bM}$ (or $\Diff{i}{a}{b}$)
  when $\nbase$ (and $\bM$) are clear from context.
We define $\Diff{i}{a}{b} \defas \infty$ whenever
  $(M_i + b)$ or $(M_i + a)$
  are not in $\set{0,\dots,t}$.
\end{definition}

\begin{remark}
The convexity of $g$ implies that for any real number $j$,
  \begin{align}
  \frac{
    \displaystyle g\pfrac{M_i+j+1}{\nbase{p_i}} - g\pfrac{M_i+j}{\nbase{p_i}}
    }{1/(\nbase{p_i})}
  \le \frac{
    \displaystyle g\pfrac{M_i+j+2}{\nbase{p_i}} - g\pfrac{M_i+j+1}{\nbase{p_i}}
    }{1/(\nbase{p_i})} && (i=1,\dots,n).
  \end{align}
  Letting $j$ range over the integers gives
  \begin{align}
  \cdots
    < \Diff{i}{-2}{-1}
    < \Diff{i}{-1}{0}
    < \Diff{i}{0}{1}
    < \Diff{i}{1}{2}
    < \Diff{i}{2}{3}
    < \cdots.
  \label{eq:Arawak}
  \end{align}

By telescoping \eqref{eq:Arawak}, if $a < b < c$ then
  \begin{align}
  \Diff{i}{a}{b} < \Diff{i}{a}{c}.
  \label{eq:legalness}
  \end{align}
Finally, it is immediate from the definition that
  $\Diff{i}{a}{b} = -\Diff{i}{b}{a}$ for all $a$ and $b$.
\end{remark}

\begin{theorem}
\label{thm:opt-step-prune-correct}
Let $t > 0$ be an integer and
  $\bM \defas (M_1, \dots, M_n)$ be any assignment
  in $\Assignments[n,t]$.
If, given initial values $\bM$ the loop defined in
 Step~\ref{algline:opt-step-prune} of Algorithm~\ref{alg:optimization}
 terminates, then the final values of $\bM$ minimize
 $\Delta_g(\bp, \cdot)$ over the set $\Assignments[n,t]$.
\end{theorem}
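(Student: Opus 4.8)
The plan is to exploit two structural features of the objective. First, it is \emph{separable}: writing $c_i(m) \defas p_i\,g(m/(\nbase p_i))$, we have $\Delta_g(\bp, \bM) = \sum_{i=1}^{n} c_i(M_i)$. Second, each $c_i$ is \emph{discretely convex}, since Eq.~\eqref{eq:Arawak} shows that the forward differences $\Diff{i}{j}{j+1}$ are increasing in $j$. Both the terminal assignment $\bM$ and every competitor lie in $\Assignments[n,t]$, hence share the common sum $t$; the swap move of Step~\ref{algline:opt-step-prune} preserves this sum, so the search never leaves $\Assignments[n,t]$. The termination hypothesis says no improving swap remains, i.e.\ $\epsilon(\bM, i, +1) + \epsilon(\bM, i', -1) \ge 0$ for every ordered pair $i \ne i'$; rewriting $\Diff{i'}{0}{-1} = -\Diff{i'}{-1}{0}$, this is exactly $\Diff{i}{0}{1} \ge \Diff{i'}{-1}{0}$ for all $i \ne i'$.

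Next I would fix an arbitrary $\bM' \in \Assignments[n,t]$ with $\bM' \ne \bM$ and prove $\Delta_g(\bp, \bM') \ge \Delta_g(\bp, \bM)$ by an exchange argument. Since the two assignments have equal sum, the index sets $D \defas \set{i : M'_i > M_i}$ and $E \defas \set{i : M'_i < M_i}$ are both nonempty and disjoint. For each deficit bin $i \in D$, the monotonicity of differences in Eq.~\eqref{eq:Arawak} gives $c_i(M'_i) - c_i(M_i) = \Diff{i}{0}{M'_i - M_i} \ge (M'_i - M_i)\Diff{i}{0}{1}$, and symmetrically for each excess bin $i' \in E$ we get $c_{i'}(M'_{i'}) - c_{i'}(M_{i'}) \ge -(M_{i'} - M'_{i'})\Diff{i'}{-1}{0}$. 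Summing and writing $K \defas \sum_{i \in D}(M'_i - M_i) = \sum_{i' \in E}(M_{i'} - M'_{i'})$, the total change is at least $\sum_{i \in D}(M'_i - M_i)\Diff{i}{0}{1} - \sum_{i' \in E}(M_{i'} - M'_{i'})\Diff{i'}{-1}{0}$. Bounding the first sum below by $K \min_{i \in D}\Diff{i}{0}{1}$ and the second above by $K \max_{i' \in E}\Diff{i'}{-1}{0}$, the termination inequality (applicable because $i \in D$, $i' \in E$ force $i \ne i'$) yields $\min_{i \in D}\Diff{i}{0}{1} \ge \max_{i' \in E}\Diff{i'}{-1}{0}$, so the total change is nonnegative, as required.

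The main obstacle is the passage from the purely \emph{local} stopping criterion (no single beneficial unit-swap) to \emph{global} optimality over a combinatorially large feasible set; the bridge is precisely discrete convexity, which lets me dominate the cost of any multi-unit reallocation by the terminal marginal costs appearing in the swap criterion. Two bookkeeping points deserve care. First, the restriction $i \ne i'$ in the minimization of Step~\ref{algline:opt-step-prune} must not weaken the conclusion; this is harmless since $D$ and $E$ are disjoint, so every comparison invoked is between distinct bins. Second, I must confirm the marginals used are finite: each $i \in D$ satisfies $M_i < M'_i \le t$, so incrementing stays in $\set{0,\dots,t}$, and each $i' \in E$ satisfies $M_{i'} > M'_{i'} \ge 0$, so decrementing stays in $\set{0,\dots,t}$; hence none of the $\Diff{}{}{}$ terms invoked equal $\infty$.
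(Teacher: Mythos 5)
Your proof is correct, and it takes a genuinely different—and in one respect more complete—route than the paper's. Both arguments rest on the same two pillars, separability of the objective and the monotone-difference (discrete convexity) property \eqref{eq:Arawak}--\eqref{eq:legalness}, but they are organized in opposite directions. The paper argues by contradiction: if the loop of Step~\ref{algline:opt-step-prune} terminates at a suboptimal assignment, it \emph{asserts} that some two-bin move (add $a$ units to bin $j$, remove $b$ units from bin $i$) strictly improves the objective, and then uses \eqref{eq:legalness} to collapse that move to a unit swap with $\Diff{j}{0}{1} + \Diff{i}{0}{-1} < 0$, contradicting termination. The existence of an improving \emph{two-bin} move at any suboptimal point is precisely the nontrivial local-to-global step, and the paper leaves it implicit (an arbitrary better assignment may differ from the terminal one in many coordinates, not two). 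Your direct exchange argument is exactly what fills in that step: for an arbitrary competitor $\bM' \in \Assignments[n,t]$ you decompose the difference over the deficit and excess sets $D$ and $E$, apply the telescoping bounds $\Diff{i}{0}{m} \ge m\,\Diff{i}{0}{1}$ and $\Diff{i'}{0}{-m} \ge -m\,\Diff{i'}{-1}{0}$, match the totals through $K$, and then invoke the no-improving-swap criterion only for the single pair realizing $\min_{i\in D}\Diff{i}{0}{1}$ and $\max_{i'\in E}\Diff{i'}{-1}{0}$, which is a legitimate pair since $D \cap E = \varnothing$. You also verify that every marginal you invoke is finite, a point the paper does not discuss. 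The trade-off is purely one of length: the paper's contradiction is shorter on the page, while your version quantifies over all competitors and is the more self-contained proof of the same statement.
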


\begin{proof}
We argue that the locally optimal assignments performed at
  each iteration of the loop are globally optimal.
Assume toward a contradiction
  that the loop in Step~\ref{algline:opt-step-prune}
  terminates with a suboptimal assignment
  $(W_1, \dots, W_n) \in \Assignments[n,t]$.
Then there exist indices $i$ and $j$ with $1 \le i < j \le n$ such that
  for some positive integers $a$ and $b$,
  \begin{align}
  p_j g\pfrac{W_j + a}{\nbase{p_j}} + p_i g\pfrac{W_i - b}{\nbase{p_j}}
    &< p_j g\pfrac{W_j}{\nbase{p_j}} + p_i g\pfrac{W_i}{\nbase{p_j}} \\
    \iff \Diff{j}{0}{a} + \Diff{i}{0}{-b} &< 0 \\
    \iff \Diff{j}{0}{a} &< -\Diff{i}{0}{-b} \\
    \iff \Diff{j}{0}{a} &< \Diff{i}{-b}{0}.
    \label{eq:Colosseum}
  \end{align}
  Combining \eqref{eq:Colosseum} with \eqref{eq:legalness} gives
  \begin{align}
  \Diff{j}{0}{1} < \Diff{j}{0}{a} < \Diff{i}{-b}{0} < \Diff{i}{-1}{0},
  \end{align}
  which implies $\epsilon_j(+1) + \epsilon_i(-1) < 0$, and so
  the loop can execute for one more iteration.
\end{proof}

%

We now show that the value of $\bM$ at the termination of the
  loop in Step~\ref{algline:opt-step-loop} of
  Algorithm~\ref{alg:optimization} optimizes the objective function
  over $\Assignments[n,\nbase]$.

\begin{theorem}
\label{thm:opt-step-shortfall-correct}
%
For some positive integer $t < \nbase$,
  suppose that $\bM \defas (M_1, \dots, M_n)$ minimizes
  the objective function $\Delta_g(\bp, \cdot)$
  over the set $\Assignments[n,t]$.
Then $\bM^{+}$ defined by $M^+_i \defas M_i + \Indicator_{i=u}$
  minimizes $\Delta_g(\bp, \cdot)$
  over $\Assignments[n,t+1]$, where
  \begin{align}
  u \defas \argmin_{i=1,\dots,n} \left\lbrace
    p_i \left[ g\pfrac{M_i+1}{\nbase{p_i}} - g\pfrac{M_i}{\nbase{p_i}} \right]
    \right\rbrace.
    \label{eq:pelt}
  \end{align}
\end{theorem}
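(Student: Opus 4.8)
The plan is to establish this as an instance of \emph{incremental greedy optimality} for the separable convex objective $\Delta_g(\bp,\cdot)$: because each summand $M_i \mapsto p_i g(M_i/(\nbase p_i))$ is convex in the integer $M_i$ (this is exactly the content of \eqref{eq:Arawak}), the assignment obtained from an optimum at total mass $t$ by charging one extra unit to the bin $u$ of least marginal cost should again be an optimum at total mass $t+1$. All differences are taken relative to $\bM$, so write $\tau \defas \Diff{u}{0}{1}$, the quantity that $u$ minimizes over $i$; then $\tau = \min_i \Diff{i}{0}{1}$, and since only bin $u$ changes in passing from $\bM$ to $\bM^{+}$ we have $\tau = \Delta_g(\bp,\bM^{+}) - \Delta_g(\bp,\bM)$. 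It therefore suffices to prove $\Delta_g(\bp,\bN) - \Delta_g(\bp,\bM) \ge \tau$ for \emph{every} $\bN \in \Assignments[n,t+1]$.

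First I would extract the optimality conditions that $\bM$ satisfies as a minimizer over $\Assignments[n,t]$. Moving a single unit from any nonempty bin $i$ to any other bin $j$ yields another element of $\Assignments[n,t]$, so optimality forces $\Diff{i}{0}{-1} + \Diff{j}{0}{1} \ge 0$, i.e.\ $\Diff{i}{-1}{0} \le \Diff{j}{0}{1}$, for all $i$ with $M_i \ge 1$ and all $j \ne i$. Folding in the within-bin monotonicity \eqref{eq:Arawak} (which covers the excluded case $j=i$ through $\Diff{i}{-1}{0} \le \Diff{i}{0}{1}$) yields the key sandwich
\begin{align*}
\max_{i : M_i \ge 1} \Diff{i}{-1}{0} \;\le\; \min_{j} \Diff{j}{0}{1} \;=\; \tau.
\end{align*}
Thus $\tau$ dominates the cost of \emph{removing} any unit already placed and is dominated by the cost of \emph{adding} any next unit.

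Next I would expand the objective gap termwise. Writing $d_i \defas N_i - M_i$, we have $\Delta_g(\bp,\bN) - \Delta_g(\bp,\bM) = \sum_{i=1}^{n} \Diff{i}{0}{d_i}$, and the claim reduces to the single per-bin bound $\Diff{i}{0}{d_i} \ge d_i\,\tau$, valid for each sign of $d_i$. For $d_i > 0$, telescoping and \eqref{eq:Arawak} give $\Diff{i}{0}{d_i} \ge d_i \Diff{i}{0}{1} \ge d_i \tau$, since each of the $d_i$ forward increments is at least $\Diff{i}{0}{1} \ge \tau$; for $d_i < 0$ the same monotonicity applied to the $\abs{d_i}$ backward increments (each at most $\Diff{i}{-1}{0} \le \tau$) gives $\Diff{i}{0}{d_i} \ge d_i\,\tau$ again, the inequality flipping because $d_i<0$; and $d_i = 0$ is trivial. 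Summing, $\sum_i \Diff{i}{0}{d_i} \ge (\sum_i d_i)\tau = \tau$, since $\sum_i d_i = (t+1)-t = 1$, which is precisely $\Delta_g(\bp,\bN) \ge \Delta_g(\bp,\bM^{+})$.

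The main obstacle is the derivation of the sandwich and the bookkeeping that turns it into the per-unit bounds, since its validity rests on two facts to invoke with care: the full convexity chain \eqref{eq:Arawak}--\eqref{eq:legalness}, so that marginal costs are monotone within each bin and only the exchange ``remove the most expensive placed unit, add the cheapest next unit'' is relevant, and the fact that optimality of $\bM$ rules out \emph{all} profitable unit transfers, not merely adjacent ones. A secondary point to dispatch is infinite divergences: when some $N_i = 0$ forces a boundary evaluation of $g$, the convention $\Diff{i}{a}{b} \defas \infty$ for out-of-range arguments (Definition~\ref{def:delta-error}) makes $\Delta_g(\bp,\bN) \ge \Delta_g(\bp,\bM^{+})$ hold vacuously, so no separate argument is needed there.
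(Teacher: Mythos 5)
Your proof is correct, but it is not the argument the paper gives. The paper proves this theorem by contradiction with a four-way case analysis on how $M'_u$ compares to $M_u$ for a hypothetical better assignment $\bM'$ (equal, one larger, at least two larger, smaller), resolving each case by exhibiting a profitable unit exchange, and resolving the case $M'_u = M_u + 1$ by a separate ``tail optimality'' argument (the sub-assignment $(M_2,\dots,M_n)$ must already be optimal for its own budget). Your route is the classical direct argument for greedy incrementation of a separable convex objective: from optimality of $\bM$ over $\Assignments[n,t]$ you extract the exchange inequality $\Diff{i}{-1}{0} \le \Diff{j}{0}{1}$ for all $i$ with $M_i \ge 1$ and all $j$, hence the sandwich $\max_{i:M_i\ge1}\Diff{i}{-1}{0} \le \tau \defas \min_j \Diff{j}{0}{1}$, and then prove the per-bin linear bound $\Diff{i}{0}{d_i} \ge d_i\tau$ by telescoping \eqref{eq:Arawak}, summing over $i$ with $\sum_i d_i = 1$. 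Your version buys a cleaner, quantitative statement ($\Delta_g(\bp,\bN) \ge \Delta_g(\bp,\bM) + \tau$ for \emph{every} $\bN \in \Assignments[n,t+1]$) and avoids the paper's Case~2, which is the least self-contained step of the published proof; the paper's version buys locality (each contradiction is a single explicit exchange), matching the style of the neighboring proofs (Theorems~\ref{thm:opt-step-prune-correct} and~\ref{thm:opt-step-prune-runtime-increment}). Two small points to tighten: in the $d_i<0$ step you should say explicitly that such bins have $M_i \ge N_i + 1 \ge 1$, so the sandwich applies to them; and your final remark misattributes the $N_i=0$ boundary to the out-of-range convention of Definition~\ref{def:delta-error} --- the value $0$ is in range there, and what actually handles that boundary is that if $g$ blows up at $0$ then $\Delta_g(\bp,\bN)=\infty$ directly, so the desired inequality holds trivially. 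Note also that \eqref{eq:Arawak} is stated with strict inequalities but your argument only needs the non-strict form, which is all that convexity of $g$ guarantees.
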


\begin{proof}
%
Assume, for a contradiction, that there exists
  $\bM' \defas (M'_1, \dots, M'_n)$
  that minimizes $\Delta_g(\bp, \cdot)$
  over $\Assignments[n,t+1]$ with
  $\Delta_g(\bp, \bM') < \Delta_g(\bp, \bM^+)$.
Clearly $\bM' \ne \bM^+$.
We proceed in cases.

\begin{enumerate}[
  label={Case \arabic*:},
  wide]

\item $M'_u = M_u$.
Then there exists integers $j \ne t$ and $a \ge 1$
   such that $M'_j = M_j + a$.
%
Hence
  \begin{align}
  \Diff{u}{0}{1} < \Diff{j}{0}{1}
    \le \Diff{j}{(a-1)}{a} &= -\Diff{j}{a}{(a-1)} \label{eq:Dixie}\\
  \implies \Diff{u}{0}{1} + \Diff{j}{a}{(a-1)}  &< 0,
  \end{align}
  where the first inequality of \eqref{eq:Dixie}
    follows from the minimality of $u$ in \eqref{eq:pelt}
    and the second inequality of \eqref{eq:Dixie} follows
    from \eqref{eq:Arawak}.
Therefore, setting $M'_u \gets M_u + 1$
  and $M'_j \gets M_j + (a - 1)$
  gives a net reduction in the cost,
  a contradiction to the optimality of $\bM'$.

\item $M'_u = M_u+1$.
Assume without loss of generality (for this case) that $u = 1$.
Since $\bM' \ne \bM^+$, there exists an index $j > 1$
  such that $M'_j \ne M_j$.
There are $t+1 - (M_1+1) = t-M_1$
  remaining units to distribute among $(M'_2, \dots, M'_n)$.
From the optimality of $\bM$,
  the tail $(M_2, \dots, M_n)$ minimizes
  $\sum_{i=2}^{n}p_i g(M_i/\nbase{p_i})$
  among all tuples using $t - M_1$ units;
  otherwise a more optimal solution could be obtained by holding
  $M_1$ fixed and optimizing $(M_2, \dots, M_n)$.
It follows that the
  tail $(M'_2, \dots, M'_n)$ of $\bM'$ is less optimal
  than the tail $(M_2, \dots, M_n)$ of $\bM^{+}$,
  a contradiction to the optimality of $\bM'$.

\item $M'_u = M_u + c$ for some integer $c \ge 2$.
Then there exists some $j \ne t$ such that
  $M'_j = M_j - a$ for some integer $a \ge 1$.
From the optimality of $\bM$, any move must increase
  the objective, i.e.,
  \begin{align}
  \Diff{u}{0}{1} &> \Diff{j}{-1}{0}.
  \label{eq:wisht}
  \end{align}
Combining \eqref{eq:Arawak} with \eqref{eq:wisht} gives
  \begin{align}
  \Diff{u}{(c-1)}{c}
    \ge \Diff{u}{0}{1} > \Diff{j}{-1}{0}
    \ge \Diff{j}{-a}{-(a-1)} \\
  \implies \Diff{u}{c}{(c-1)} + \Diff{j}{-a}{-(a-1)} < 0
  \end{align}
Therefore, setting $M'_u \gets M_u + (c-1)$
  and $M'_j \gets M_j - (a-1)$
  gives a net reduction in the cost,
  a contradiction to the optimality of $\bM'$.

\item $M'_u = M_u - a$ for some integer $a \ge 1$.
This case is symmetric to the previous one.
\qedhere
\end{enumerate}
\end{proof}

By a proof symmetric to that of
  Theorem~\ref{thm:opt-step-shortfall-correct}, we obtain the following.
\begin{corollary}
  If $\bM$ minimizes $\Delta_g(\bp, \cdot)$
  over $\Assignments[n,t]$ for some $t \le \nbase$,
  then the assignment $\bM^{-}$ with
  $M^{-}_i \defas M_i - \Indicator_{i=u}$
  minimizes
  $\Delta_g(\bp, \cdot)$ over $\Assignments[n,t-1]$,
  where $u \defas \argmin_{i=1,\dots,n}\Delta^i[0\to{-1};\bM, \nbase]$.
\end{corollary}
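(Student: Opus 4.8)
The plan is to mirror the proof of Theorem~\ref{thm:opt-step-shortfall-correct} with the direction of every marginal move reversed. The removal analog of the minimality condition~\eqref{eq:pelt} is that $u$ minimizes $\Diff{i}{0}{-1}$ over $i$; since $\Diff{i}{0}{-1} = \infty$ whenever $M_i = 0$, this $u$ automatically has $M_u \ge 1$, so $\bM^{-}$ is a well-defined member of $\Assignments[n,t-1]$. Using the identity $\Diff{i}{a}{b} = -\Diff{i}{b}{a}$, the condition is equivalent to $\Diff{u}{-1}{0} \ge \Diff{j}{-1}{0}$ for every $j$: among all bins, the one whose last unit carries the largest marginal cost is the cheapest to remove from.

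First I would assume for a contradiction that some $\bM' \defas (M'_1, \dots, M'_n)$ minimizes $\Delta_g(\bp, \cdot)$ over $\Assignments[n,t-1]$ with $\Delta_g(\bp, \bM') < \Delta_g(\bp, \bM^{-})$, so that $\bM' \ne \bM^{-}$. I would then split into four cases according to the value of $M'_u$ relative to $M_u$, exactly paralleling the theorem: $M'_u = M_u$; $M'_u = M_u - 1$ (so $M'_u = M^{-}_u$); $M'_u = M_u - c$ with $c \ge 2$; and $M'_u = M_u + a$ with $a \ge 1$. In the first, third, and fourth cases the constraint $\sum_i M'_i = t - 1$ forces a compensating discrepancy in some bin $j \ne u$, and I would exhibit a cost-reducing two-bin adjustment on $\bM'$ by chaining the reversed minimality of $u$ with the convexity chain~\eqref{eq:Arawak} and its telescoped form~\eqref{eq:legalness}, contradicting the optimality of $\bM'$. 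The middle case $M'_u = M^{-}_u$ is handled by the tail-optimality argument of Case~2 of the theorem: holding the $u$-th coordinate at $M_u - 1$, both $\bM'$ and $\bM^{-}$ distribute the same residual $t - M_u$ units over the remaining bins, and by the optimality of $\bM$ over $\Assignments[n,t]$ the tail $(M_i)_{i \ne u}$ is optimal for that residual, so $\bM'$ cannot strictly improve on it.

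I expect the main obstacle to be the bookkeeping of inequalities under sign reversal: each appeal to~\eqref{eq:Arawak} must be taken in the monotone direction appropriate to downward moves, and in each case one must check that the compensating bin $j$ admits the claimed adjustment without leaving $\set{0,\dots,\nbase}$, so that every $\Diff{j}{\cdot}{\cdot}$ invoked is finite. Because~\eqref{eq:Arawak}, \eqref{eq:legalness}, and the identity $\Diff{i}{a}{b} = -\Diff{i}{b}{a}$ are all symmetric under negation of the displacement indices, each inequality appearing in the theorem's four cases has an exact reversed counterpart, and no genuinely new estimate is needed beyond this translation.
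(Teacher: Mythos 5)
Your proposal coincides with the paper's own treatment: the paper proves this corollary with the single remark that it follows ``by a proof symmetric to that of Theorem~\ref{thm:opt-step-shortfall-correct},'' and your four-case mirroring --- including the tail-optimality argument for the case $M'_u = M_u - 1$ and the sign-reversed use of \eqref{eq:Arawak} and \eqref{eq:legalness} in the remaining cases --- is precisely that symmetric argument. Your observation that $\Delta^i[0\to{-1};\bM,\nbase] = \infty$ whenever $M_i = 0$, so that $M_u \ge 1$ and $\bM^-$ is well defined, is a detail the paper leaves implicit.
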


\subsubsection{Theoretical Analysis: Runtime}
\label{ssubsec:optimal-algorithm-runtime}

We next establish that Algorithm~\ref{alg:optimization}
  halts by showing the loops in Step~\ref{algline:opt-step-prune} and
  Step~\ref{algline:opt-step-shortfall} execute for at most $n$
  iterations.
Recall that Theorem~\ref{thm:opt-step-prune-correct} established that if the
  loop in Step~\ref{algline:opt-step-prune} halts, then it halts with
  an optimal assignment.
The next two theorems together establish
  this loop halts in at most $n$ iterations.

\begin{theorem}
\label{thm:opt-step-prune-runtime-increment}
In the loop in Step~\ref{algline:opt-step-prune} of
  Algorithm~\ref{alg:optimization}, there is no index $j \in [n]$ for
  which $M_j$ is incremented at some iteration of the loop and then
  decremented at a later iteration.
\end{theorem}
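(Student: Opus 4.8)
The plan is to follow the greedy ``unit-moving'' structure of the loop in Step~\ref{algline:opt-step-prune} and track, at each assignment $\bM$ reached during the loop, two scalar summaries: the cheapest marginal cost of an increment, $\alpha(\bM) \defas \min_{i} \Diff{i}{0}{1}$, and the largest marginal saving of a decrement, $\beta(\bM) \defas \max_{i} \Diff{i}{-1}{0}$ (both evaluated at the current $\bM$). In this notation a single iteration moves one unit from $j'$ to $j$ exactly when the selected pair satisfies $\Diff{j}{0}{1} + \Diff{j'}{0}{-1} < 0$, equivalently $\Diff{j}{0}{1} < \Diff{j'}{-1}{0}$. First I would prove a structural lemma: whenever a move occurs with pair $(j,j')$, the increment index attains the global minimum, $\Diff{j}{0}{1} = \alpha(\bM)$, and the decrement index attains the global maximum, $\Diff{j'}{-1}{0} = \beta(\bM)$. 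This is an exchange argument: if some $m \ne j$ had $\Diff{m}{-1}{0} > \Diff{j'}{-1}{0}$ then the pair $(j,m)$ would have a strictly smaller objective, contradicting the choice of $(j,j')$; the only escape $m=j$ is blocked because $\Diff{j}{-1}{0} < \Diff{j}{0}{1}$ by \eqref{eq:Arawak}, which is incompatible with the move being improving. The statement $\Diff{j}{0}{1} = \alpha(\bM)$ is symmetric.

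Second I would show that $\alpha(\bM)$ is non-decreasing along the sequence of assignments produced by the loop. A move increments $M_j$, so the increment index's marginal cost rises from $\Diff{j}{0}{1}$ to the next link $\Diff{j}{1}{2}$ of the increasing chain \eqref{eq:Arawak}; it decrements $M_{j'}$, so that index's \emph{new} marginal increment cost equals the old $\Diff{j'}{-1}{0} = \beta(\bM)$, and $\beta(\bM) > \alpha(\bM)$ precisely because the move is improving. Every other index is untouched and already had marginal cost $\ge \alpha(\bM)$. Hence the two changed values both exceed the old $\alpha(\bM)$ and all others are at least $\alpha(\bM)$, so the minimum over all indices cannot decrease.

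The crucial observation for the final step is that immediately after $M_j$ is incremented, the newly created last unit has marginal saving $\Diff{j}{-1}{0}$ equal to the marginal cost $\Diff{j}{0}{1}$ that was just paid, i.e. equal to $\alpha(\bM)$ at the moment of the increment. Now suppose for contradiction that some index $j$ is incremented and then decremented at a later iteration. Let $t$ be the first decrement of $j$ that is preceded by an increment of $j$, and let $s$ be the most recent increment of $j$ before $t$; one checks that $j$ is untouched strictly between $s$ and $t$ (no later increment by maximality of $s$, no intervening decrement by minimality of $t$), so its marginal saving $\Diff{j}{-1}{0}$ stays frozen at $\alpha(\bM^{(s)})$. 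At step $t$ the structural lemma forces the decrement index $j$ to attain $\beta(\bM^{(t)})$, so $\beta(\bM^{(t)}) = \alpha(\bM^{(s)})$; the move being improving then gives $\alpha(\bM^{(t)}) < \beta(\bM^{(t)}) = \alpha(\bM^{(s)})$, contradicting the monotonicity $\alpha(\bM^{(s)}) \le \alpha(\bM^{(t)})$ proved above.

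The hard part will be the structural lemma and the monotonicity bookkeeping, not the final contradiction. Because the pair $(j,j')$ is chosen jointly under the constraint $j \ne j'$, I must handle the degenerate possibility that the global $\argmin$ of $\Diff{\cdot}{0}{1}$ and the global $\argmax$ of $\Diff{\cdot}{-1}{0}$ coincide at a single index $i_0$; in that case $\Diff{j'}{-1}{0} \le \Diff{i_0}{-1}{0} < \Diff{i_0}{0}{1} \le \Diff{j}{0}{1}$ for every admissible pair, so no improving move exists and the loop has already converged. Making precise that an increment shifts $\Diff{j}{0}{1}$ up exactly one link of \eqref{eq:Arawak} while a decrement shifts $\Diff{j'}{0}{1}$ down to the old $\Diff{j'}{-1}{0}$ is the other place where care is needed to keep the monotonicity of $\alpha$ airtight.
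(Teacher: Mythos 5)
Your proof is correct, and it takes a genuinely different route from the paper's. The paper localizes the contradiction at two specific iterations: it lets $s$ be the first offending decrement, say of index $j$, lets $r$ be the most recent prior increment of $j$, and then contradicts the optimality of the pair selected at iteration $r$ by combining the move condition at $s$ (Eq.~\eqref{eq:cacorrhinia}) with the freezing identity Eq.~\eqref{eq:ergates} (your observation that, right after an increment, $\Diff{j}{-1}{0}$ equals the cost just paid) and the monotonicity claim Eq.~\eqref{eq:snouted} about the index $j'$ incremented at iteration $s$. You instead build a loop invariant: the structural lemma that every executed move pays exactly $\alpha(\bM)\defas\min_i \Diff{i}{0}{1}$ and saves exactly $\beta(\bM)\defas\max_i \Diff{i}{-1}{0}$, together with the fact that $\alpha$ never decreases across iterations, yielding the contradiction $\alpha(\bM^{(t)}) < \beta(\bM^{(t)}) = \alpha(\bM^{(s)}) \le \alpha(\bM^{(t)})$. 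Both arguments rest on the same primitives (greedy optimality of the selected pair, the convexity chain \eqref{eq:Arawak}, and the frozen marginal value of $j$ between its last increment and the offending decrement), but your potential-function organization buys real robustness: it never has to reason about the history of the \emph{co-selected} index. That is precisely the delicate step in the paper's proof: Eq.~\eqref{eq:snouted} is justified there by asserting that $j'$ ``must have only experienced increments'' before iteration $s$, which does not obviously follow from the minimality of $s$ --- an index can be decremented without ever having been incremented and then incremented later, without completing an increment-then-decrement pattern --- so patching that case requires a symmetric decrement-then-increment argument; your monotone potential sidesteps the issue entirely. Two bookkeeping points to make explicit in a full write-up: define $\beta$ as the maximum of $\Diff{i}{-1}{0}$ over indices with $M_i \ge 1$ only (under the convention of Definition~\ref{def:delta-error} an infeasible move has cost $+\infty$, so an infeasible decrement must contribute $-\infty$, not $+\infty$, to this maximum); and for a general convex $g$ the chain \eqref{eq:Arawak} holds only with non-strict inequalities, which is all your monotonicity step needs --- the strictness in your final contradiction comes solely from the loop's requirement that an executed move strictly decreases the objective.
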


\begin{proof}
The proof is by contradiction.
Suppose that iteration $s$ is the first iteration of the loop
  where some index $j$ was decremented, having only experienced
  increments (if any) in the previous iterations $1, 2, \dots, s-1$.
Let $r \le s-1$ be the iteration at which $j$ was most recently
  incremented, and $j''$ the index of the element which was decremented
  at iteration $r$ so that
  \begin{align}
  \Difff{j}{0}{1}{\bM_r} + \Difff{j''}{0}{-1}{\bM_r} < 0,
  \label{eq:whigship}
  \end{align}
  where $\bM_q$ denotes the assignment at the beginning of
  any iteration $q$ $(q=1,\dots,s)$.

The following hold:
  \begin{align}
  \Difff{j'}{0}{1}{\bM_s} + \Difff{j}{0}{-1}{\bM_s} &< 0, \label{eq:cacorrhinia}\\
  \Difff{j}{0}{1}{\bM_r} &= -\Difff{j}{0}{-1}{\bM_s}, \label{eq:ergates} \\
  \Difff{j'}{0}{1}{\bM_r} &\le \Difff{j'}{0}{1}{\bM_s}, \label{eq:snouted}
  \end{align}
where~\eqref{eq:cacorrhinia} follows from the fact that $j$ is decremented
  at iteration $s$ and $j'$ is the corresponding index
  which was incremented that gives a net decrease in the
  error;
\eqref{eq:ergates} follows from the hypothesis that $r$ was
  the most recent iteration at which $j$ was incremented;
and~\eqref{eq:snouted} follows from the hypothesis on iteration $s$,
  which implies that $j'$ must have only experienced increments
  at iterations $1, \dots, s-1$
  and the property of $\Delta^{j'}$ from \eqref{eq:Arawak}.
These together yield
  \begin{align}
  \Difff{j'}{0}{1}{\bM_r}
    &\le \Difff{j'}{0}{1}{\bM_s}
    < -\Difff{j}{0}{1}{\bM_s}
    = \Difff{j}{0}{1}{\bM_r},
  \label{eq:telegraph}
  \end{align}
  where the first inequality follows from
    \eqref{eq:snouted}, the second inequality from \eqref{eq:cacorrhinia},
    and the final equality from \eqref{eq:ergates}.
But \eqref{eq:telegraph} implies that the
  pair of indices $(j, j'')$ selected \eqref{eq:whigship}
  at iteration $r$ was not an optimal choice, a contradiction.
\end{proof}

\begin{theorem}
\label{thm:opt-step-prune-runtime}
The loop in Step~\ref{algline:opt-step-prune} of
  Algorithm~\ref{alg:optimization} halts in at most $n$
  iterations.
\end{theorem}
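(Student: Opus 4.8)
The plan is to combine the monotonicity established in Theorem~\ref{thm:opt-step-prune-runtime-increment} with a bound on how far any single coordinate can travel, so that every coordinate is touched in at most one iteration. First I would record two preliminaries. (i) The loop terminates: each executed iteration strictly decreases $\Delta_g(\bp,\cdot)$ (it fires only when the selected pair satisfies $\epsilon(\bM,j,+1)+\epsilon(\bM,j',-1)<0$), and since $\Assignments[n,t]$ is finite no assignment can recur, so only finitely many iterations occur; by Theorem~\ref{thm:opt-step-prune-correct} the terminal assignment $\bM^*$ minimizes $\Delta_g(\bp,\cdot)$ over $\Assignments[n,t]$, where $t$ is the sum produced by the initialization in Step~\ref{algline:opt-step-initialize}. (ii) Theorem~\ref{thm:opt-step-prune-runtime-increment}, together with its symmetric counterpart (no index is decremented and then later incremented, proved identically), shows that over the whole run every coordinate moves monotonically: it is either only ever incremented or only ever decremented.

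The crux is then a ``box'' lemma: the optimum satisfies $M^*_i \in \set{\floor{\nbase p_i},\ceil{\nbase p_i}}$ for every $i$. I would prove this by a single-unit exchange argument. If some $M^*_a > \ceil{\nbase p_a}$, then because $t \le \sum_i \ceil{\nbase p_i}$ there is a coordinate $b$ with $M^*_b < \ceil{\nbase p_b}$; moving one unit from $a$ to $b$ cannot increase $\Delta_g$ by optimality, yet using the convexity chain~\eqref{eq:Arawak} for the common generator $g$ one shows that the marginal saving $\Diff{a}{0}{-1}$ at $a$ (a step taken entirely on the increasing side of the per-coordinate cost, since $M^*_a-1 \ge \ceil{\nbase p_a} \ge \nbase p_a$) outweighs the marginal price $\Diff{b}{0}{1}$, contradicting optimality. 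The case $M^*_a < \floor{\nbase p_a}$ is symmetric.

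I would then assemble the bound. The initialization in Step~\ref{algline:opt-step-initialize} always sets $M_i \in \set{\floor{\nbase p_i},\floor{\nbase p_i}+1}$, and by the box lemma $M^*_i \in \set{\floor{\nbase p_i},\ceil{\nbase p_i}}$; since $\ceil{\nbase p_i}\in\set{\floor{\nbase p_i},\floor{\nbase p_i}+1}$, both the initial and the final value of each coordinate lie in $\set{\floor{\nbase p_i},\floor{\nbase p_i}+1}$, so $\abs{M^*_i-M_i}\le 1$. Combined with monotonicity, each coordinate is therefore modified in at most one iteration. Because every iteration increments exactly one coordinate, the number of iterations equals the total number of increments, which is at most the number of coordinates whose value increased, hence at most $n$.

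I expect the box lemma to be the main obstacle. The exchange argument is immediate when the receiving coordinate $b$ stays on one side of its cost minimum, but when the one permitted step at $b$ straddles $\nbase p_b$ one must compare the marginal slopes of $g$ across two \emph{different} coordinates, and the convexity bookkeeping must be done carefully to extract a \emph{strict} improvement---particularly for generators such as total variation that are only weakly convex, where ties would otherwise need to be excluded by the fact that the loop never performs a cost-neutral swap.
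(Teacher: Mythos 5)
Your argument fails at the ``box'' lemma: for general $f$-divergences it is simply false, and no amount of careful bookkeeping will rescue the exchange argument. The root cause is that the sum $t$ produced by Step~\ref{algline:opt-step-initialize} generally differs from $\nbase$, and the minimizer over $\Assignments[n,t]$ then rebalances \emph{every} coordinate toward $t p_i$ rather than staying near $\nbase p_i$. Concretely, take relative entropy ($g(t)=t\log t$), $n=10$, $\nbase=100$, $p_1=0.829$, $p_2=\dots=p_{10}=0.019$, so $\nbase p_1 = 82.9$ and $\nbase p_i = 1.9$ for $i\ge 2$. Since $g<0$ on $(0,1)$ and $g>0$ on $(1,\infty)$, Step~\ref{algline:opt-step-initialize} sets $M_i=\floor{\nbase p_i}$ for every $i$, giving $\bM=(82,1,\dots,1)$ and $t=91$. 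The \emph{only} element of $\Assignments[10,91]$ satisfying your box constraint $M_i\in\set{\floor{\nbase p_i},\ceil{\nbase p_i}}$ is $(82,1,\dots,1)$ itself (with $M_1=83$ the remaining nine coordinates cannot sum to $8$), yet $(73,2,2,\dots,2)\in\Assignments[10,91]$ has strictly smaller divergence: in natural-log units, $\sum_i M_i\ln\bigl(M_i/(\nbase p_i)\bigr)$ equals about $-8.36$ for $(73,2,\dots,2)$ versus $-6.67$ for $(82,1,\dots,1)$. So the optimum lies far outside the box, and the first coordinate must travel nine units, not one. The obstacle you flagged as delicate bookkeeping is in fact fatal: the chain \eqref{eq:Arawak} orders marginal costs only \emph{within} a single coordinate, while across coordinates the per-coordinate cost $p_i\,g\bigl(M_i/(\nbase p_i)\bigr)$ is minimized near $M_i = x^{*}\nbase p_i$, where $x^{*}$ is the minimizer of $g$ (for $t\log t$, $x^{*}=1/e$), not near $\nbase p_i$; when $t<\nbase$ the constrained optimum therefore drifts toward $M_i\approx t p_i$, which for a heavy coordinate can sit arbitrarily many units below $\floor{\nbase p_i}$.

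What is true---and what the paper proves---is a \emph{one-sided} invariant, with the side dictated by the sign of $g$ around $1$ (which also dictates the direction of the initialization). When $g>0$ on $(1,\infty)$, Step~\ref{algline:opt-step-initialize} starts at $M_i=\floor{\nbase p_i}$ and the loop maintains $M_i \le \floor{\nbase p_i}+1$; combined with the monotonicity of Theorem~\ref{thm:opt-step-prune-runtime-increment}, each coordinate is \emph{incremented} at most once. Since every iteration performs exactly one increment (paired with one decrement), counting increments alone already bounds the number of iterations by $n$, even though a single coordinate may be decremented many times---nine times in the example above. The symmetric case ($g<0$ on an interval just above $1$) bounds decrements instead, and the case $g>0$ on both sides of $1$ gives zero iterations because the initialization is then already optimal. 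So your overall scaffolding (termination, monotonicity, per-coordinate move counting) is sound, but the two-sided box must be replaced by the correct one-sided invariant, at which point the argument becomes the paper's proof.
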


\begin{proof}
Theorem~\ref{thm:opt-step-prune-runtime-increment}
  establishes that once an item is decremented
  it will never incremented at a future step; and once an item
  is incremented it will never be decremented at a future step.
To prove the bound of halting within $n$ iterations,
  we show that there are at most $n$ increments/decrements
  in total.
We proceed by a case analysis on the generating function $g$.

\begin{enumerate}[
  label={Case \arabic*:},
  wide,]

\item $g > 0$ is a positive generator.
In this case, we argue that the values
  $(M_1, \dots, M_n)$ obtained in Step~\ref{algline:opt-step-initialize}
  are already initialized to the global minimum, and so the loop
  in Step~\ref{algline:opt-step-prune} is never entered.
By the hypothesis $g > 0$, it follows that
  $g$ is decreasing on $(0,1)$ and
  increasing on $(1,\infty)$:
  \begin{align}
  g\pfrac{0}{\nbase{p_i}} > \dots > g\pfrac{\floor{\nbase{p_i}}}{\nbase{p_i}},
  &&
  g\pfrac{\floor{\nbase{p_i}}+1}{\nbase{p_i}} < \dots < g\pfrac{\nbase}{\nbase{p_i}}.
  \end{align}
Therefore, the function $g_i(m) \defas p_ig(m/(\nbase{p_i}))$
  attains its minimum at either $m = \floor{\nbase{p_i}}$ or
  $m = \floor{\nbase{p_i}}+1$.
Since the objective function is a linear sum of the $g_i$,
  minimizing each term individually
  attains the global minimum.
The loop in Step~\ref{algline:opt-step-prune} thus
  executes for zero iterations.

\item $g > 0$ on $(1, \infty)$ and $g < 0$ on an interval
  $(\gamma, 1)$ for some $0 < \gamma < 1$.
The main indices $i$ of interest are those for which
  \begin{align}
  \gamma < \frac{\floor{\nbase{p_i}}}{\nbase{p_i}} < 1 < \frac{\floor{\nbase{p_i}}+1}{\nbase{p_i}},
  \end{align}
  since all indices for which
    $g(\floor{\nbase{p_i}}/(\nbase{p_i})) > 0$ and
    $g((\floor{\nbase{p_i}}+1)/(\nbase{p_i})) > 0$
    are covered by the previous case.
Therefore we may assume that
  \begin{align}
  \gamma \le \min_{i=1,\dots,n}\left(\frac{\floor{\nbase{p_i}}}{\nbase{p_i}}\right),
  \end{align}
  with $g$ increasing on $(\gamma, \infty)$.
(The proof for general $\gamma$ is a straightforward extension
  of the proof presented here.)
We argue that the loop maintains the invariant
  $M_i \le \floor{\nbase{p_i}} + 1$ for each $i=1,\dots,n$.

The proof is by induction on the iterations of the loop.
For the base case, observe that
  \begin{align}
  g\pfrac{\floor{\nbase{p_i}}}{\nbase{p_i}} < 0 < g\pfrac{\floor{\nbase{p_i}}+1}{\nbase{p_i}}
    && (i=1,\dots,n),
  \end{align}
  which follows from the hypothesis on $g$ in this case.
The values after Step~\ref{algline:opt-step-initialize}
  are thus $M_i = \floor{\nbase{p_i}}$ for each $i=1,\dots,n$.
The first iteration performs one increment/decrement so the bound holds.

For the inductive case,
  assume that the invariant holds for iterations $2, \dots, s-1$.
Assume, towards a contradiction, that in iteration $s$ there exists
  $M_j = \floor{\nbase{p_j}} + 1$ and $M_j$ is incremented.
Let $M_u$ be the corresponding element that is decremented.
We analyze two cases on $M_u$.

\begin{enumerate}[label={Subcase \arabic{enumi}.\arabic*:}, wide=2\parindent]
\item $M_u/(\nbase{p_u}) \le 1$. Then
  $M_u = \floor{\nbase{p_u}} -a$ for some integer $a \ge 0$.
But then
  \begin{align}
  (M_u-a-1)/\nbase{p_u} < (M_u-a)/\nbase{p_u} < 1 < (M_j+1)/\nbase{p_j} < (M_j+2)/\nbase{p_j}
  \end{align}
  and
  \begin{align}
  p_j g\pfrac{M_j+2}{\nbase{p_j}} + p_u g\pfrac{M_u - a - 1}{\nbase{p_u}}
    &< p_j g\pfrac{M_j+1}{\nbase{p_j}} + p_u g\pfrac{M_u - a}{\nbase{p_u}} \\
  \iff \frac{\displaystyle
    g\pfrac{M_j+2}{\nbase{p_j}} - g\pfrac{M_j+1}{\nbase{p_j}}}{1/(\nbase{p_j})}
    &< \frac{\displaystyle
      g\pfrac{M_u - a}{\nbase{p_u}} - g\pfrac{M_u - a - 1}{\nbase{p_u}}}{1/(\nbase{p_u})},
  \end{align}
  a contradiction to the convexity of $g$.

\item $1 \le M_u/(\nbase{p_u})$.
  By the inductive hypothesis, it must be that
    $M_u = \floor{\nbase{p_u}} + 1$.
Since the net error of the increment and corresponding
  decrement is negative in the \emph{if} branch of Step~\ref{algline:opt-step-prune},
  $\Diff{j}{1}{2} + \Diff{l}{1}{0} < 0$,
  which implies
  \begin{align}
  \Diff{j}{1}{2} < -\Diff{l}{1}{0} = \Diff{l}{0}{1}.
  \end{align}
Since $\Diff{j}{0}{1} < \Diff{j}{1}{2}$
  from \eqref{eq:Arawak}, it follows that $M_j$
  should have been incremented at two previous
  iterations before having incremented $M_u \gets M_u + 1$,
  contradicting the minimality of the increments
  at each iteration.
\end{enumerate}

Since each $M_i$ is one greater than the
  initial value at the termination of the loop, and at each iteration
  one value is incremented, the loop terminates in at most $n$ iterations.

\item $g > 0$ on $(0, 1)$ and $g < 0$ on some interval $(1, \gamma)$
  for $1 < \gamma \le \infty$.
  The proof is symmetric to the previous case, with
    initial values $M_i = \floor{\nbase{p_i}} + 1$
    from Step~\ref{algline:opt-step-initialize}
    and invariant
    $M_i \ge \floor{\nbase{p_i}}$.
    \qedhere
\end{enumerate}
\end{proof}

\begin{remark}
\label{remark:opt-step-prune-nlogn}
The overall cost of Step~\ref{algline:opt-step-prune} is $O(n\log{n})$,
  since $(j,j')$ can be found in $O(\log{n})$ time by performing
  order-preserving insertions and deletions on a pair of initially
  sorted lists.
\end{remark}


\begin{theorem}
\label{thm:opt-step-shortfall-runtime}
The value $S$ defined in Step~\ref{algline:opt-step-shortfall} of
  Algorithm~\ref{alg:optimization} always satisfies $-(n-1) \le S \le n-1$.
\end{theorem}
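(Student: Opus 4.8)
The plan is to reduce the claim to the state of $\bM$ immediately after the initialization in Step~\ref{algline:opt-step-initialize}. The swap performed in the loop of Step~\ref{algline:opt-step-prune} increments one coordinate and decrements another, and therefore leaves $\sum_{i=1}^{n}M_i$—and hence $S$—unchanged; so it suffices to analyze the post-initialization values. Using $\sum_{i=1}^{n}\nbase p_i = \nbase$, I would rewrite
\begin{align*}
S = \sum_{i=1}^{n} M_i - \sum_{i=1}^{n}\nbase p_i = \sum_{i=1}^{n}\left(M_i - \nbase p_i\right),
\end{align*}
and note that $S$ is an integer. The whole theorem then follows if I can show $\abs{M_i - \nbase p_i} < 1$ for every $i$ at the end of Step~\ref{algline:opt-step-initialize}, since then $\abs{S} \le \sum_{i=1}^{n}\abs{M_i - \nbase p_i} < n$, which forces $-(n-1) \le S \le n-1$ for the integer $S$.

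Write $r_i \defas \nbase p_i - \floor{\nbase p_i} \in [0,1)$. Step~\ref{algline:opt-step-initialize} sets each $M_i$ to $\floor{\nbase p_i}$ or $\floor{\nbase p_i}+1$, so $M_i - \nbase p_i \in \set{-r_i,\ 1-r_i}$. The lower side is immediate and unconditional: $M_i - \nbase p_i \ge -r_i > -1$, whence $S > -n$ and so $S \ge -(n-1)$. For the upper side, $M_i - \nbase p_i \le 1 - r_i$, and summing gives $S \le n - \sum_{i=1}^{n} r_i$. The key auxiliary observation is that $F \defas \sum_{i=1}^{n} r_i = \nbase - \sum_{i=1}^{n}\floor{\nbase p_i}$ is a \emph{non-negative integer}, because both $\sum_{i=1}^{n}\nbase p_i = \nbase$ and $\sum_{i=1}^{n}\floor{\nbase p_i}$ are integers. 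Consequently, if at least one $\nbase p_i$ is non-integral then $F > 0$, hence $F \ge 1$, and $S \le n - 1$ follows at once.

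The main obstacle is the remaining case in which every $\nbase p_i$ is an integer—equivalently, $\bp$ is already a $\nbase$-type distribution and $F = 0$—where the crude estimate only yields $S \le n$ and I must rule out $S = n$, i.e.\ rule out that Step~\ref{algline:opt-step-initialize} increments every coordinate. Here I would show that an integral coordinate is never incremented: when $\nbase p_i \in \Integers$ the comparison in Step~\ref{algline:opt-step-initialize} is between $g(1) = 0$ and $g\left(1 + 1/(\nbase p_i)\right)$, so $M_i$ is left at $\floor{\nbase p_i} = \nbase p_i$ precisely when $g\left(1 + 1/(\nbase p_i)\right) \ge 0$.

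To secure this, I would invoke the standard normalization of an $f$-divergence generator (Definition~\ref{def:f-divergence}) that replaces $g$ by the convex, non-negative function obtained by subtracting its supporting line at $t = 1$; this does not change $\Delta_g$ and makes $g$ non-decreasing on $[1,\infty)$, so every comparison above keeps $M_i = \nbase p_i$ and each term $M_i - \nbase p_i$ vanishes, giving $S = 0 \le n-1$. Combining the two cases closes the argument. The only delicate point is precisely this local behavior of the generator at $t = 1$, which is where convexity together with the $g(1)=0$ normalization is essential; without it (for a generator strictly decreasing to the right of $1$) the initialization can over-increment an exactly representable coordinate, and the bound must be argued through this normalization rather than from the raw comparison alone.
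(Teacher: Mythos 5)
Your argument follows the same route as the paper's proof. Both reduce the claim to the values assigned in Step~\ref{algline:opt-step-initialize} (you make explicit, and the paper leaves implicit, that the swaps in Step~\ref{algline:opt-step-prune} preserve $\sum_i M_i$ and hence $S$); both obtain the lower bound from $0 \le \nbase p_i - \floor{\nbase p_i} < 1$ plus integrality; and both obtain the upper bound by observing that $S \le n - F$, where $F \defas \sum_{i=1}^{n}(\nbase p_i - \floor{\nbase p_i})$ is a non-negative integer, so that the only case requiring care is $F = 0$, i.e.\ every $\nbase p_i$ an integer. The paper settles that case with the parenthetical remark that then ``$M_i = \floor{\nbase p_i}$ would be the optimum for each $i$,'' which tacitly assumes $g$ is non-negative just to the right of $t = 1$; you correctly isolate this as the crux, since Step~\ref{algline:opt-step-initialize} keeps $M_i = \nbase p_i$ exactly when $g(1 + 1/(\nbase p_i)) \ge g(1) = 0$.

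The gap is in your repair. Subtracting the supporting line at $t = 1$ leaves $\Delta_g$ unchanged on pairs of probability distributions, but Algorithm~\ref{alg:optimization} evaluates the raw generator $g$ in Step~\ref{algline:opt-step-initialize} and in $\epsilon$, so the normalized generator defines a \emph{different run} of the algorithm; your argument bounds $S$ for that run, not for the one the theorem speaks about. The distinction has teeth: take $g(t) = -\log t$ (reverse relative entropy, admissible under Definitions~\ref{def:statistical-divergence} and~\ref{def:f-divergence}) or the $\alpha$-divergence generator of Table~\ref{tab:f-divergences}, both negative on an interval to the right of $1$, and take $\bp = (1/2, 1/2)$, $\nbase = 4$. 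Step~\ref{algline:opt-step-initialize} compares $g(1) = 0$ against $g(3/2) < 0$ and increments both coordinates; the loop in Step~\ref{algline:opt-step-prune} never fires, since every candidate pair has positive net cost; hence $S = 2 = n > n - 1$, and the stated bound fails. So for un-normalized generators no argument can close your remaining case, and a normalization performed ``on paper'' rather than inside the algorithm does not rescue it. To be fair, this defect is inherited from the paper: its own proof needs the same unstated hypothesis that $g \ge 0$ to the right of $1$, a hypothesis it cannot be implicitly making, since its Theorem~\ref{thm:opt-step-prune-runtime} (Case 3) explicitly analyzes generators that are negative there. Under that hypothesis (equivalently, with the generator canonically normalized before the algorithm is run), both your argument and the paper's go through, with yours being the more explicit about why.
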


\begin{proof}
The smallest value of $S$ is obtained when each $M_i = \floor{\nbase{p_i}}$,
  in which case
  \begin{align}
  0 \le \sum_{i=1}^n (\nbase{p_i} - \floor{\nbase{p_i}})
    \defas \sum_{i=1}^{n}\chi(\nbase{p_i}) \le n-1,
  \end{align}
  where the first inequality follows from $\floor{x} \le x$
  and the final inequality from the fact that $0 \le \chi(x) < 1$
  so that the integer $\sum_{i=1}^{n}\chi(\nbase{p_i}) < n$.
Therefore, $-S \le (n-1) \implies -(n-1) \le S$.
Similarly, the largest value of $S$ is obtained when
  each $M_i = \floor{\nbase{p_i}}+1$, so that
  \begin{align}
  \sum_{i=1}^n (\floor{\nbase{p_i}}+1 -\nbase{p_i})
  = \sum_{i=1}^n (1- \chi(\nbase{p_i}))
  = n - \sum_{i=1}^{n}\chi(\nbase{p_i})
  \le n -1.
  \end{align}
  Therefore, $S \le n-1$, where the final inequality uses the fact
  that $\chi(\nbase{p_i}) \ne 0$ for some $i$
  (otherwise, $M_i = \floor{\nbase{p_i}}$ would be the optimum for each $i$).
\end{proof}

Theorems~\ref{thm:opt-step-prune-correct}--\ref{thm:opt-step-shortfall-runtime}
  together imply Theorem~\ref{thm:optimization}.
Furthermore, using the implementation given in
  Remark~\ref{remark:opt-step-prune-nlogn}, the overall runtime of
  Algorithm~\ref{alg:optimization} is order $n\log{n}$.

Returning to Problem~\ref{problem:f-divergence-opt-precision},
  from Theorems~\ref{thm:k-bit-bases} and Theorem~\ref{thm:optimization},
  the approximation error can be minimized over the set of
  output distributions of all entropy-optimal $k$-bit samplers as follows:
\begin{enumerate*}[label=(\roman*)]
  \item for each $j=0,\dots,k$, let $\bM_j$ be the optimal $\nbase_{kj}$-type
    distribution for approximating
    $\bp$ returned by Algorithm~\ref{alg:optimization};
  \item let $l = \argmin_{0 \le j \le k}$ $\Delta_g(\bp, \bM_j)$;
  \item set $\hat{p}_i \defas M_{li}/\nbase_{kl}$ ($i=1,\dots,n$).
\end{enumerate*}
The optimal probabilities for any sampler that halts after consuming
  at most $k$ bits (as in Proposition~\ref{prop:opt-finite-entropy})
  are given by $\hat{p}_i \defas M_{ki}/\nbase_{kk}$.
The next theorem establishes that,
  when the $f$-divergence is total variation,
  the approximation error decreases proportionally to $1/\nbase$
  (the proof is in Appendix~\ref{appx:proofs}).

\begin{theorem}
\label{thm:opt-error-tv}
%
If $\Delta_g$ is the total variation divergence, then any optimal solution
  $\bM$ returned by Algorithm~\ref{alg:optimization} satisfies
  $\Delta_g(\bp, \bM) \le n/2\nbase$.
\end{theorem}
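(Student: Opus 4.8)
The plan is to reduce the statement to exhibiting a single good feasible assignment, leveraging Theorem~\ref{thm:optimization}, which guarantees that the assignment $\bM$ returned by Algorithm~\ref{alg:optimization} minimizes $\Delta_g(\bp,\cdot)$ over $\Assignments[n,\nbase]$. First I would specialize the objective: for the total variation generator $g(t) = \tfrac12\abs{t-1}$, Eq.~\eqref{eq:poi} becomes
\[
\Delta_g(\bp, \bM) = \sum_{i=1}^n \tfrac12 \abs{\tfrac{M_i}{\nbase p_i} - 1}\, p_i = \frac{1}{2\nbase}\sum_{i=1}^n \abs{M_i - \nbase p_i}.
\]
Thus it suffices to show that the minimum of $\sum_{i} \abs{M_i - \nbase p_i}$ over $\Assignments[n,\nbase]$ is at most $n$; dividing by $2\nbase$ then yields the claim. (Any coordinate with $p_i = 0$ must receive $M_i = 0$ in a minimizer and contributes nothing, so I assume $p_i > 0$ without loss of generality.)

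Next I would construct an explicit feasible assignment $\bM^{*}$ whose cost is below $n$. Write $\chi(\nbase p_i) \defas \nbase p_i - \floor{\nbase p_i} \in [0,1)$ and set $s \defas \nbase - \sum_{i=1}^n \floor{\nbase p_i}$. Since $\sum_i \nbase p_i = \nbase$ is an integer, $s = \sum_i \chi(\nbase p_i)$ is a nonnegative integer with $s < n$. Define $\bM^{*}$ by rounding every coordinate down, $M^{*}_i \defas \floor{\nbase p_i}$, and then incrementing $s$ of the coordinates whose fractional part is strictly positive. This is possible because the number of indices with $\chi(\nbase p_i) > 0$ strictly exceeds $s$ (otherwise $\sum_i \chi(\nbase p_i)$ would be strictly below $s$, contradicting $s = \sum_i \chi(\nbase p_i)$). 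By construction $\sum_i M^{*}_i = \sum_i \floor{\nbase p_i} + s = \nbase$ and $0 \le M^{*}_i \le \ceil{\nbase p_i} \le \nbase$, so $\bM^{*} \in \Assignments[n,\nbase]$.

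It then remains to bound the cost of $\bM^{*}$. For every coordinate, $M^{*}_i \in \set{\floor{\nbase p_i}, \floor{\nbase p_i}+1}$, so $\abs{M^{*}_i - \nbase p_i}$ equals either $\chi(\nbase p_i)$ or $1 - \chi(\nbase p_i)$, both of which lie strictly below $1$ (the increments were placed only where $\chi(\nbase p_i) > 0$). Hence $\sum_{i=1}^n \abs{M^{*}_i - \nbase p_i} < n$, giving $\Delta_g(\bp, \bM^{*}) < n/(2\nbase)$. Combining with the optimality of the algorithm's output yields $\Delta_g(\bp, \bM) \le \Delta_g(\bp, \bM^{*}) \le n/(2\nbase)$, as required.

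The only genuinely delicate point is verifying feasibility and the strict per-coordinate bound simultaneously: I must ensure that all $s$ round-ups can be placed on coordinates with nonzero fractional part, so that no individual term attains the value $1$. The counting argument above (the number of positive fractional parts exceeds $s$) is exactly what secures this, and everything else is routine arithmetic. An alternative route would be to argue directly from Algorithm~\ref{alg:optimization} that, for the total variation generator, its output satisfies $M_i \in \set{\floor{\nbase p_i}, \ceil{\nbase p_i}}$; but the feasibility-witness approach above is shorter and avoids reasoning about the loop invariants of Step~\ref{algline:opt-step-prune}.
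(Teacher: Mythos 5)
Your proof is correct, but it follows a genuinely different route from the paper's. The paper first establishes a structural lemma (Theorem~\ref{thm:opt-err-tv-bounds} in Appendix~\ref{appx:proofs}) about \emph{every} total-variation minimizer: by an exchange argument---if some $M_t \ge \floor{\nbase p_t} + 2$, there must exist $j$ with $M_j < \nbase p_j$, and moving one unit from $t$ to $j$ strictly decreases the cost, with a symmetric argument for the lower bound---it shows that any minimizer satisfies $\floor{\nbase p_i} \le M_i \le \floor{\nbase p_i} + 1$ for all $i$; the bound $\Delta_g(\bp,\bM) \le n/(2\nbase)$ then follows immediately from $\abs{M_i/\nbase - p_i} \le 1/\nbase$ and the $L_1$ form of total variation. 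You never analyze the minimizer at all: you exhibit a single feasible witness $\bM^{*}$ (round every coordinate down, then distribute the integer shortfall $s = \sum_{i}(\nbase p_i - \floor{\nbase p_i}) < n$ over coordinates with positive fractional part), show its cost is strictly below $n/(2\nbase)$, and transfer the bound to the algorithm's output via the optimality guarantee of Theorem~\ref{thm:optimization}. Your route is shorter, sidesteps the case analysis of the exchange argument, and in fact proves the marginally stronger strict bound $\Delta_g(\bp,\bM) < n/(2\nbase)$; what it forgoes is the structural characterization itself, which the paper's lemma provides and which is of independent interest (every TV-optimal $\bM$ is a rounding of $\nbase\bp$, a fact your argument neither needs nor yields). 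Two small remarks: your counting claim that the number of indices with positive fractional part strictly exceeds $s$ is false precisely when $s=0$, but harmlessly so, since no increments are placed in that case; and your final chain should end with a strict inequality, $\Delta_g(\bp,\bM) \le \Delta_g(\bp,\bM^{*}) < n/(2\nbase)$, which is what your per-coordinate bound actually delivers.
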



\section{Constructing entropy-optimal samplers}
\label{sec:ddg}

Now that we have described how to find a closest-approximation
  distribution for Problem~\ref{problem:f-divergence-opt} using
  Algorithm~\ref{alg:optimization}, we next describe how to
  efficiently construct an entropy-optimal sampler.

Suppose momentarily that we use the rejection method
  (Algorithm~\ref{alg:rejection-sample}) described in
  Section~\ref{subsec:introduction-existing-methods}, which can sample
  exactly from any $\nbase$-type distribution, which includes all
  distributions returned by Algorithm~\ref{alg:optimization}.
Since any closest-approximation distribution that is the output distribution
  of a $k$-bit entropy-optimal sampler has denominator
  $\nbase=\nbase_{kl} \le 2^k$, rejection sampling needs exactly $k$ bits
  of precision.
The expected number of trials is $2^k/\nbase$ and $k$ random bits are used
  per trial, so that $k 2^k/\nbase \le 2k$ bits per sample are consumed on average.
The following example illustrates that the entropy consumption of the
  rejection method can be exponentially larger than the entropy of
  $\bp$.

\begin{example}
\label{example:rejection-exponential}
Let $\bp = (a_1/2^k, \dots, a_n/2^k)$ with $n = k$. An
  entropy-optimal sampler uses at most $\log{n}$ bits per sample
  (Theorem~\ref{thm:ddg-knuth-yao}),
  whereas rejection (Algorithm~\ref{alg:rejection-sample})
  uses $n$ bits per sample.
\end{example}

We thus turn our attention toward constructing an entropy-optimal
  sampler that realizes the entropy-optimality guarantees from Theorem~\ref{thm:ddg-knuth-yao}.
%
For the data structures in this section we use a
  zero-based indexing system.
For positive integers $l$ and $k$,
  let $\bM \defas (M_0, \dots, M_{n-1})$ be the return value of
  Algorithm~\ref{alg:optimization} given denominator $\nbase_{kl}$.
Without loss of generality, we assume that
  \begin{enumerate*}[label=(\roman*)]
  \item $k$, $l$, and $M_i$ have been reduced so that
    some probability $M_i/\nbase_{kl}$ is in lowest terms; and
  \item for each $j$ we have $M_j < \nbase_{kl}$ (if $M_j = \nbase_{kl}$ for
  some $j$, then the sampler is degenerate: it always returns $j$).
  \end{enumerate*}

Algorithm~\ref{alg:ddg-matrix} shows the first stage of the
  construction, which returns the binary probability matrix $\bP$ of
  $\bM$.
The $i$th row contains the first $k$ bits in the concise binary expansion of
  $M_i/\nbase_{kl}$, where first $l$ columns encode the finite prefix and
  the final $k-l$ columns encode the infinitely repeating suffix.
Algorithm~\ref{alg:ddg-tree} shows the second stage, which converts
  $\bP$ from Algorithm~\ref{alg:ddg-matrix} into an entropy-optimal
  DDG tree $T$.
From Theorem~\ref{thm:ddg-knuth-yao}, $T$ has a node labeled $r$
  at level $c+1$ if and only if $\bP[r,c]=1$ (recall the root is at level $0$,
  so column $c$ of $\bP$ corresponds to level $c+1$ of $T$).
The $\textsc{MakeLeafTable}$ function returns a hash table $L$ that maps
  the level-order integer index $\idx$ of any leaf node in a complete binary
  tree to its label $L[\idx] \in \set{1, \dots, n}$
  (the index of the root is zero).
%
%
The labeling ensures that leaf nodes are filled right-to-left and
  are labeled in increasing order.
Next, we define a $\node$ data structure with fields $\lft$,
  $\rgt$, and $\lbl$, indicating the left child, right child, and
  outcome label (for leaf nodes).
The $\textsc{MakeTree}$ function builds the tree $T$
  from $L$, returning the $\rot$ node.
The function stores the list $A$ of ancestors at level $l$ in
  right-to-left order, and constructs back-edges from any non-leaf
  node at level $k-1$ to the next available ancestor at level $l$ to
  encode the recurring subtree.

Algorithm~\ref{alg:ddg-encoding} shows the third stage,
  which converts the $\rot$ node
  of the entropy-optimal DDG tree $T$ returned from
  Algorithm~\ref{alg:ddg-tree} into a sampling-efficient
  encoding $\enc$.
The \textsc{PackTree} function fills the array $\enc$ such that
  for an internal node $i$, $\enc[i]$ and $\enc[i+1]$ store the
  indexes of $\enc$ for the left and right child
  (respectively) if $i$ is a branch; and for an leaf node $i$,
  $\enc[i]$ stores the label (as a negative integer).
The field $\node.\loc$ tracks back-edges, pointing to the ancestor
  instead of making a recursive call whenever a node has been visited
  by a previous recursive call.

Now that preprocessing is complete,
  Algorithm~\ref{alg:ddg-sampling-encoding} shows the main sampler,
  which uses the $\enc$ data structure from
  Algorithm~\ref{alg:ddg-encoding} and the $\textsf{flip()}$ primitive
  to traverse the DDG tree starting from the root
  (at $\enc[0]$).
Since \textsc{PackTree} uses negative integers to encode the labels of
  leaf nodes, the \textsc{SampleEncoding} function returns the outcome
  $-\enc[c]$ whenever $\enc[c] < 0$.
Otherwise, the sampler goes to
  the left child (at $\enc[c]$) if $\textsf{flip()}$ returns 0 or
  the right child (at $\enc[c+1]$) if $\textsf{flip()}$ returns 1.
The resulting sampler is very efficient and only stores
  the linear array $\enc$ in memory, whose size is order $nk$.
(The DDG tree of an entropy-optimal $k$-bit sampler is a complete
  depth-$k$ binary tree with at most $n$ nodes per level, so there are
  at most $nk$ leaf nodes and $nk$ branch nodes.)

For completeness, we also present
  Algorithm~\ref{alg:ddg-sampling-matrix}, which implements an
  entropy-optimal sampler by operating directly on the $n\times{k}$
  matrix $\bP$ returned from Algorithm~\ref{alg:ddg-matrix}.
This algorithm is based on a recursive extension of the
  \citeauthor{knuth1976} sampler given in~\citet{roy2013}, where we
  allow for an infinitely repeating suffix by resetting the column
  counter $c$ to $l$ whenever $c = k -1$ is at the last columns.
(The algorithm in~\citet{roy2013} is designed for hardware and samples
  from a probability matrix with a finite number of columns and no
  repeating suffixes. Unlike the focus of this paper, \citet{roy2013}
  does not deliver closest-approximation distributions for
  limited-precision sampling.)
Algorithm~\ref{alg:ddg-sampling-encoding} is significantly more
  efficient as its runtime is dictated by the entropy (upper bounded
  by $\log{n}$) whereas the runtime of
  Algorithm~\ref{alg:ddg-sampling-matrix} is upper bounded by
  $n\log{n}$ due to the order $n$ inner loop.
Figure~\ref{fig:baseline-wall-clock-runtime} in
  Section~\ref{subsub:results-baselines-runtime} shows that, when
  implemented in software, the increase in algorithmic efficiency from
  using a dense encoding can deliver wall-clock gains of up to 5000x
  on a representative sampling problem.


\begin{figure}[p]
\vspace*{-\baselineskip}
\begin{algorithm}[H]
\caption{Building the probability matrix for a $\nbase_{kl}$-type probability distribution.}
\label{alg:ddg-matrix}
\begin{mdframed}
\begin{tabularx}{\textwidth}{ll}
\textbf{Input}: &Integers $k, l$ with $0 \le l \le k$; integers
  $(M_0, \dots M_{n-1})$ with sum $\nbase_{kl} \defas 2^k - 2^l\Indicator_{l<k}$. \\
\textbf{Output}: &$n \times k$ probability matrix $\bP$ of distribution
  $(M_0/\nbase_{kl}, \dots, M_{n-1}/\nbase_{kl})$.
\end{tabularx}
\begin{enumerate}[label*=\arabic*., ref=\arabic*]
  \item Repeat for each $i=0,\dots,n-1$:
  \begin{enumerate}[leftmargin=*, label*=\arabic*.]
    \item If $l = k$, then let $x_i \defas M_i$ and $y_i \defas 0$;

    Else if $l = 0$, then let $x_i \defas 0$ and $y_i \defas M_i$;

    Else if $0 < l < k$, then let
      $x_i \defas \left\lfloor {M_i}/({2^{k-l}-1}) \right\rfloor,
        y_i \defas M_i - (2^{k-l}-1)x_i$.
    \item Let $a_{i}$ be the length-$l$ binary string encoding $x_i$,
    \item Let $s_{i}$ be the length $k-l$ binary string encoding $y_i$.
    \item Let $b_i \defas a_{i}\oplus s_{i}$ be their concatenation.
  \end{enumerate}
  \item Return the $n \times k$ matrix
    $\bP \defas \begin{bmatrix}
        b_{01} & b_{12} & \dots & b_{0,k-1} \\
        \vdots  & \vdots & \vdots & \vdots \\
        b_{n-1,1} & b_{n-1,2} & \dots & b_{n-1,k-1}
      \end{bmatrix}.$
\end{enumerate}
\end{mdframed}
\end{algorithm}
\vspace{-.5cm}

\begin{algorithm}[H]
\caption{Building an entropy-optimal DDG tree from a probability matrix.}
\label{alg:ddg-tree}
\begin{mdframed}
\begin{tabularx}{\textwidth}{ll}
\textbf{Input}: &
  $n \times k$ matrix $\bP$
  representing $n$ $k$-bit binary expansions with length-$l$ suffix. \\
\textbf{Output}: &
  $\rot$ node of discrete distribution generating tree for $\bP$,
  from Theorem~\ref{thm:ddg-knuth-yao}.
\end{tabularx}
\begin{enumerate}[label*=\arabic*., ref=\arabic*]
\item Define the following functions:
\begin{algorithmic}
\Function{MakeLeafTable}{$\bP$}
  \Comment{returns map of node indices to outcomes}
  \item $L \gets \varnothing$; $\idx \gets 2$
    \Comment{initialize dictionary and root index}
  \For{$c = 0, \dots, k-1$} \Comment{for each level $c+1$ in the tree}
    \For{$r = 0, \dots, n-1$} \Comment{for each outcome $r$}
      \If{$\bP[r,c] = 1$} \Comment{if the outcome is a leaf}
        \State $L[\idx] \gets r+1$  \Comment{mark $\idx$ with the outcome}
        \State $\idx \gets \idx - 1$ \Comment{move $\idx$ one node left}
      \EndIf
    \EndFor
    \State $\idx \gets \idxr$ \Comment{index of next first leaf node}
  \EndFor
  \State \Return $L$
\EndFunction
\end{algorithmic}
\begin{algorithmic}
\Function{MakeTree}{$\idx, k, l, \ancestor, L$}
  \Comment{returns DDG tree with labels $L$}
  \State $\node \gets \mathit{Node}()$
    \Comment{initialize node for current index}
  \If{$\idx \in L$}
    \Comment{if node is a leaf}
    \State $\node.\mathit{label} \gets L[\idx]$
      \Comment{label it with outcome}
  \Else
    \Comment{if node is a branch}
    \State $\level \gets \floor{\log_{2}(\idx+1)}$
      \Comment{compute level of current node}
    \If{$\level = l$}
      $\ancestor.\textsc{Append}(\node)$
        \Comment{add node to list of ancestors}
    \EndIf
    \State $\node.\rgt \gets
      \ancestor.\textsc{Pop}(0)\; \algorithmicif\;
      [\level = k - 1 \mbox{ and } (\idxr) \not\in L]$
      \Comment{make right child}
    \State $\qquad\qquad\qquad
      \algorithmicelse\; \textsc{MakeTree}(\idxr, k, l, \ancestor, L)$
    \State $\node.\lft \gets
      \ancestor.\textsc{Pop}(0)\; \algorithmicif\;
      [\level = k - 1 \mbox{ and } (\idxl) \not\in L]$
    \Comment{make left child}
    \State $\qquad\qquad\qquad
      \algorithmicelse\; \textsc{MakeTree}(\idxl, k, l, \ancestor, L)$
  \EndIf
  \State \Return $\mathit{node}$
\EndFunction
\end{algorithmic}

\item Let $L \gets \textsc{MakeLeafTable}(\bP)$.
\item Let $\rot \gets \textsc{MakeTree}(0, k, l, \lbrack\,\rbrack, L)$.
\item Return $\rot$.
\end{enumerate}
\end{mdframed}
\end{algorithm}
\end{figure}


\begin{figure}[p]
\vspace*{-\baselineskip}
\begin{algorithm}[H]
\caption{Building a sampling-efficient linear encoding from a DDG tree.}
\label{alg:ddg-encoding}
\begin{mdframed}
\begin{tabularx}{\textwidth}{ll}
\textbf{Input}: &
  $\rot$ node of a discrete distribution generating tree.\\
\textbf{Output}: &
  Dense linear array $\enc$ that encodes
  the branch and leaf nodes of the tree.
\end{tabularx}
\begin{enumerate}[label=\arabic*.]
\item Define the following function:
\begin{algorithmic}
\Function{PackTree}{$\enc, \node, \offset$}
  \Comment{returns sampling-efficient data structure}
  \State $\node.\loc \gets \offset$
    \Comment{mark node at this location}
  \If{$\node.\lbl \ne \Nil$}
    \Comment{node is a leaf}
    \State $\enc[\offset] \gets -\node.\lbl$
      \Comment{label it with outcome}
    \State \Return $\offset + 1$
      \Comment{return the next offset}
  \EndIf
  \If{$\node.\lft.\loc \ne \Nil$}
      \Comment{left child has been visited}
    \State $\enc[\offset] \gets \node.\lft.\loc$
      \Comment{mark location of left child}
    \State $\off \gets \offset + 2$
      \Comment{set $\off$ two cells to the right}
  \Else
    \State $\enc[\offset] \gets \offset + 2$
      \Comment{point to left child}
    \State $\off \gets \textsc{PackTree}[\enc, \node.\lft, \offset+2]$
      \Comment{recursively build left subtree}
  \EndIf
  \If{$\node.\rgt.\loc \ne \Nil$}
      \Comment{right child has been visited}
    \State $\enc[\offset+1] \gets \node.\rgt.\loc$
      \Comment{mark location of right child}
  \Else
    \State $\enc[\offset+1] \gets \off$
      \Comment{point to right child}
    \State $\off \gets \textsc{PackTree}(\enc, \node.\rgt, \off)$
      \Comment{recursively build right subtree}
  \EndIf
  \Return $\off$
    \Comment{return next empty cell}
\EndFunction
\end{algorithmic}
\item Create array $\enc[]$ and call $\textsc{PackTree}(\enc, \rot, 0)$.
\item Return $\enc$.
\end{enumerate}
\end{mdframed}
\end{algorithm}

\begin{minipage}[t]{0.485\textwidth}
  \begin{algorithm}[H]
  \caption{Sampling a DDG tree given the linear encoding
    from Algorithm~\ref{alg:ddg-encoding}.}
  \label{alg:ddg-sampling-encoding}
  \centering
    \begin{mdframed}
    \begin{algorithmic}
      \Function{SampleEncoding}{$\enc$}
        \State Let $c \gets 0$
        \While{$\mathsf{True}$}
            \State $b \gets \flip$
            \State $c \gets \enc[c+b]$
            \If{$\enc[c] < 0$}
              \State \Return $-\enc[c]$
            \EndIf
        \EndWhile
      \EndFunction
    \end{algorithmic}
    \end{mdframed}
  \end{algorithm}
\end{minipage}\hfill
\begin{minipage}[t]{.485\textwidth}
  \begin{algorithm}[H]
    \caption{Sampling a DDG tree given the probability matrix
      from Algorithm~\ref{alg:ddg-matrix}.}
    \label{alg:ddg-sampling-matrix}
    \centering
    \begin{mdframed}
    \begin{algorithmic}
      \Function{SampleMatrix}{$\bP, k, l$}
      \State $d \gets 0$
      \State $c \gets 0$
      \While{\textsf{True}}
        \State $b \gets \flip$
        \State $d \gets 2d + (1-b)$
        \For{$r = 0, \dots, n-1$}
          \State $d \gets d - \bP[r][c]$
          \If{$d = -1$}
            \State \Return $r + 1$
          \EndIf
        \EndFor
        \If{$c = k-1$}
          \State $c \gets l$
        \Else
          \State $c \gets c + 1$
        \EndIf
      \EndWhile
      \EndFunction
  \end{algorithmic}
  \end{mdframed}
\end{algorithm}
\end{minipage}

\end{figure}

\section{Experimental results}
\label{sec:results}

We next evaluate the optimal limited-precision sampling algorithms
  presented in this paper.
Section~\ref{subsec:results-error-entropy-params} investigates how the
  error and entropy consumption of the
  optimal samplers vary with the
  parameters of common families of discrete probability distributions.
Section~\ref{subsec:results-baselines} compares the optimal samplers
  with two limited-precision baselines samplers, showing that our algorithms are
  up to 1000x-10000x more accurate, consume up to 10x fewer random bits per
  sample, and are 10x--100x faster in terms of wall-clock time.
Section~\ref{subsec:results-exact} compares our optimal samplers to
  exact samplers on a representative binomial distribution,
  showing that exact samplers can require high
  precision or consume excessive entropy,
  whereas our optimal approximate samplers can use less
  precision and/or entropy at the expense of a small sampling error.
Appendix~\ref{appx:results-error-precision} contains a
  study of how the closest-approximation error
  varies with the precision specification and entropy of the target
  distribution, as measured by three different $f$-divergences.
The online artifact contains the experiment code.
All C algorithms used for measuring performance were compiled with \texttt{gcc} level
  3 optimizations, using Ubuntu 16.04 on AMD Opteron 6376 1.4GHz
  processors.

\subsection{Characterizing Error and Entropy for Families of Discrete Distributions}
\label{subsec:results-error-entropy-params}

We study how the approximation error and
  entropy consumption of our optimal approximate samplers
  vary with the parameter values of
  four families of probability distributions:
%
\begin{enumerate*}[label=(\roman*)]
\item $\binomial(n, p)$: the number of heads
  in $n$ independent tosses of a biased $p$-coin;
\item $\betabinomial(n, \alpha, \beta)$:
  the number of heads in $n$ independent tosses of a biased
  $p$-coin, where $p$ is itself randomly drawn from a
  $\bbeta(\alpha, \beta)$ distribution;
\item $\gaussian(n, \sigma)$:
  a discrete Gaussian over the integers $\set{-n, \dots, n}$ with
  variance $\sigma^2$; and
\item $\hypergeom(n, m, d)$: the number of red balls obtained after
  $d$ draws (without replacement) from a bin that has $m$ red balls
  and $n-m$ blue balls.
\end{enumerate*}


\begin{figure}[b]
\captionsetup[subfigure]{skip=0pt, labelfont=footnotesize, textfont=footnotesize}
\begin{subfigure}[b]{.25\linewidth}
\includegraphics[width=\linewidth]{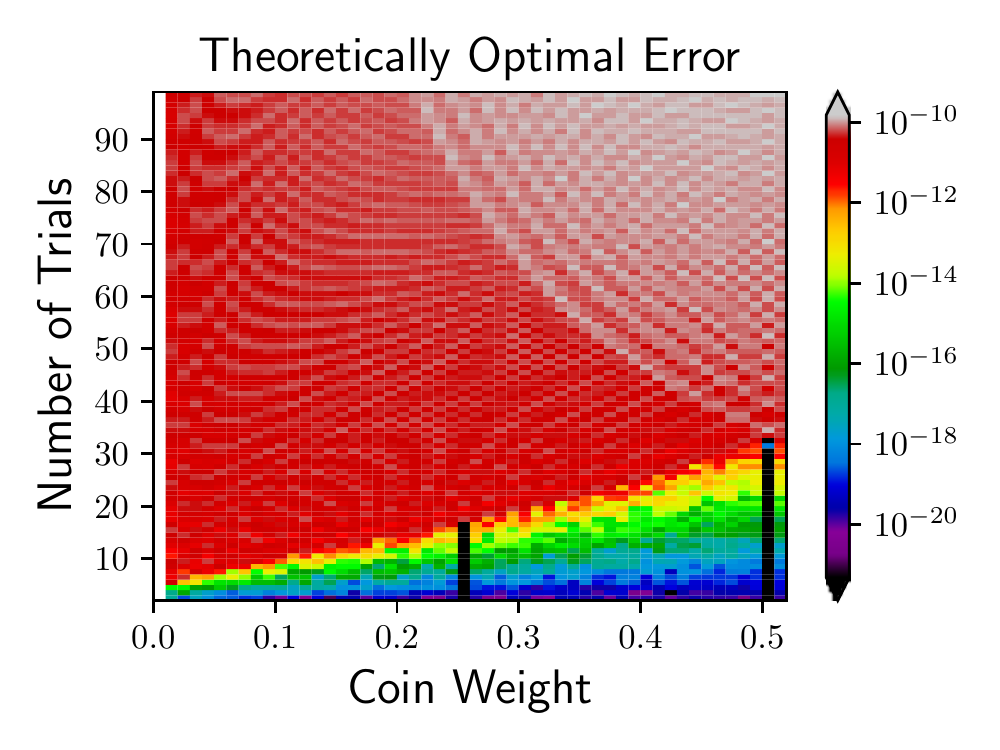}
\includegraphics[width=\linewidth]{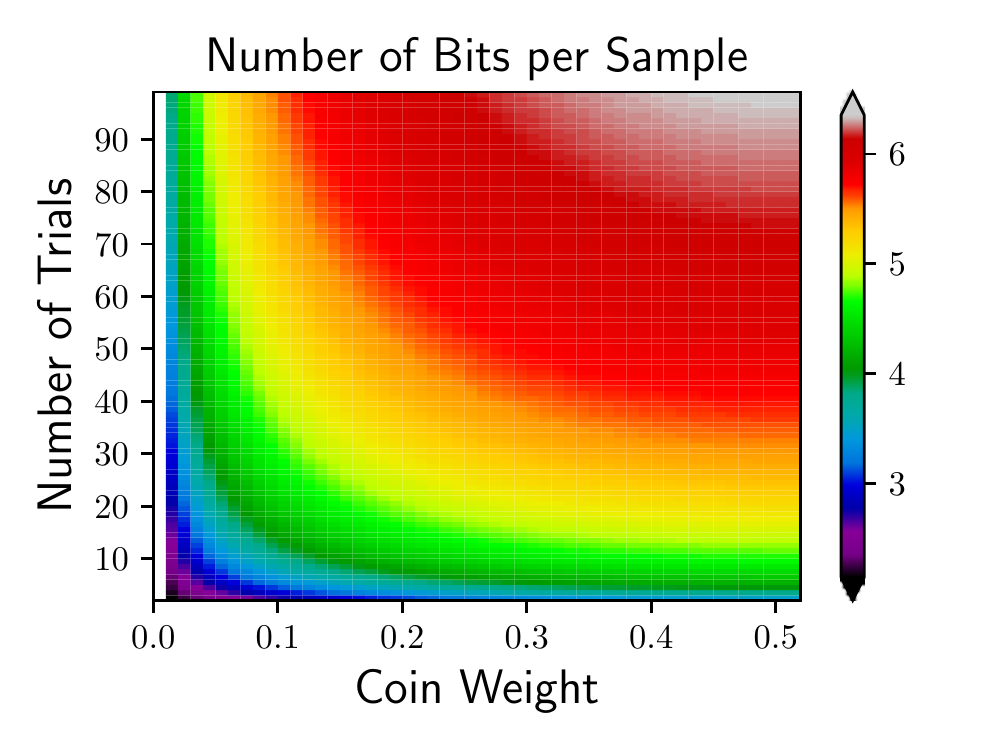}
\caption{\binomial{}}
\label{sufbig:error-families-binomial}
\end{subfigure}%
\begin{subfigure}[b]{.25\linewidth}
\includegraphics[width=\linewidth]{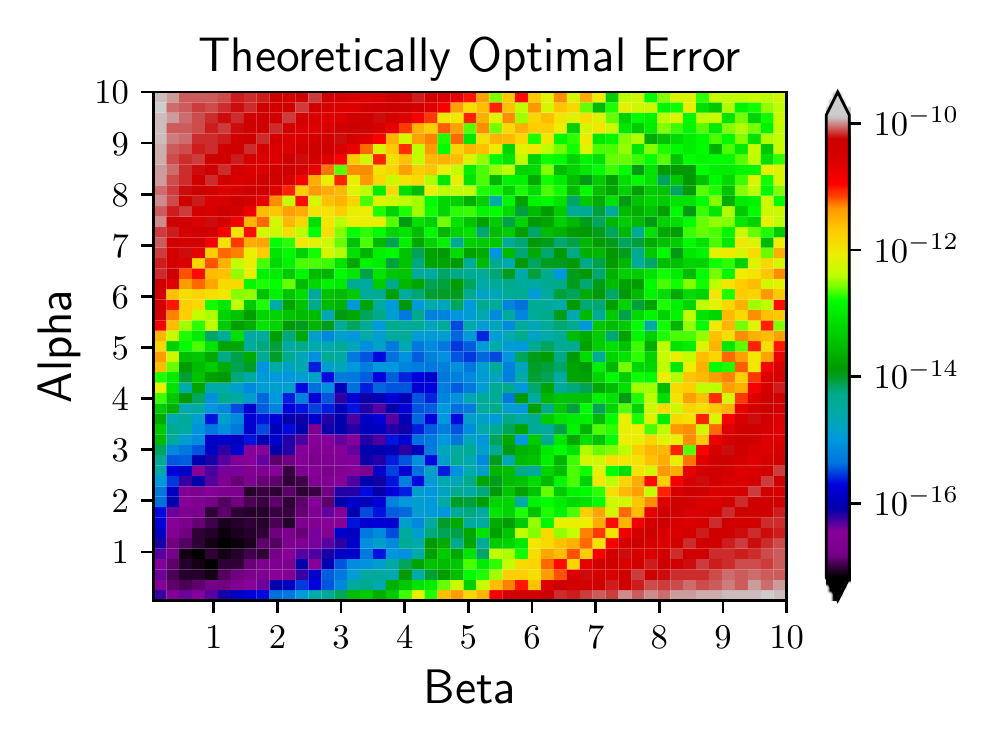}
\includegraphics[width=\linewidth]{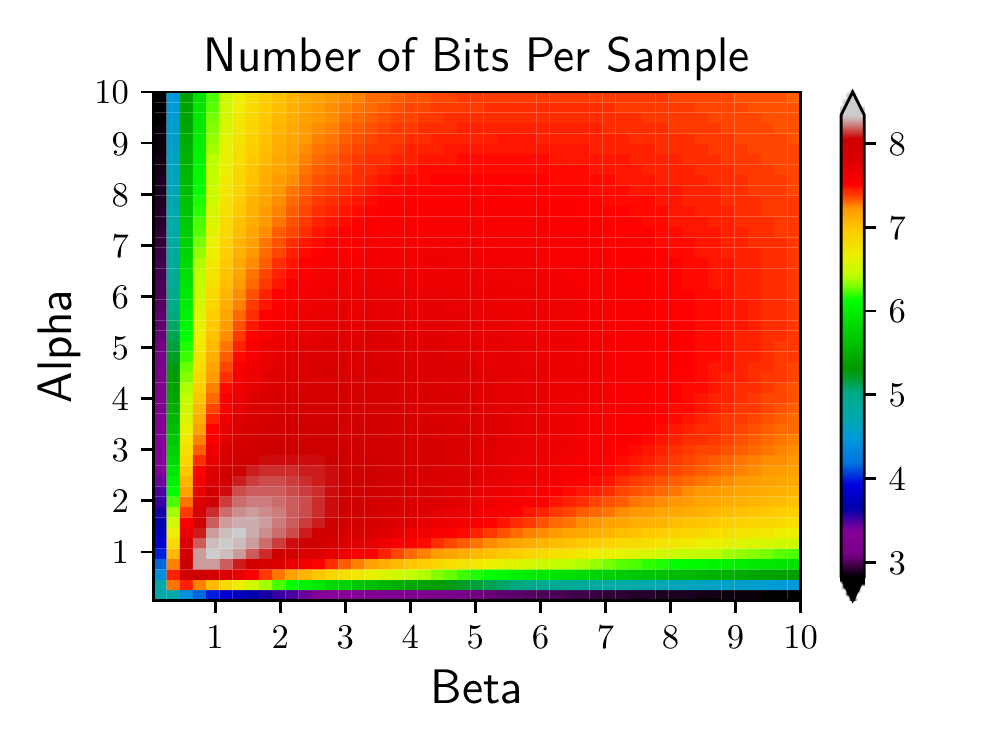}
\caption{\betabinomial{} $(n\,{=}\,80)$}
\label{sufbig:error-families-beta-binomial}
\end{subfigure}%
\begin{subfigure}[b]{.25\linewidth}
\includegraphics[width=\linewidth]{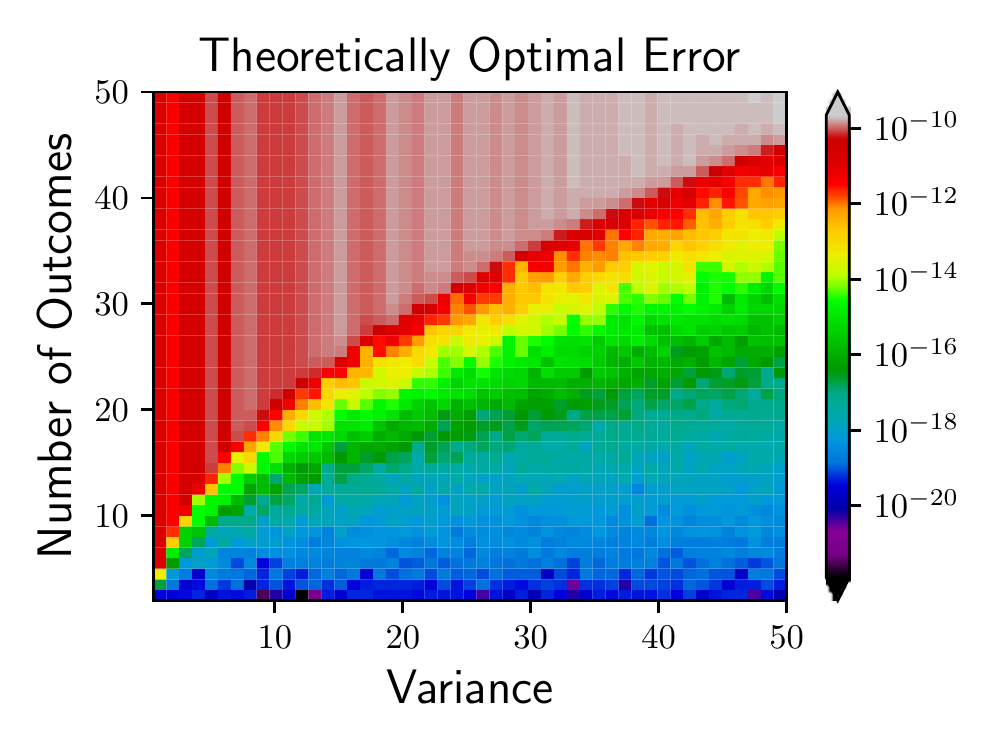}
\includegraphics[width=\linewidth]{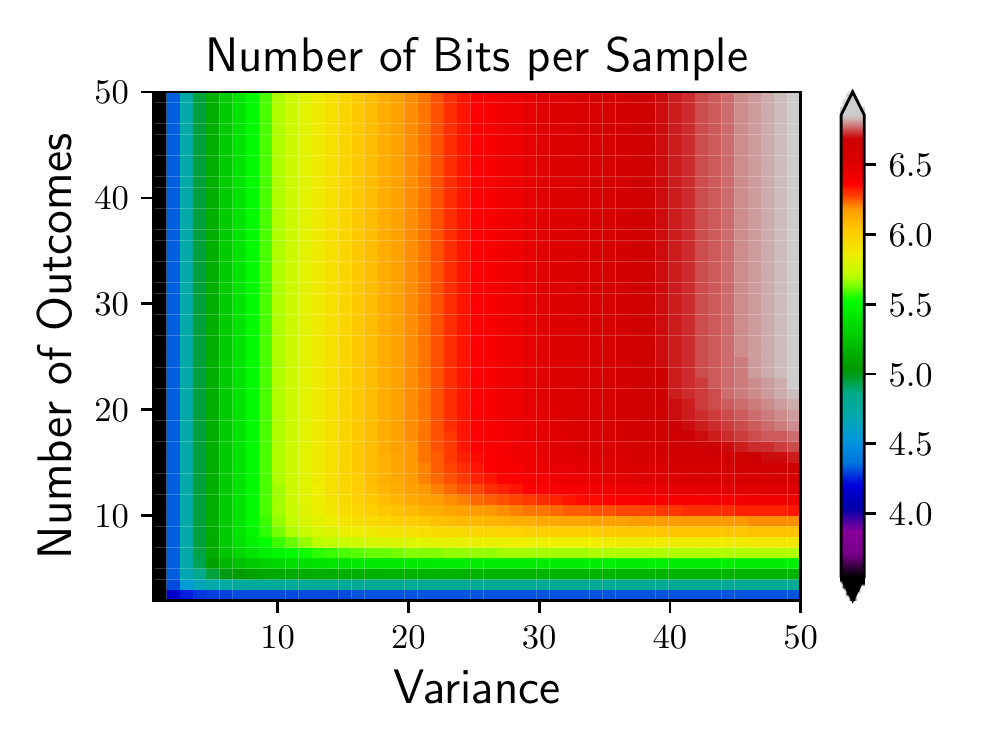}
\caption{\gaussian{}}
\label{sufbig:error-families-gaussian}
\end{subfigure}%
\begin{subfigure}[b]{.25\linewidth}
\includegraphics[width=\linewidth]{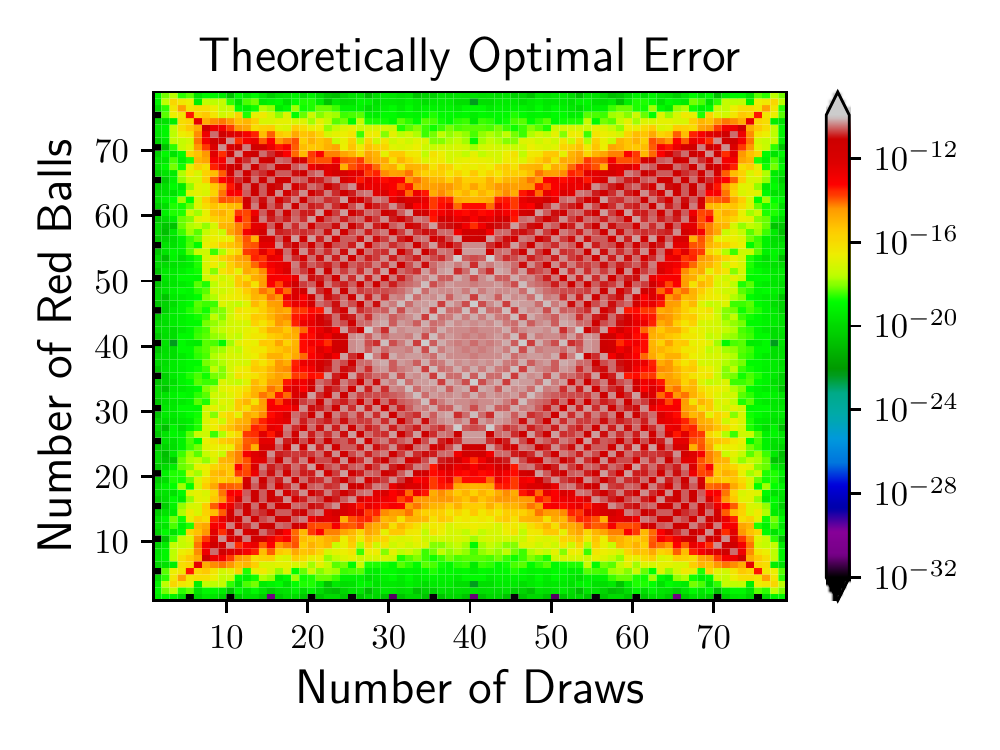}
\includegraphics[width=\linewidth]{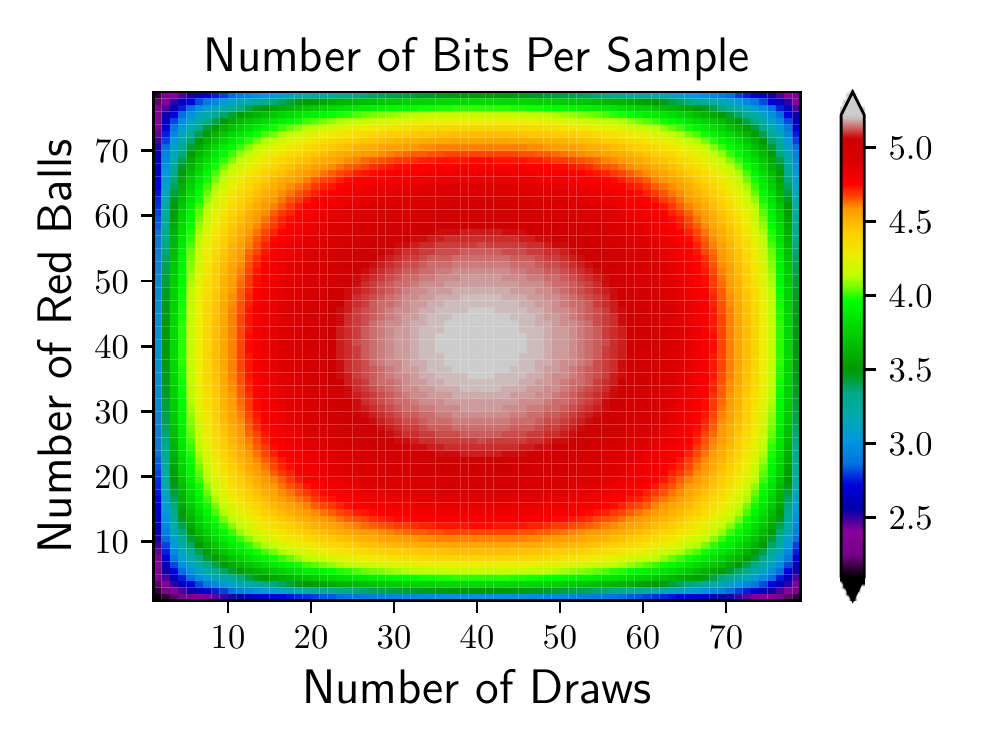}
\caption{\hypergeom{} $(n\,{=}\,80)$}
\label{subfig:error-families-hypergeom}
\end{subfigure}%

\captionsetup{skip=5pt}
\caption{%
Characterization of the theoretically optimal approximation error (top row)
  and average number of bits per sample (bottom row) for four common
  families of probability distributions using $k=32$ bits of
  precision.}
\label{fig:error-families}
\end{figure}

Figure~\ref{fig:error-families} shows how the closest-approximation
  error (top row) and entropy consumption (bottom row) vary with
  two of the parameters of each family (x and y-axes) when
  using $k=32$ bits of precision.
Since \betabinomial{} and
  \hypergeom{} have three parameters, we fix $n=80$ and vary the
  remaining two parameters.
Closest-approximation distributions are obtained from
  Algorithm~\ref{alg:optimization}, using $\nbase=2^{32}$ and the
  Hellinger divergence (which is most sensitive at medium entropies).
The plots show that, even with the same family, the closest-approximation
  error is highly dependent on the target distribution and
  the interaction between parameter values.
For example, in Figure~\ref{sufbig:error-families-binomial} (top panel),
  the black spikes at coin weight 0.25 and 0.50 correspond to
  pairs $(n, p)$ where the binomial distribution can
  be sampled exactly.
Moreover, for a fixed coin weight (x-axis), the error increases as the
  number of trials (y-axis) increases.
The rate at which the error increases with the number of trials is
  inversely proportional to the coin weight, which is mirrored
  by the fact that the average number of bits per sample (bottom panel)
  varies over a wider range and at a faster rate at low coin weights
  than at high coin weights.
In Figure~\ref{sufbig:error-families-gaussian}, for a fixed level of
  variance (x-axis), the error increases until
  the number of outcomes (y-axis) exceeds the variance,
  after which the tail probabilities become negligible.
In Figure~\ref{subfig:error-families-hypergeom} when the number of red
  balls $m$ and number of draws $d$ are equal to roughly half of the
  population size $n$, the bits per sample and approximation
  error are highest (grey in center of both panels).
This relationship stands in contrast to
  Figure~\ref{sufbig:error-families-beta-binomial}, where
  approximation error is lowest (black/purple in lower left of top panel)
  when bits per sample is highest (grey in lower left of bottom panel).
The methods presented in this paper enable rigorous and systematic assessments
  of the effects of bit precision on theoretically-optimal entropy
  consumption and sampling error, as opposed to empirical,
  simulation-based assessments of entropy and error which can be
  very noisy in practice (e.g., \citet[Figure~3.15]{jonas2014}).

\subsection{Comparing Error, Entropy, and Runtime to Baseline Limited-Precision Algorithms}
\label{subsec:results-baselines}

We next show that the proposed sampling algorithm is more accurate,
  more entropy-efficient, and faster than existing limited-precision
  sampling algorithms. We briefly review two baselines below.

\textit{Inversion sampling}.
Recall from Section~\ref{subsec:introduction-existing-methods} that
  inversion sampling is a universal method based on the key
  property in Eq.~\eqref{eq:dandruff}.
In the $k$-bit limited-precision setting, a
  floating-point number $U'$ (with denominator $2^k$) is used
  to approximate a real uniform variate $U$.
The GNU C++ standard library~\citep{lea1992} v5.4.0
  implements inversion sampling as in
  Algorithm~\ref{alg:inversion-sample}
  (using $\le$ instead of $<$).\footnote{\scriptsize
    Steps~\ref{item:undilute-1} and~\ref{item:undilute-2}
    are implemented in \texttt{generate\_canonical}
    and Step~\ref{item:undilute-3} is implemented
    in \texttt{discrete\_distribution::operator()} using a linear scan;
    see \texttt{/gcc-5.4.0/libstdc++v3/include/bits/random.tcc}
    in \url{https://ftp.gnu.org/gnu/gcc/gcc-5.4.0/gcc-5.4.0.tar.gz}.}
As $W\sim\mathsf{Uniform}(\set{0, 1/2^k, \dots, (2^k-1)/2^k})$,
  it can be shown that the limited-precision inversion
  sampler has the following output
  probabilities $\hat{p}_i$, where
    $\widetilde{p}_j\defas\sum_{s=1}^{j}p_s$ $(j=1,\dots,n)$ and
    $2 \le i \le n$:
  \begin{align}
  \hat{p}_1 \propto \floor{2^k\widetilde{p}_1} + \Indicator_{\widetilde{p}_1 \ne 1};
  &&
  \hat{p}_i \propto \begin{cases}
    \max(0, \ceil{2^k\widetilde{p}_{i}} - \floor{2^k\widetilde{p}_{i-1}})
      & (\mbox{if } 2^k\widetilde{p}_i = \floor{2^k\widetilde{p}_i} \mbox{ and } \widetilde{p}_{i} \ne 1) \\
    \max(0, \ceil{2^k\widetilde{p}_{i}} - \floor{2^k\widetilde{p}_{i-1}} - 1)
      & (\mbox{otherwise})
    \end{cases}.
  \label{eq:probituminous}
  \end{align}

\textit{Interval algorithm~\citep{han1997}}.
This method implements inversion sampling by recursively partitioning
  the unit interval $[0,1]$ and using the cumulative distribution of
  $\bp$ to lazily find the bin in which a uniform random variable
  falls.
%
We refer to~\citet[Algorithm~1]{uyematsu2003} for a limited-precision
  implementation of the interval algorithm using $k$-bit integer arithmetic.


\begin{figure}[t]
\footnotesize
\captionsetup{skip=0pt}

\captionof{table}{Comparison of the average number of input bits per sample
  used by inversion sampling, interval sampling, and the proposed
  method, in each of the six parameterized families using $k\,{=}\,16$
  bits of precision.
%
}
\label{tab:baseline-entropy}
\begin{tabular*}{\linewidth}{l@{\extracolsep{\fill}}rrr}
\toprule
\multicolumn{1}{l}{Distribution} & \multicolumn{3}{c}{Average Number of Bits per Sample} \\ \cmidrule{2-4}
~                                & Inversion Sampler (Alg.~\ref{alg:inversion-sample}) & Interval Sampler \citep{uyematsu2003} & Optimal Sampler (Alg.~\ref{alg:ddg-sampling-encoding}) \\ \midrule
\benford & 16 & 6.34 & 5.71 \\
\betabinomial & 16 & 4.71 & 4.16 \\
\binomial & 16 & 5.05 & 4.31 \\
\boltzmann & 16 & 1.51 & 1.03 \\
\gaussian & 16  & 6.00 & 5.14 \\
\hypergeom & 16 & 4.04 & 3.39 \\ \bottomrule
\end{tabular*}
\bigskip

\includegraphics[width=.33\linewidth]{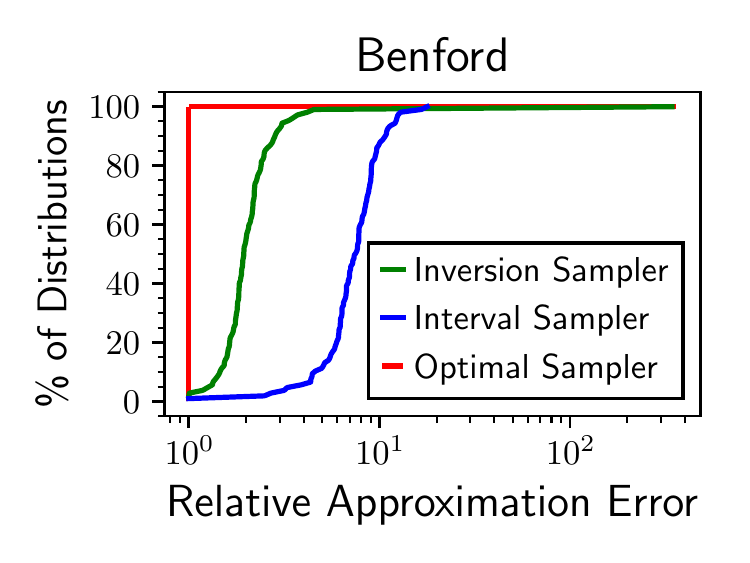}%
\includegraphics[width=.33\linewidth]{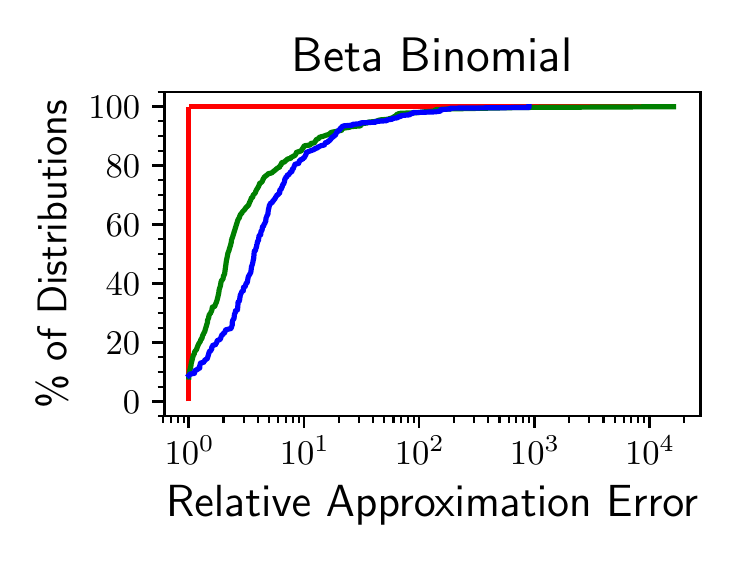}%
\includegraphics[width=.33\linewidth]{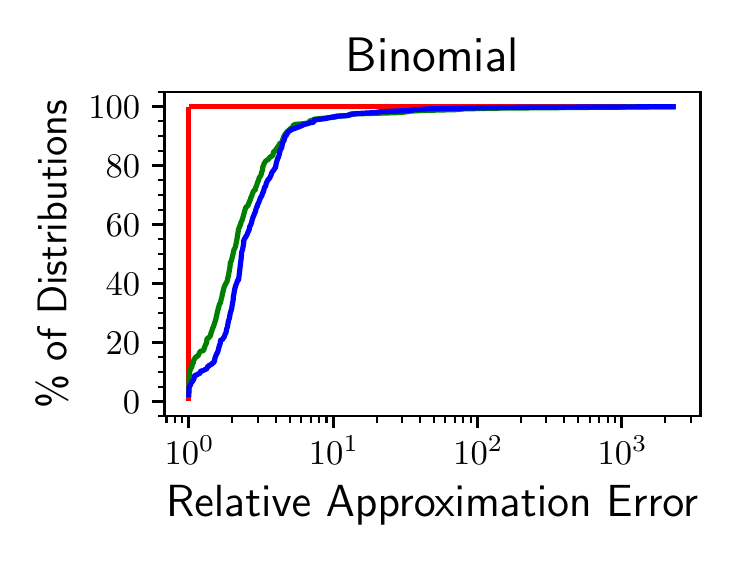}
\includegraphics[width=.33\linewidth]{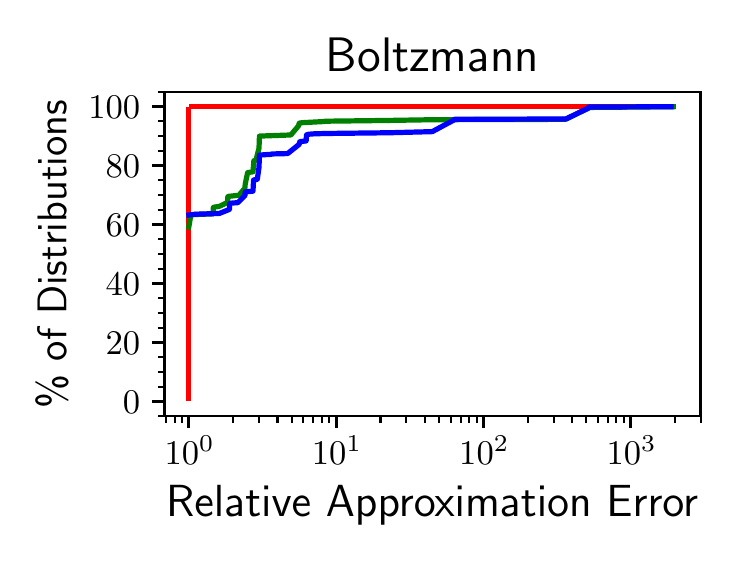}%
\includegraphics[width=.33\linewidth]{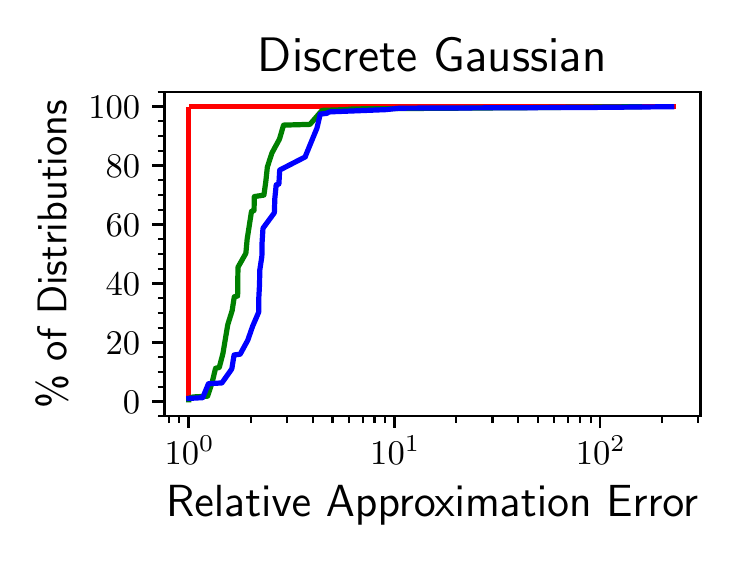}%
\includegraphics[width=.33\linewidth]{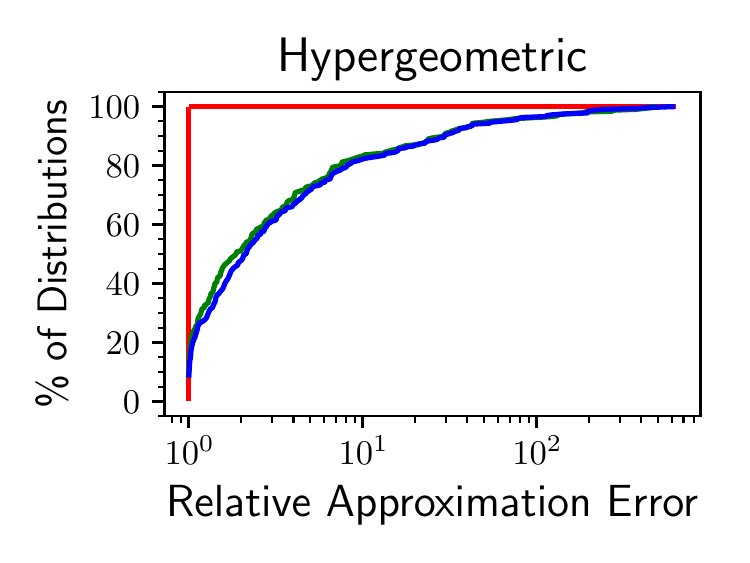}%
\captionof{figure}{%
Comparison of the approximation error of limited-precision implementations
  of interval sampling (green) and inversion sampling (blue) relative
  to error obtained by the optimal sampler (red), for six families of
  probability distributions using $k\,{=}\,16$ of bits precision.
%
%
The x-axis shows the approximation error of each sampler relative to
  the optimal error.
The y-axis shows the fraction of 500 distributions from each family
  whose relative error is less than or equal to the corresponding
  value on the x-axis.
}
\label{fig:baseline-error}
\end{figure}

\subsubsection{Error Comparison}
\label{subsusb:results-baselines-error}

Both the inversion and
  interval samplers use at most $k$ bits of precision,
  which, from Proposition~\ref{prop:opt-finite-entropy},
  means that these algorithms are less accurate than the
  optimal approximate samplers from
  Algorithm~\ref{alg:optimization} (using $\nbase=2^k$) and less
  entropy-efficient than the sampler in
  Algorithm~\ref{alg:ddg-sampling-encoding}.
To compare the errors, 500
  distributions are obtained by sweeping through a grid of values
  that parameterize the shape and dimension for each of
  six families of probability distributions.
  %
%
For each target distribution, probabilities from
  the inversion method (from Eq.~\eqref{eq:probituminous}),
  the interval method (computed by enumeration),
  and the optimal approximation (from Algorithm~\ref{alg:optimization})
  are obtained using $k=16$ bits of precision.
In Figure~\ref{fig:baseline-error}, the x-axis shows the approximation
  error (using the Hellinger divergence) of each method relative to the
  theoretically-optimal error achieved by our samplers.
The y-axis shows the fraction of the 500 distributions whose relative
  error is less than or equal to the value on the x-axis.
The results show that, for this benchmark set, the output
  distributions of inversion and interval samplers are up to three
  orders of magnitude less accurate relative to the output
  distribution of the optimal $k$-bit approximation delivered by our
  algorithm.

\subsubsection{Entropy Comparison}
\label{subsusb:results-baselines-entropy}

Next, we compare the efficiency of each sampler measured in terms of
  the average number of random bits drawn from the source to produce a
  sample, shown in Table~\ref{tab:baseline-entropy}.
Since these algorithms are guaranteed to halt after consuming at most
  $k$ random bits, the average number of bits per sample is computed
  by enumerating over all $2^k$ possible $k$-bit strings (using
  $k=16$ gives 65536 possible input sequences from the random
  source) and recording, for each sequence of input bits, the number of
  consumed bits until the sampler halts.
The inversion algorithm consumes all $k$ available bits of entropy,
  unlike the interval and optimal samplers,
  which lazily draw bits from the random
  source until an outcome can be determined.
For all distributional families, the optimal sampler uses fewer bits
  per sample than are used by interval sampling.


\begin{table}[t]
\centering
\captionsetup{skip=0pt}

\includegraphics[width=\linewidth]{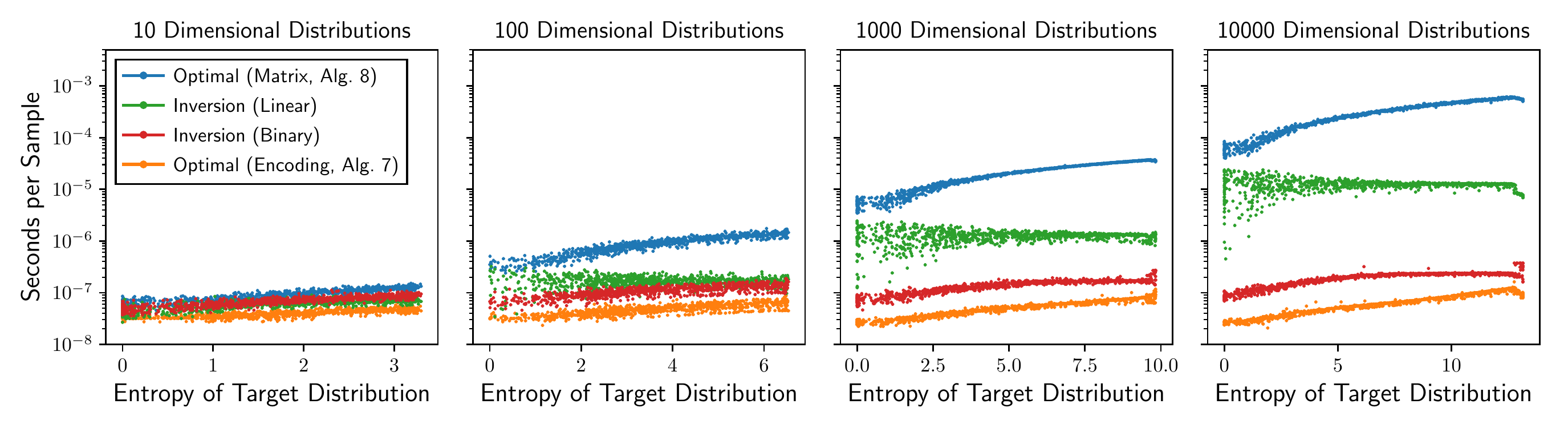}
\captionsetup{skip=3pt, belowskip=0pt}
\captionof{figure}{Comparison of wall-clock time per sample and order
  of growth of two implementations of the optimal samplers (using
  Algorithms~\ref{alg:ddg-sampling-encoding}
  and~\ref{alg:ddg-sampling-matrix}) with inversion sampling (using
  linear and binary search in Algorithm~\ref{alg:inversion-sample}).}
\label{fig:baseline-wall-clock-runtime}
\bigskip

\captionof{table}{Comparison of runtime and number of calls to the random number generator
  using limited-precision entropy-optimal and inversion sampling to
  generate $100$ million samples from $100$ dimensional distributions}
\label{tab:baseline-wall-clock-rng}
\footnotesize
\begin{tabular*}{\linewidth}{@{\extracolsep{\fill}}lrrr}
\toprule
Method & Entropy of Target Distribution & Number of PRNG Calls & PRNG Wall-Clock Time (ms) \\ \midrule
\multirow{4}{*}{
  \shortstack[l]{ Optimal Approximate
  \\ Sampler (Alg.~\ref{alg:ddg-sampling-encoding})}}
  & 0.5 & 7,637,155 & 120 \\
~ & 2.5 & 11,373,471 & 160 \\
~ & 4.5 & 18,879,900 & 260 \\
~ & 6.5 & 24,741,348 & 350 \\ \midrule
Inversion Sampler (Alg.~\ref{alg:inversion-sample})
  & (all) & 100,000,000 & 1410 \\ \bottomrule
\end{tabular*}

\end{table}

\subsubsection{Runtime Comparison}
\label{subsub:results-baselines-runtime}

We next assess the runtime performance of our sampling algorithms as
  the dimension and entropy of the target distribution increases.
For each $n \in \set{10, 100, 1000, 10000}$, we generate 1000
  distributions with entropies ranging from $0, \dots, \log(n)$.
For each distribution, we measure the time taken to generate a
  sample based on 100000 simulations according to four methods:
  the optimal sampler using \textsc{SampleEncoding} (Algorithm~\ref{alg:ddg-sampling-encoding});
  the optimal sampler using \textsc{SampleMatrix} (Algorithm~\ref{alg:ddg-sampling-matrix});
  the inversion sampler using a linear scan (Algorithm~\ref{alg:inversion-sample}, as in the GNU C++ standard library);
  and the inversion sampler using binary search (fast C implementation).
Figure~\ref{fig:baseline-wall-clock-runtime} shows the results, where
  the x-axis is the entropy of the target distribution and the y-axis
  is seconds per sample (log scale).
In general, the difference between the samplers increases with the
  dimension $n$ of the target distribution.
For $n=10$, the \textsc{SampleEncoding} sampler executes a median of
  over 1.5x faster than any other sampler.
For $n=10000$, \textsc{SampleEncoding} executes a median of over 3.4x
  faster than inversion sampling with binary search and over 195x
  faster than the linear inversion sampler implemented in the C++
  library.
In comparison with \textsc{SampleMatrix}~\citep{roy2013},
  \textsc{SampleEncoding} is faster by a median of 2.3x ($n=10$) to
  over 5000x ($n=10000$).

The worst runtime scaling is given by
  \textsc{SampleMatrix} which, although
  entropy-optimal, grows order $nH(\bp)$ due to the inner
  loop through the rows of the probability matrix.
In contrast, \textsc{SampleEncoding} uses the dense
  linear array described in Section~\ref{sec:ddg} and is
  asymptotically more efficient: its runtime depends only on the entropy
  $H(\bp) \le \log{n}$.
As for the inversion methods, there is a significant gap between the
  runtime of \textsc{SampleEncoding} (orange) and the binary inversion
  sampler (red) at low values of entropy, which is especially
  visible at $n=1000$ and $n=10000$.
The binary inversion sampler scales
  order $\log{n}$ independently of the entropy, and is thus less
  performant than \textsc{SampleEncoding} when $H(\bp) \ll \log{n}$
  (the gap narrows as $H(\bp)$ approaches $\log{n}$).

Table~\ref{tab:baseline-wall-clock-rng} shows the wall-clock
  improvements from using Algorithm~\ref{alg:ddg-sampling-encoding}.
Floating-point sampling algorithms implemented in standard software
  libraries typically make one call to the pseudorandom number
  generator per sample, consuming a full 32-bit or 64-bit pseudorandom
  word, which in general is highly wasteful.
(As a conceptual example, sampling $\bernoulli(1/2)$ requires sampling
  only one random bit, but comparing an approximately-uniform
  floating-point number $U' < 0.5$ as in inversion sampling
  uses e.g., 64 bits.)
In contrast, the optimal approximate sampler
  (Algorithm~\ref{alg:ddg-sampling-encoding}) is designed to lazily
  consume random bits (following \citet{lumbroso2013}, our
  implementation of $\flip$ stores a buffer of pseudorandom bits
  equal to the word size of the machine) which results in fewer
  function calls to the underlying pseudorandom number generator and
  4x--12x less wall-clock time.


\begin{table}[t]
\footnotesize
\captionsetup{skip=0pt}
\caption{Precision, entropy consumption, and sampling error
  of \citeauthor{knuth1976} sampling, rejection sampling, and optimal
  approximate sampling, at various levels of precision
  for the $\binomial(50, 61/500)$ distribution.}
\label{tab:binomial-exact-approx}
\begin{tabular*}{\linewidth}{@{\extracolsep{\fill}}lrrr}
\toprule
Method & Precision $\numbase{k}{l}$ & Bits per Sample & Error ($L_1$) \\ \midrule
Exact \citeauthor{knuth1976} Sampler (Thm.~\ref{thm:ddg-knuth-yao})
    & $\numbase{\num{5.6E104}}{100}$ & 5.24 & 0.0 \\
Exact Rejection Sampler (Alg.~\ref{alg:rejection-sample})
    & $\numbase{449}{448}$ & 735 & 0.0 \\ \midrule
\multirow{5}{*}{
  \shortstack[l]{
  Optimal Approximate \\
  Sampler (Alg.~\ref{alg:optimization}+\ref{alg:ddg-sampling-encoding})}}
  & $\numbase{4}{4}$ & 5.03 & \num{2.03E-01}   \\
~ & $\numbase{8}{4}$ & 5.22 & \num{1.59E-02}   \\
~ & $\numbase{16}{0}$ & 5.24 & \num{6.33E-05}  \\
~ & $\numbase{32}{12}$ & 5.24 & \num{1.21E-09} \\
~ & $\numbase{64}{29}$ & 5.24 & \num{6.47E-19} \\ \bottomrule
\end{tabular*}
\vspace{-.5cm}
\end{table}


\subsection{Comparing Precision, Entropy, and Error to Exact Sampling Algorithms}
\label{subsec:results-exact}

%
Recall that two algorithms for sampling from $\nbase$-type
  distributions (Definition~\ref{def:z-type-distribution}) are:
\begin{enumerate*}[label=(\roman*)]
  \item exact \citeauthor{knuth1976} sampling (Theorem~\ref{thm:ddg-knuth-yao}),
    which samples
    from any $\nbase$-type distribution
    using at most $H(\bp)+2$ bits per sample
    and precision $k$ described in
    Theorem~\ref{thm:k-bit-bases}; and

\item rejection sampling (Algorithm~\ref{alg:rejection-sample}),
    which samples from any $\nbase$-type
    distribution using $k$ bits of precision
    (where $2^{k-1} < \nbase \le 2^k$)
    using $k 2^k/\nbase$ bits per sample.
\end{enumerate*}
Consider the $\binomial(50, 61/500)$ distribution $\bp$,
  which is the number of heads in 50 tosses of a biased
  coin whose probability of heads is $61/500$.
The probabilities are
  $p_i \defas
      \binom{50}{i}
      \left({61}/{500}\right)^{i}
      \left({39}/{500}\right)^{n-i}$ $(i=0, \dots, n)$
and $\bp$ is a $\nbase$-type distribution
  with $\nbase = \num{8.8817841970012523233890533447265625E134}$.
Table~\ref{tab:binomial-exact-approx} shows a comparison of the two
  exact samplers to our optimal approximate samplers.
The first column shows the precision $\numbase{k}{l}$,
  which indicates $k$ bits are used and $l$
  (where $0 \le l \le k$) is the length of the repeating suffix in the
  number system $\NumSys{kl}$ (Section~\ref{sec:limited-precision}).
Recall that exact samplers use finite but arbitrarily high precision.
The second and third columns show bits per sample and sampling error,
  respectively.

\textit{Exact \citeauthor{knuth1976} sampler}.
This method requires a tremendous amount of precision to generate an
  exact sample (following
  Theorem~\ref{thm:precision-perfect-sampling}), as dictated by the
  large value of $\nbase$ for the $\binomial(50, 61/500)$ distribution.
The required precision far exceeds the amount of memory available
  on modern machines.
Although at most 5.24 bits per sample are needed on average (two more
  than the 3.24 bits of entropy in the target distribution), the DDG
  tree has more than $\num{E104}$ levels.
Assuming that each level is a byte, storing the sampler would require
  around $\num{E91}$ terabytes.

\textit{Exact rejection sampler}.
This method requires $449$ bits of precision
  (roughly 56 bytes), which is the number of bits
  needed to encode common denominator $\nbase$.
This substantial reduction
  in precision as compared to the \citeauthor{knuth1976} sampler
  comes at the cost of higher number of bits per sample,
  which is roughly 150x higher than
  the information-theoretically optimal rate.
The higher number of expected bits per sample leads to wasted
  computation and higher runtime in practice due to excessive calls to the random number
  generator (as illustrated in
  Table~\ref{tab:baseline-wall-clock-rng}).

\textit{Optimal approximate sampler}.
For precision levels ranging from $k=4$ to $64$,
  the selected value of $l$ delivers the smallest approximation error
  across executions of Algorithm~\ref{alg:optimization} on inputs
  $\nbase_{kk}, \dots, \nbase_{k0}$.
At each precision, the number of bits per sample
  has an upper bound that is very close to the upper bound of
  the optimal rate, since the entropies of the closest-approximation distributions
  are very close to the entropy of the target distribution, even at low precision.
Under the $L_1$ metric, the approximation error decreases
  exponentially quickly with the increase in precision
  (Theorem~\ref{thm:opt-error-tv}).

These results
  illustrate that exact \citeauthor{knuth1976} sampling can be
  infeasible in practice, whereas rejection sampling
  requires less precision (though higher than what is typically
  available on low precision sampling devices~\citep{mansinghka2014})
  but is wasteful in terms of bits per sample.
The optimal approximate samplers are practical to
  implement and use significantly less precision or bits per
  sample than exact samplers, at the expense of a small
  approximation error that can be controlled based on the accuracy and
  entropy constraints of the application at hand.



\section{Conclusion}

This paper has presented a new class of algorithms for optimal approximate
  sampling from discrete probability distributions.
The samplers minimize both statistical error and entropy consumption
  among the class of all entropy-optimal samplers and bounded-entropy
  samplers that operate within the given precision constraints.
Our samplers lead to improvements in accuracy, entropy-efficiency,
  and wall-clock runtime as compared to existing
  limited-precision samplers, and can use significantly fewer computational
  resources than are needed by exact samplers.

Many existing programming languages and systems include libraries and
  constructs for random
  sampling~\citep{lea1992,matlab1993,rteam2014,galassi2019}.
In addition to the areas of scientific computing mentioned in
  Section~\ref{sec:introduction}, relatively new and prominent
  directions in the field of computing that leverage random sampling
  include probabilistic programming languages and
  systems~\citep{gordon2014,saad2016,staton2016,towner2019};
  probabilistic program synthesis~\citep{nori2015,saad2019}; and
  probabilistic hardware~\citep{schryver2012,dwarakanath2014,mansinghka2014}.
In all these settings, the efficiency and accuracy of random sampling
  procedures play a key role in many implementation techniques.
As uncertainty continues to play an increasingly prominent role in a
  range of computations and as programming languages move towards
  more support for random sampling as one way of
  dealing with this uncertainty, trade-offs between
  entropy consumption, sampling accuracy, numerical precision, and
  wall-clock runtime will form an important set of design
  considerations for sampling procedures.
Due to their theoretical optimality properties,
  ease-of-implementation, and applicability to a broad set of
  statistical error measures, the algorithms in this
  paper are a step toward a systematic and practical approach for
  navigating these trade-offs.

\begin{acks}
This research was supported by a philanthropic gift from the Aphorism
Foundation.
\end{acks}

\bibliography{paper}

\clearpage

\begin{appendices}

\section{Optimal approximation error at various levels of bit precision}
\label{appx:results-error-precision}

We study how the theoretically-optimal approximation error
  realized by our samplers (using Algorithm~\ref{alg:optimization})
  varies with the entropy of the target distribution $\bp$ and
  the number of bits of precision $k$ available to the sampling
  algorithm.
We obtain 10000 probability distributions
  $\set{\bp_1, \dots, \bp_{10000}}$ over $n=100$ dimensions
  with entropies ranging from $0$ (deterministic
  distribution) to $\log(100)\,{\approx}\,6.6$ (uniform distribution).
For each $\bp_i$ $(i=1,\dots,10000)$ and precision values
  $k=1,\dots,20$, we obtain an optimal approximation $\hat\bp_{ik}$
  using Algorithm~\ref{alg:optimization} with $\nbase=2^k$ and measure the
  approximation error $\Delta_{ik} \defas \Delta(\bp_i, \bp_{ik})$.
Figure~\ref{fig:error-entropy-precision} shows a heatmap of the
  approximation errors $\Delta_{ik}$ according to three common
  $f$-divergences: total variation, Hellinger divergence, and relative
  entropy, which are defined in Table~\ref{tab:f-divergences}.
Under the relative entropy divergence, all approximation errors are
  infinite whenever the precision $k < 7$ (white area;
  Figure~\ref{subfig:error-entropy-precision-kl}), since the sampler
  needs at least 7 bits of precision to assign a non-zero probability to
  each of the $n=100$ outcomes of the target distributions.


\begin{figure}[H]
\centering
\vspace{-.175cm}
\captionsetup[subfigure]{skip=0pt}
\begin{subfigure}{.33\linewidth}
\includegraphics[width=\linewidth]{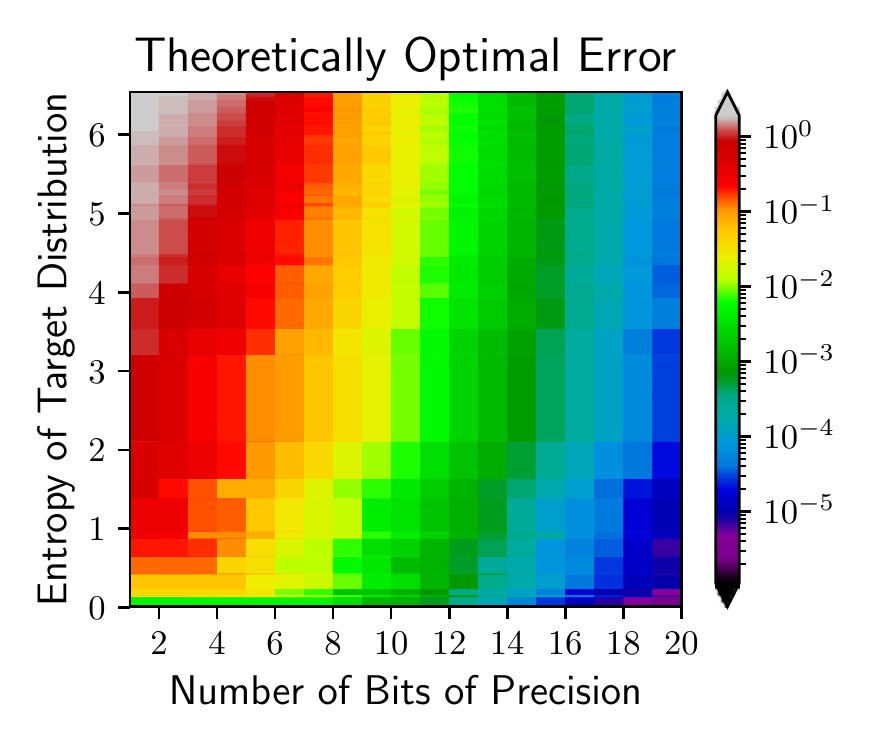}
\caption{Total Variation}
\label{subfig:error-entropy-precision-tv}
\end{subfigure}\hfill%
\begin{subfigure}{.33\linewidth}
\includegraphics[width=\linewidth]{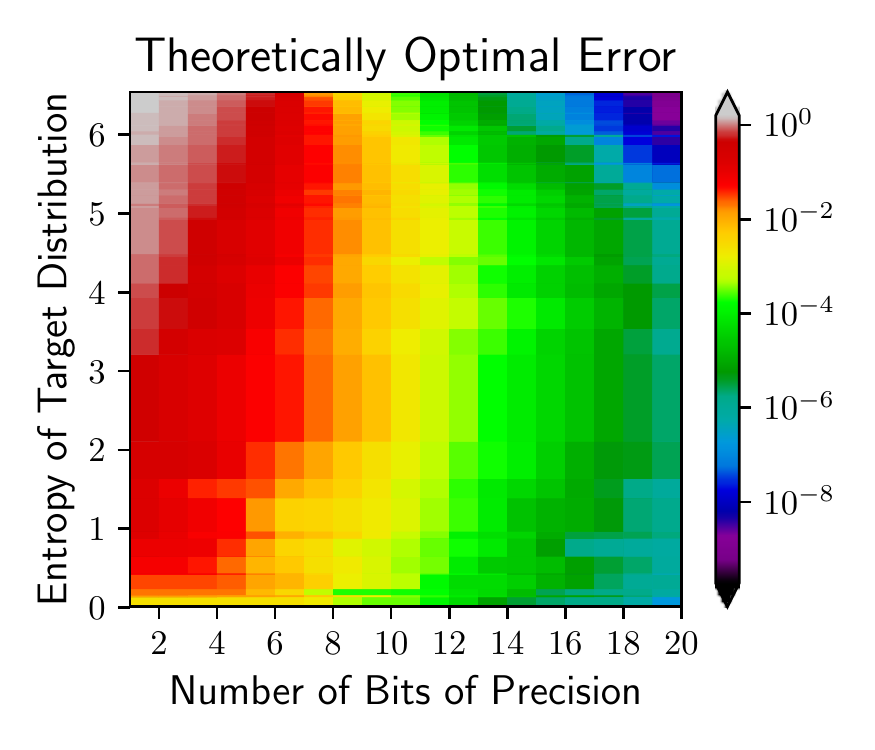}
\caption{Hellinger Divergence}
\label{subfig:error-entropy-precision-hellinger}
\end{subfigure}\hfill%
\begin{subfigure}{.33\linewidth}
\includegraphics[width=\linewidth]{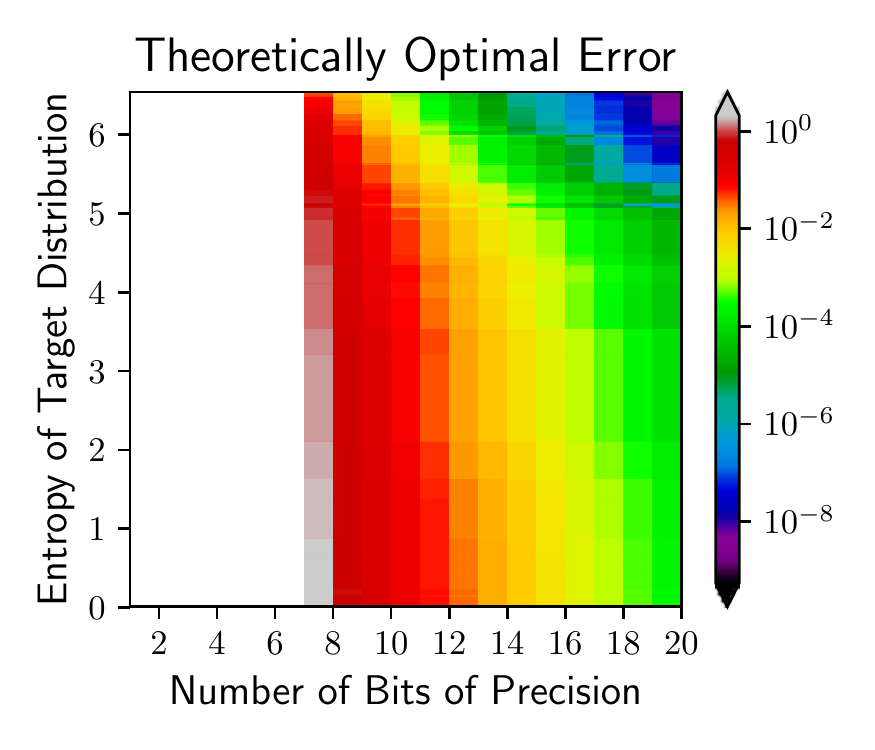}
\caption{Relative Entropy}
\label{subfig:error-entropy-precision-kl}
\end{subfigure}

\captionsetup{skip=5pt}
\caption{%
Characterization of theoretically optimal approximation errors
  according to three $f$-divergences
  (total variation, Hellinger, and relative entropy)
  for target distributions over $n\,{=}\,100$ dimensions.
}
\label{fig:error-entropy-precision}
\vspace{-.175cm}
\end{figure}

In all three plots, for a fixed level of entropy (y-axis), the
  approximation error tends to zero as the precision increases from
  $k=1$ to $k=20$ (x-axis).
However, the relationship between approximation error and
  entropy of the target distribution under each divergence.
For total variation, the approximation error increases as the entropy
  increases at both low-precision values
  (gray area; top-left of
  Figure~\ref{subfig:error-entropy-precision-tv}) and high-precision
  values (purple area; bottom-right of
  Figure~\ref{subfig:error-entropy-precision-tv}).
In contrast, for relative entropy, the approximation error decreases
  as the entropy increases at both
  low-precision values (gray area; bottom-center-left of
  Figure~\ref{subfig:error-entropy-precision-tv}) and high-precision
  values (purple area; top-right of
  Figure~\ref{subfig:error-entropy-precision-tv}).
For the Hellinger divergence, the approximation error
  contains both of these characteristics; more specifically,
  it behaves like the error under total variation at low precision
  (gray area; top-left of Figure~\ref{subfig:error-entropy-precision-hellinger})
  and like the error under relative entropy at high precision (purple area;
  top-right of Figure~\ref{subfig:error-entropy-precision-hellinger}).
More generally, the distributions with highest approximation
  error under the Hellinger divergence
  lie in the center of the entropy values and the distributions
  with the lowest approximations lie at the low and high end of the
  entropy values.

These studies provide systematic guidelines for obtaining theoretically-minimal
  errors of entropy-optimal approximate samplers according to
  various $f$-divergences in applications where
  precision and accuracy are key design considerations.
For example, \citet{jonas2014} empirically measure the effects of bit
  precision (using 4 to 12 bits) on the sampling error
  (measured by the relative entropy) of a 1000-dimensional multinomial
  hardware gate.
In cryptographic applications, a common requirement for various
  security guarantees is to sample from a discrete Gaussian lattice
  with an approximation error (measured by total variation) of
  at most $2^{-90}$~\citep{dwarakanath2014}, and various limited-precision
  samplers aim to operate within these bounds~\citep{follath2014}.


\section{Deferred Proofs}
\label{appx:proofs}

This section proves Theorem~\ref{thm:opt-error-tv} from the main
  text, which is restated below.

\begin{theorem}
\label{thm:opt-error-tv-appx}
%
If $\Delta_g$ is the total variation divergence, then any optimal solution
  $\bM$ returned by Algorithm~\ref{alg:optimization} satisfies
  $\Delta_g(\bp, \bM) \le n/2\nbase$.
\end{theorem}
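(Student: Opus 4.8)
The plan is to reduce the claim to exhibiting one good feasible point. By Theorem~\ref{thm:optimization}, the assignment $\bM$ returned by Algorithm~\ref{alg:optimization} minimizes $\Delta_g(\bp,\cdot)$ over all $\nbase$-type distributions, so it suffices to produce a single $\bM^\star \in \Assignments[n,\nbase]$ whose total variation from $\bp$ is at most $n/(2\nbase)$; the returned optimum can then only be smaller. Throughout I write $a_i \defas \nbase p_i$ and $\chi_i \defas a_i - \floor{a_i} \in [0,1)$, noting that $\sum_{i} a_i = \nbase$ forces $S \defas \sum_{i}\chi_i = \nbase - \sum_i \floor{a_i}$ to be a non-negative integer (as already observed in Theorem~\ref{thm:opt-step-shortfall-runtime}).

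First I would construct $\bM^\star$ by rounding each $a_i$ to one of its two neighboring integers. Initialize $M^\star_i \defas \floor{a_i}$, whose coordinates sum to $\nbase - S$, and then promote exactly $S$ of the non-integer coordinates from $\floor{a_i}$ to $\ceil{a_i} = \floor{a_i}+1$; the result lies in $\Assignments[n,\nbase]$ since each $\floor{a_i}\ge 0$ and the coordinates now sum to $\nbase$. This promotion is possible because $S = \sum_{i \,:\, a_i \notin \Integers}\chi_i$ is a sum of fractional parts each strictly below $1$, so the number of non-integer coordinates strictly exceeds $S$ and at least $S$ coordinates are available to promote.

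The key estimate is per-coordinate: for every $i$ we have $\abs{M^\star_i - a_i} \in \{\chi_i,\, 1-\chi_i\}$, which is strictly less than $1$ (and equals $0$ when $a_i \in \Integers$). Summing over the $n$ coordinates gives $\sum_{i=1}^{n}\abs{M^\star_i - a_i} < n$, whence $\Delta_g(\bp, \bM^\star) = \frac{1}{2\nbase}\sum_{i=1}^{n}\abs{M^\star_i - \nbase p_i} < n/(2\nbase)$. Combining this with the optimality from Theorem~\ref{thm:optimization} yields $\Delta_g(\bp,\bM) \le \Delta_g(\bp,\bM^\star) < n/(2\nbase)$, as required. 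The only step needing care is the feasibility of a neighbor-rounding assignment that sums to exactly $\nbase$, which rests on the counting bound $S < \#\set{i : a_i \notin \Integers}$; everything else is immediate. I would also remark that this bound is not tight: by instead promoting the $S$ coordinates with the \emph{largest} fractional parts and bounding the remaining sum of the $n-S$ smallest parts by $(n-S)S/n \le n/4$, one obtains the sharper $\Delta_g(\bp,\bM) \le n/(4\nbase)$, but the stated inequality already follows from the crude per-coordinate estimate.
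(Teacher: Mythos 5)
Your proof is correct, and it takes a genuinely different route from the paper's. The paper first proves a structural lemma (Theorem~\ref{thm:opt-err-tv-bounds}): \emph{every} minimizer of the total-variation objective over $\Assignments[n,\nbase]$ satisfies $\floor{\nbase p_i} \le M_i \le \floor{\nbase p_i}+1$ for all $i$, established by an exchange argument (if some coordinate deviates from its floor/ceiling, one finds a complementary coordinate to adjust and strictly decreases the $L_1$ error); the bound $\Delta_g(\bp,\bM) \le n/(2\nbase)$ then follows coordinatewise from this characterization. You instead avoid analyzing the minimizer altogether: you exhibit a single feasible neighbor-rounding $\bM^\star$ (floors plus $S$ promotions, with the counting bound $S < \#\set{i : \nbase p_i \notin \Integers}$ guaranteeing that enough non-integer coordinates exist to promote), bound its error by $n/(2\nbase)$, and invoke Theorem~\ref{thm:optimization} to transfer the bound to the returned optimum. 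Both arguments ultimately rest on Theorem~\ref{thm:optimization} (to know the returned $\bM$ is a minimizer), but yours replaces the exchange argument with a direct construction, which is shorter and more elementary; what the paper's route buys in exchange is the stronger structural fact that the optimum itself is a neighbor-rounding of $(\nbase p_1,\dots,\nbase p_n)$, which is of independent interest beyond the error bound. Your closing remark is also sound: writing $\chi_i$ for the fractional parts, the two halves of the rounding error coincide, since $\sum_{i \text{ promoted}}(1-\chi_i) = S - \sum_{i \text{ promoted}}\chi_i = \sum_{i \text{ unpromoted}}\chi_i$, so promoting the $S$ largest fractional parts gives total $L_1$ error at most $2S(n-S)/n \le n/2$ and hence the sharper bound $n/(4\nbase)$, which the paper does not state.
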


We begin by first establishing the following result.

\begin{theorem}
\label{thm:opt-err-tv-bounds}
Let $\bp \defas (p_1, \dots, p_n)$ be a probability distribution,
  $\nbase > 0$ an integer,
  and $\Delta_g$ be total variation divergence.
Any assignment $\bM \in \Assignments[n,\nbase]$ that
  minimizes $\Delta_g(\bp, \bM)$ satisfies:
  \begin{align}
  \floor{\nbase{p_i}} \le M_i \le \floor{\nbase{p_i}} + 1 && (i=1,\dots,n).
  \label{eq:saccharinic}
  \end{align}
\end{theorem}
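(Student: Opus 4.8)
The plan is to reduce total variation to a separable integer program and then forbid the out-of-range values of each $M_i$ by a local exchange argument. Since the total variation divergence is $\Delta_g(\bp,\bM) = \frac{1}{2\nbase}\sum_{i=1}^{n}\abs{M_i - \nbase p_i}$, minimizing it over $\Assignments[n,\nbase]$ is equivalent to minimizing $\sum_{i=1}^{n}\abs{M_i - \nbase p_i}$ subject to $\sum_{i} M_i = \nbase$ with each $M_i$ a nonnegative integer. Writing $f_i \defas \floor{\nbase p_i}$ and $\chi_i \defas \nbase p_i - f_i \in [0,1)$, I would first record two elementary facts. First, since $\sum_i \nbase p_i = \nbase$ and $\sum_i f_i$ are both integers, $\sum_i \chi_i = \nbase - \sum_i f_i$ is a nonnegative integer. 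Second, for fixed $i$ the map $M \mapsto \abs{M - \nbase p_i}$ on the integers is convex and piecewise linear: each unit step below $f_i$ decreases it by $1$, each unit step above $f_i+1$ increases it by $1$, and the single step $f_i \to f_i+1$ changes it by $1 - 2\chi_i$. This is the total-variation specialization of the convexity chain \eqref{eq:Arawak}.

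For the upper bound, I would suppose toward a contradiction that a minimizer $\bM$ has $M_j \ge f_j + 2$ for some $j$, and decrement $M_j$ while incrementing a suitable $M_k$ ($k \ne j$) to preserve the sum. The decrement of $M_j$ lowers the objective by exactly $1$, so it suffices to find $k$ whose increment costs strictly less than $1$. An increment costs $-1$ when $M_k \le f_k - 1$, and costs $1 - 2\chi_k < 1$ when $M_k = f_k$ with $\chi_k > 0$; in either case the pairing strictly lowers the objective. To see such a $k$ exists, I would assume it does not, so that every $k \ne j$ satisfies $M_k \ge f_k + 1$ or else $M_k = f_k$ with $\chi_k = 0$. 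Letting $m$ be the number of indices $k \ne j$ with $M_k \ge f_k+1$, the identity $\sum_i (M_i - f_i) = \sum_i \chi_i$ gives the incompatible bounds $2 + m \le \sum_i (M_i - f_i)$ and $\sum_i \chi_i < 1 + m$, a contradiction.

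The lower bound $M_j \ge f_j$ is symmetric. Assuming $M_j \le f_j - 1$, incrementing $M_j$ saves exactly $1$, and I would pair it with a decrement of some $M_k$ with $M_k \ge f_k + 1$: such a decrement costs at most $2\chi_k - 1 < 1$, yielding a strict net decrease. Existence of such a $k$ is immediate, since if $M_k \le f_k$ for every $k \ne j$ then $\sum_i (M_i - f_i) \le -1$, contradicting $\sum_i (M_i - f_i) = \sum_i \chi_i \ge 0$.

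The step I expect to require the most care is not the arithmetic of the marginal costs but verifying that a compensating index always exists and produces a \emph{strict} improvement; this is precisely where the integrality of $\sum_i \chi_i$ and the boundary cases (when $\chi_k = 0$, or when a candidate move only breaks even) must be handled, so that a violating assignment is provably non-optimal rather than merely tied. Once the bounds of Theorem~\ref{thm:opt-err-tv-bounds} are in place, each $\abs{M_i - \nbase p_i} \le 1$, and summing over $i$ immediately yields the quantitative estimate $\Delta_g(\bp,\bM) \le n/(2\nbase)$ of Theorem~\ref{thm:opt-error-tv-appx}.
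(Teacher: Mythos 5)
Your proposal is correct and follows essentially the same route as the paper's proof: a local exchange argument that decrements an over-assigned index and increments a compensating one, with the compensating index shown to exist by summing $M_i - \floor{\nbase p_i}$ against the budget constraint $\sum_i M_i = \nbase$ (your integrality bookkeeping with $\sum_i \chi_i$ is just a repackaging of the paper's direct summation contradiction), and with the same case analysis giving a cost of exactly $-1$ on one side and strictly less than $1$ on the other. The only differences are notational, so there is nothing substantive to add.
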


\begin{proof}
Write $\chi(w) \defas w - \floor{w}$ to denote the fractional part of a
  real number $w$.
From the correspondence of the total variation to the $L_1$ distance,
  the objective function may be rewritten as
  \begin{align}
  \Delta_g(\bp, \bM)
    = \frac{1}{2}\sum_{i=1}^{n}\left\lvert M_i/\nbase - p_i\right\rvert.
    \label{eq:echinococcus}
  \end{align}

Optimizing $\Delta_g(\bp, \cdot)$
  is equivalent to optimizing $\Delta'_g(\bp, \cdot)$, defined by
  \begin{align}
    \Delta'_g(\bp, \bM) \defas
    2 \nbase \Delta_g(\bp, \bM)
    = \sum_{i=1}^{n}\left\lvert M_i - \nbase{p_i} \right\rvert.
  \end{align}
Let $\bM$ be any assignment that minimizes $\Delta'_g(\bp, \cdot)$.
We will show the upper bound and lower bound in \eqref{eq:saccharinic}
  separately.

\paragraph{(Upper bound)}
  Assume toward a contradiction that there is some $t \in [n]$ such that
  $M_t = \floor{\nbase{p_t}} + c$ for some integer $c > 1$.

We first claim that there must be some
  $j \ne t$ such that $M_j < \nbase{p_j}$.
Assume not.
Then $\nbase{p_i} \le M_i$ for all $i \in [n]$, which gives
  \begin{align}
    \sum_{i=1}^{n}M_i
      \ge \sum_{\substack{i=1 \\ i\ne t}}^{n}\nbase{p_i}
        + \floor{\nbase{p_t}} + c
      &= \sum_{\substack{i=1 \\ i\ne t}}^{n}\nbase{p_i}
        + \floor{\nbase{p_t}} + c \\
      &= \sum_{\substack{i=1 \\ i\ne t}}^{n}\nbase{p_i}
        + \nbase{p_t} - \chi(\nbase{p_it}) + c \\
      &= \sum_{i=1}^{n}\nbase{p_t}
        + (c-\chi(\nbase{p_t}))
      = \nbase + (c - \chi(\nbase{p_t}))
      > \nbase,
    \label{eq:opelet}
  \end{align}
  where the final inequality follows from $c > 1 > \chi(\nbase{p_t})$.
  But \eqref{eq:opelet} contradicts $\bM \in \Assignments[n, \nbase]$.

Consider the assignment
  $\bW \defas (W_1, \dots, W_n) \in \Assignments[n, \nbase]$
  defined by
  \begin{align}
    W_i \defas
    \begin{cases}
      M_i - 1 & \mbox{if~} i = t,\\
      M_i + 1 & \mbox{if~} i = j,\\
      M_i & \mbox{otherwise}.
    \end{cases} && (i=1,\dots,n)
  \end{align}
We will establish that $\Delta'_g(\bp, \bW) < \Delta'_g(\bp, \bM)$,
  contradicting the optimality of $\bM$.
From cancellation of like-terms, we have
  \begin{align}
    \Delta'_g(\bp, \bW) - \Delta'_g(\bp, \bM)
      = \left[ \abs{W_t - \nbase{p_t}} - \abs{M_t - \nbase{p_t}} \right]
      + \left[ \abs{W_j - \nbase{p_j}} - \abs{M_j - \nbase{p_j}} \right].
  \label{eq:melange}
  \end{align}

For the first term in the right-hand side of~\eqref{eq:melange}, we have
  \begin{align}
    \abs{W_t - \nbase{p_t}} - \abs{M_t - \nbase{p_t}}
      = (M_t - 1 - \nbase{p_t}) - (M_t - \nbase) = -1,
      \label{eq:Mozarabic}
  \end{align}
  where the first equality uses the fact that $c \ge 2$, so that
  \begin{align}
    W_t = M_t - 1
      = \floor{\nbase{p_t}} + c - 1 \ge \floor{\nbase{p_t}} + 1 > \nbase{p_t}.
  \end{align}

We now consider the second term of \eqref{eq:melange},
  and proceed by cases.
\begin{enumerate}[label={Case \arabic*:}, wide]
  \item $M_j < \floor{\nbase{p_j}}$. Then clearly
    \begin{align}
      \abs{W_j - \nbase{p_j}} - \abs{M_j - \nbase{p_j}}
        = (\nbase{p_j} - (M_j + 1)) - (\nbase{p_j} - M_j)
        = -1.
        \label{eq:inconsistent}
    \end{align}

  \item $M_j = \floor{\nbase{p_j}}$. Since $M_j < \nbase{p_j}$,
    we have $\floor{\nbase{p_j}} < \nbase{p_j}$ and
      $0 < \chi(\nbase{p_j}) < 1$, which gives
    \begin{align}
    \abs{W_j - \nbase{p_j}} - \abs{M_j - \nbase{p_j}}
      &= (M_j + 1 - \nbase{p_j}) - (\nbase{p_j} - M_j)\\
      &= 1 - 2(\nbase{p_j} - M_j) \\
      &= 1 - 2\chi(\nbase{p_j}) \\
      &< 1. \label{eq:pathogenicity}
    \end{align}
\end{enumerate}
Combining~\eqref{eq:inconsistent} and~\eqref{eq:pathogenicity}
  from these two cases gives the upper bound
  \begin{align}
  \abs{W_j - \nbase{p_j}} - \abs{M_j - \nbase{p_j}} < 1.
  \label{eq:carpetmonger}
  \end{align}

Using~\eqref{eq:inconsistent} and~\eqref{eq:carpetmonger}
  in~\eqref{eq:melange}, we obtain
  \begin{align}
  \Delta'_g(\bp, \bW) - \Delta'_g(\bp, \bM)
      &= \left[ \abs{W_t - \nbase{p_t}} - \abs{M_t - \nbase{p_t}} \right]
      + \left[ \abs{W_j - \nbase{p_j}} - \abs{M_j - \nbase{p_j}} \right] \\
      &< -1 + 1 = 0,
  \end{align}
  establishing a contradiction to the optimality of $\bM$.

\paragraph{(Lower Bound)} Assume toward a contradiction that there
  exists $t \in [n]$ such that $M_t < \floor{\nbase{p_t}}$.

We first claim that there must exist
  $j \ne t$ such that $\nbase{p_j} < M_j$.
Assume not.
Then $M_i \le \nbase{p_i}$ for all $i \in [n]$, which gives
  \begin{align}
    \sum_{i=1}^{n}M_i
      < \sum_{\substack{i=1 \\ i\ne t}}^{n}\nbase{p_i}
        + \floor{\nbase{p_t}}
      \le \sum_{\substack{i=1 \\ i\ne t}}^{n}\nbase{p_i}
        + \nbase{p_t}
      = \nbase,
  \end{align}
  which again contradicts $\bM \in \Assignments[n, \nbase]$.

The remainder of the proof is symmetric to that of the upper bound,
  where the assignment $\bW \in \Assignments[n, \nbase]$ defined by
  \begin{align}
    W_i \defas
    \begin{cases}
      M_i + 1 & \mbox{if~} i = t,\\
      M_i - 1 & \mbox{if~} i = j,\\
      M_i & \mbox{otherwise}
    \end{cases} && (i=1,\dots,n)
  \end{align}
  can be shown to satisfy $\Delta'_g(\bp, \bW) < \Delta'_g(\bp, \bM)$,
  contradicting the optimality of $\bM$.
\end{proof}

\begin{proof}[Proof of Theorem~\ref{thm:opt-error-tv-appx}]
From Theorem~\ref{thm:opt-err-tv-bounds}, we have
\begin{align}
  \floor{\nbase{p_i}} \le M_i \le \floor{\nbase{p_i}} + 1
  \implies \abs{M_i - \nbase{p_i}} \le 1
  \implies \abs{M_i/\nbase - p_i} \le 1/\nbase && (i=1,\dots,n),
  \label{eq:redan}
  \end{align}
which along with \eqref{eq:echinococcus} yields
  $\Delta_g(\bp, \bM) \le n/2\nbase$.
\end{proof}

\end{appendices}

\end{document}